\newtheorem{lemma}{Lemma}
\newtheorem{proposition}{Proposition}
\newtheorem{theorem}{Theorem}
\newtheorem{corollary}{Corollary}
\newtheorem{remark}{Remark}
\newtheorem{example}{Example}
\newcommand*{\vertbar}{\rule[-1ex]{0.5pt}{2.5ex}}
\DeclareMathOperator*{\argmin}{\arg\!\min\enskip}
\DeclareMathOperator{\Tr}{tr}
\DeclareMathOperator{\vct}{vec}
\DeclareMathOperator{\grad}{\nabla}
\DeclareMathOperator{\vdiv}{\nabla\cdot}
\DeclareMathOperator{\Lap}{\Delta}
\DeclareMathOperator{\td}{\mathrm{d}}
\DeclareMathOperator{\ddt}{\frac{\td}{\td t}}
\DeclareMathOperator{\Id}{Id}
\newcommand{\mat}[1]{{\boldsymbol{#1}}}
\newcommand{\vect}[1]{{\boldsymbol{#1}}}
\newcommand{\mcal}[1]{{\mathcal{#1}}}
\newcommand{\R}{\mathbb{R}}
\newcommand{\Wei}{\mbox{\textit{Wi}}} 
\newcommand{\Rey}{\mbox{\textit{Re}}} 
\newcommand{\aligntridown}{\raise.4ex\hbox{$\bigtriangledown$}}
\newcommand{\revtwo}[1]{\textcolor{black}{#1}}
\newenvironment{revtwobox}{\color{black}}{}
\begin{document}

{\vskip -1cm}

\title{Machine Learning in Viscoelastic Fluids via Energy-Based Kernel Embedding
}
\author{
{Samuel E. Otto \orcidlink{0000-0002-5086-8057}}\thanks{\,Present address: Sibley School of Mechanical and Aerospace Engineering, Cornell University, Ithaca, NY, USA, e-mail: s.otto@cornell.edu}\\
{\normalsize AI Institute in Dynamic Systems,}\\
{\normalsize University of Washington, Seattle, WA, USA}\\ [2ex]
{Cassio M. Oishi \orcidlink{0000-0002-0904-6561}}\thanks{\,e-mail: cassio.oishi@unesp.br} \\
{\normalsize Departamento de Matem\'atica e Computa\c c\~ao, Faculdade de Ci\^encias e Tecnologia, }\\
{\normalsize São Paulo State University, Presidente Prudente, Brazil} \\ [2ex]
{Fabio Amaral  \orcidlink{0000-0001-6945-8376}}\thanks{\,e-mail: fabio.amaral@unesp.br} \\
{\normalsize Departamento de Matem\'atica e Computa\c c\~ao, Faculdade de Ci\^encias e Tecnologia, }\\
{\normalsize São Paulo State University, Presidente Prudente, Brazil} \\ [2ex]
{Steven L. Brunton \orcidlink{0000-0002-6565-5118}}\thanks{\,e-mail: sbrunton@uw.edu}\\
{\normalsize Department of Mechanical Engineering,}\\
{\normalsize University of Washington, Seattle, WA, USA}\\ [2ex]
{J. Nathan Kutz \orcidlink{0000-0002-6004-2275}}\thanks{\,e-mail: kutz@uw.edu}\\
{\normalsize Department of Applied Mathematics,}\\
{\normalsize University of Washington, Seattle, WA, USA}\\[2ex]
}

\date{}

\maketitle

\begin{abstract}

The ability to measure differences in collected data is of fundamental importance for quantitative science and machine learning, motivating the establishment of metrics grounded in physical principles. 
In this study, we focus on the development of such metrics for viscoelastic fluid flows governed by a large class of linear and nonlinear stress models. 
\revtwo{To do this, we introduce {\em energy-compatible} families of kernel functions corresponding to a given viscoelastic stress model.
Each kernel implicitly embeds flowfield snapshots into a {\em Reproducing Kernel Hilbert Space} (RKHS) in which distances and angles are computed and whose squared norm equals the total mechanical energy.}
Additionally, we present a solution to the preimage problem for \revtwo{these} kernels, enabling accurate reconstruction of flowfields from their RKHS representations.
Through numerical experiments on an unsteady viscoelastic lid-driven cavity flow, we demonstrate the utility of \revtwo{energy-compatible} kernels for extracting energetically-dominant coherent structures in viscoelastic flows across a range of Reynolds and Weissenberg numbers.
Specifically, the features extracted by Kernel Principal Component Analysis (KPCA) of flowfield snapshots using \revtwo{energy-compatible} kernel functions yield reconstructions with superior accuracy in terms of mechanical energy compared to conventional methods such as ordinary Principal Component Analysis (PCA) with na\"{i}vely-defined state vectors or KPCA with ad-hoc choices of kernel functions.
Our findings underscore the importance of principled choices of metrics in both scientific and machine learning investigations of complex fluid systems.
\end{abstract}

\textit{Keywords: viscoelastic flow, energy-based inner product, kernel method, machine learning, reproducing kernel Hilbert space, principal component analysis
}

\section{Introduction}

A basic component of quantitative science is the ability to measure differences in collected data.
Quantities such as mass, momentum, and energy provide unified descriptions of physical systems, and are therefore desirable quantities to measure when investigating the behavior of a system of interest.
Likewise, in machine learning it is often necessary to endow the space in which data lies with geometric notions such as distance and angle, or otherwise to embed the data in a space with these notions.
The assumed geometry, including for instance, the way measurements are normalized, can drastically affect the outcome of learning, potentially highlighting spurious features while ignoring important ones that appear insignificant due to a poor choice of metric.
Therefore, in machine learning for physics applications it is important to ground the ways we compare data in principled physical notions such as energy.

One of the most important tasks in machine learning involves extracting low-dimensional features (variables) that describe a system or allow one to predict quantities of interest.
Since its first applications to fluid dynamics in \cite{Lumey}, the Proper Orthogonal Decomposition (POD), also known as principal component analysis (PCA) or the Karhunen-Loéve (KL) expansion, has been widely used for encoding high dimensional data in a low dimensional representation (\cite{Aubry,Sirovich_1991}).
Other modal analysis techniques such as those reviewed in \cite{Taira} have emerged as powerful tools to reduce the dimension of complex flow by splitting them into simpler components or ``modes''. 

Many of these modal analysis techniques, including POD, rely on the choice of an inner product defined on the flow's state space.
The fact that significantly different results can be obtained using different inner products motivates the introduction of principled choices based on physics.
For incompressible Newtonian fluid flows the integrated dot product of velocity fields is a natural inner product whose resulting (squared) norm is the flow's kinetic energy.
Applying PCA with this inner product extracts the orthogonal modes and mode coefficients that are optimal for reconstructing flowfields in an energetic sense.
An inner product with analogous properties for isothermal compressible flows was introduced by Rowley et al. \cite{Rowley2004model} and used to construct POD-Galerkin reduced-order models.
Moreover, this work showed that compatibility of the inner product with a conserved or dissipated energy function guarantees that the stability of an equilibrium is preserved by Galerkin projection.
Energy and dimensionally-consistent inner products for more general classes of compressible flows were later investigated in \cite{PARISH2023112387}.
Following this same line of work, energy-consistent inner products have been introduced for magnetohydrodynamics in \cite{Kaptanoglu2021pre} and for rotating shallow-water equations in \cite{Sockwell2019mass}.

In the new era of machine learning applications in fluid mechanics \cite{Brunton2020arfm}, there is growing interest in nonlinear dimensionality reduction techniques rooted in manifold learning and artificial neural networks \cite{Mendez2022LinearAN,DUBOIS2022110733,Farzamnik2023from}. 
These methodologies allow for greater dimensionality reduction, especially in advection-dominated flow problems, by capturing curved manifolds in the state space that are poorly approximated by low-dimensional subspaces, i.e., by superpositions of modes \cite{Ohlberger2016reduced}.
Complementary to learned manifolds, dimensionality reduction methods condense states along ``fibers'' determined by the choice of reduced modeling variables.
By allowing these variables to be nonlinear functions of the state, nonlinear dimensionality reduction methods also allow for nonlinear fibers that appropriately group states with similar dynamical behaviors \cite{Otto2023nonlinear}.

Kernel Principal Component Analysis (KPCA) was introduced by Schölkopf et al. \cite{Scholkopf1998nonlinear} and has become one of the most widely used methods for nonlinear dimensionality reduction. The key insight is that PCA --- a linear dimensionality reduction method --- can be applied through the use of a kernel function in a high or infinite-dimensional Reproducing Kernel Hilbert Space (RKHS) into which states have been lifted as illustrated in Fig.~\ref{fig:energy_compatible_lifting}.
Recent works, such as \cite{SALVADOR20211, Csala2022comparing, Otto2022model}, have leveraged the enhanced capabilities of kernel-based nonlinear dimensionality reduction for modeling complex fluid flows. 
However, to the best of our knowledge, all formulations presented to date have only been used in the context of Newtonian fluid flows. 
Moreover, with a wide range of kernel functions to choose from and the ability to combine kernel functions to produce new ones, it is a matter of great practical importance to narrow this selection based on principled physical considerations.

Viscoelastic fluids play a pervasive role across various industrial sectors, encompassing consumer goods, food, healthcare, and more.
These applications motivate the use of machine learning models and nonlinear dimensionality reduction to shed light on the complex interplay of inertial and elastic effects that give rise to unique unsteady nonlinear dynamics in viscoelastic fluid flows (see \cite{Alves2021numerical}).
This is especially pertinent in the context of Elastic Turbulence (ET) (\cite{Groisman,PhysRevLett.110.174502}) and Elasto-Inertial Turbulence (EIT) (\cite{doi:10.1073/pnas.1219666110,doi:10.1146/annurev-fluid-032822-025933}), two phenomena posing distinctive challenges in the realm of viscoelastic fluid dynamics. 
Roughly speaking, defining the elasticity parameter as $E=\frac{Wi}{Re}$, where $Re$ represents the Reynolds number and $Wi$ is the Weissenberg number, ET manifests itself in the turbulent flow of viscoelastic fluids when $E$ is large, i.e., for inertialess or creeping flows. 
On the other hand, EIT characterizes the turbulent behavior of highly viscoelastic fluids influenced by inertial forces, particularly in inertia-dominated flows at moderate or high Reynolds numbers. 
Both ET and EIT, along with the identification of regime transitions (\cite{LI_XI_GRAHAM_2006,10.1063/1.4895780,10.1122/1.4798549,THOMAS_KHOMAMI_SURESHKUMAR_2009}), present challenges and opportunities to enhance our understanding of these phenomena through machine learning models based on data gathered from computational fluid dynamics simulations.
More broadly, machine learning models have the potential to enhance our ability to optimize and control non-Newtonian fluid systems arising in a wide range of engineering applications.

\begin{figure}
    \centering
    \begin{tikzonimage}[width=0.7\textwidth]{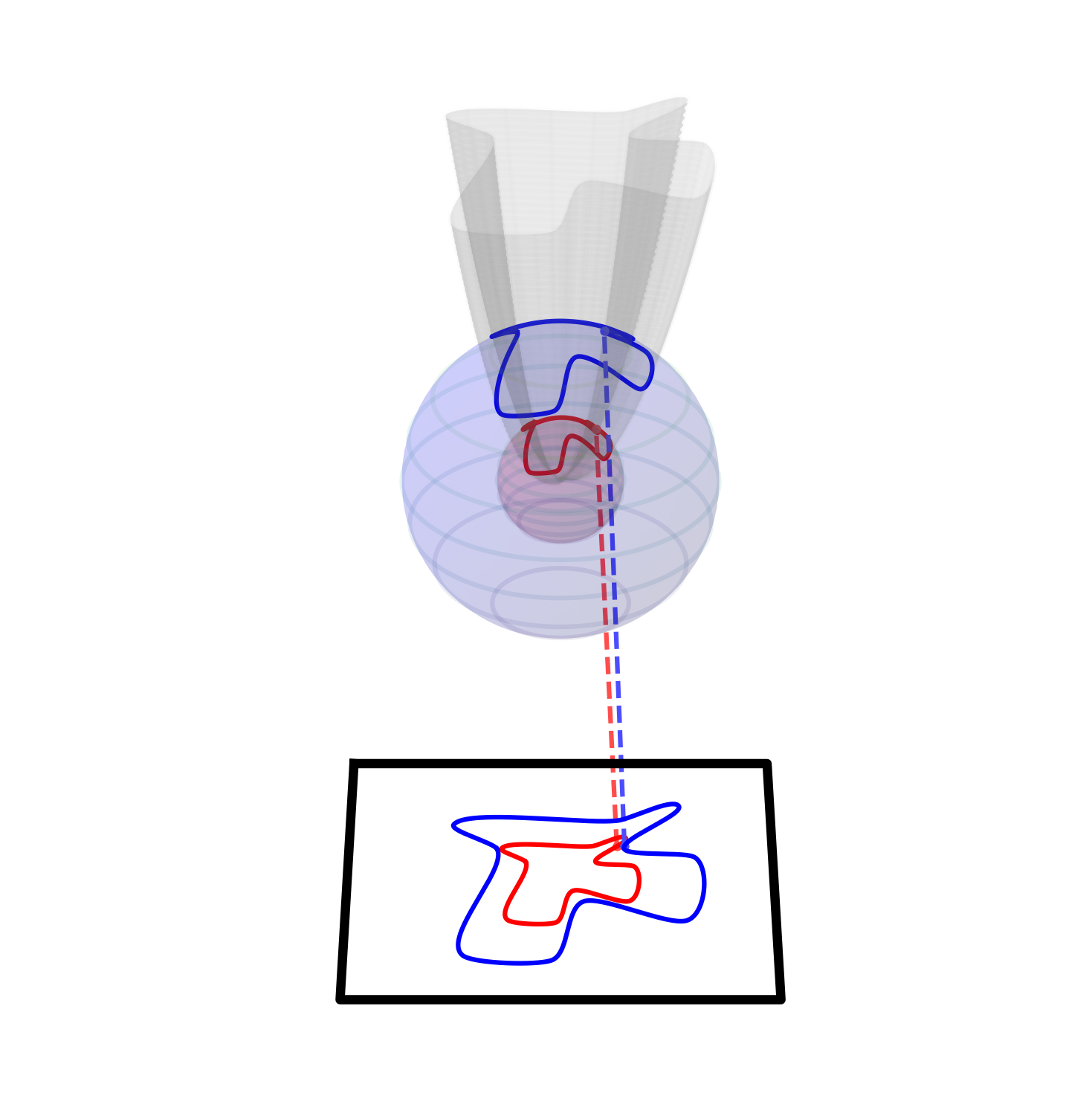}
        \node at (0.36,0.25) {\large $\mcal{F}$};
        \node at (0.36,0.8) {\large $\Phi(\mcal{F})$};
        \draw[->] (0.35, 0.28) arc[radius=0.58, start angle=180+25, end angle=180-25];
        \node at (0.825,0.8) {\large RKHS, $\mcal{H}$};
        \node[anchor=west] at (0.78,0.2) {$\mcal{E}(\vect{q}) = E_2$};
        \node[anchor=west] at (0.655,0.625) {$\| \phi \|_{\mcal{H}}^2 = E_2$};
        \draw[-] (0.645, 0.195) -- (0.78, 0.2);
        \node[anchor=west] at (0.78,0.1) {$\mcal{E}(\vect{q}) = E_1$};
        \node[anchor=west] at (0.575,0.55) {$\| \phi \|_{\mcal{H}}^2 = E_1$};
        \draw[-] (0.587, 0.195) -- (0.78, 0.1);
    \end{tikzonimage}
    \caption{Lifting states into a Reproducing Kernel Hilbert Space (RKHS) with squared norm equal to total mechanical energy. We illustrate how constant-energy surfaces in the state space with energies $E_1$ and $E_2$ are lifted onto spheres centered about the origin in the RKHS with squared radii $E_1$ and $E_2$. Improperly chosen metrics in the state space can cause states with different mechanical energies to appear similar. This problem is rectified by measuring distances between lifted states in the RKHS.}
    \label{fig:energy_compatible_lifting}
\end{figure}

Following the line of work initiated by Rowley et al. \cite{Rowley2004model} and Schölkopf et al. \cite{Scholkopf1998nonlinear}, in this paper we show that the mechanical energy of viscoelastic fluid flows can be used to formulate kernel functions giving rise to metrics for distances and angles between states that are well-suited to physics-informed machine learning applications.
\revtwo{Specifically, we show that there is a family of kernel functions adapted to a given viscoelastic stress model where each kernel implicitly embeds the flow’s state in a Reproducing Kernel Hilbert Space (RKHS) with squared norm equal to the mechanical energy.}
The induced distance turns the set of states with finite energy into a complete, separable metric space.
\revtwo{These kernel functions can then be used in kernel-based machine learning algorithms (see \cite{Hofmann2008kernel}) exploiting the implicit geometry of the RKHS for classification, regression, and dimensionality reduction.
We note that while there are seemingly natural choices for energy-compatible kernel functions adapted to each viscoelastic stress model, such kernels are not unique. Hence, our contribution does not provide a single standardized choice of kernel, but rather an entire family of energy-compatible kernels to choose from. To illustrate the utility of such kernels in applications without enmeshing ourselves in endless comparisons, we choose a single representative kernel adapted to each stress model.}

Beyond providing principled choices of energy-compatible kernel functions for viscoelastic flows, we also provide a solution of the preimage problem (see \cite{Kwok2004preimage, Honeine2011preimage}) for these kernels.
That is, the problem of reconstructing the flowfield from its lifted representation in the RKHS, or its truncated representation in terms of kernel principal components.
We prove that the velocity and matrix square root of the conformation tensor field can always be reconstructed from the RKHS representation of a state via a bounded linear operator.
As a consequence, we show that these fields can be linearly reconstructed from truncated kernel principal components with guaranteed accuracy depending on the truncated singular values.
Other fields can also be linearly reconstructed depending on the viscoelastic stress model, but the velocity and square root of the conformation tensor are sufficient to explicitly reconstruct the entire flowfield.

To illustrate the utility of energy-compatible kernel functions and the importance of making principled choices for measuring distances between states, we perform KPCA-based dimensionality reduction and reconstruction of an unsteady viscoelastic lid-driven cavity flow.
The flow is simulated at both low and moderate Reynolds numbers and at both low and high Weissenberg numbers using different viscoelastic stress models.
The selection of the specific pairs of Reynolds and Weissenberg numbers is grounded in a careful consideration of previous works that have investigated the complexities of viscoelastic flows under inertial effects (\cite{RICHTER_IACCARINO_SHAQFEH_2010,Patel_Rothstein_Modarres-Sadeghi_2023,Hamid_Sasmal_2023}) as well as investigations specifically focused on problems related to creeping flows (\cite{dzanic_from_sauret_2022,Song_Liu_Lu_Khomami_2022})
While ordinary PCA is optimal for reconstructing the flowfields with respect to the Frobenius norm, we show that this can lead to poor reconstruction of the flow's mechanical energy.
In contrast, by choosing a kernel function adapted to the viscoelastic stress model, Reynolds number, and Weissenberg number of the simulation, kernel PCA is able to extract low-dimensional coordinates that faithfully reconstruct the flowfield with superior accuracy in an energetic sense.

\section{Viscoelastic flow models and energy functions}

We consider nondimensional equations for viscoelastic flow \cite{Alves2021numerical} on a set $\Omega \subset \R^d$ whose state $\vect{q} = (\vect{u}, \mat{c})$ consists of the velocity field $\vect{u}:\Omega \to \R^d$ and the conformation tensor field $\mat{c}:\Omega \to \mathbb{S}_+^{d} \subset \R^{d\times d}$.
The conformation tensor field takes values in the space $\mathbb{S}_+^{d}$ of positive semidefinite real symmetric matrices. 
We use $\mathbb{S}^{d}$ to denote the real symmetric matrices and we use $\mathbb{S}_{++}^{d}$ to denote the real symmetric positive definite matrices.
The time evolution of the state $\vect{q}(t)$ is governed by the momentum equation
\begin{equation}
    \ddt \vect{u} + \vect{u} \cdot \grad \vect{u} + \grad p 
    = \frac{\beta}{\Rey} \Lap \vect{u} + \frac{1}{\Rey} \vdiv \pmb{\tau} + \vect{f},
    \label{mom}
\end{equation}
subject to the incompressibility constraint
\begin{equation}
    \vdiv \vect{u} = 0,
    \label{mass}
    \end{equation}
and the conformation tensor transport equation
\begin{equation}
    \overset{\bigtriangledown}{\mat{c}} = \mat{s}(\mat{c}),
    \label{cons}
\end{equation}
where $\beta$ is the viscosity ratio satisfying $0 < \beta < 1$. The 
upper convected derivative \cite{doi:10.1137/20M1364990}, adopted in Eq.~\eqref{cons} is defined as
\begin{equation}\label{eq2.7}
    \overset{\bigtriangledown}{\mat{c}} = \ddt \mat{c} + \vect{u} \cdot \grad \mat{c} - \mat{c}\grad \vect{u} - (\grad \vect{u})^T \mat{c}
\end{equation}

These equations are coupled through a stress model
\begin{equation}
    \pmb{\tau} = - (1-\beta) \mat{s}(\mat{c}),
\end{equation}
determined by a function $\mat{s}:D(\mat{s})\subset \mathbb{S}_+^{d} \to \mathbb{S}^{d}$, \revtwo{where $D(\mat{s})$ is the domain of $\mat{s}$}. There are a wide variety of available stress models taking the general form
\begin{equation}
    \mat{s}(\mat{c}) = -\frac{h_0(\mat{c})}{\Wei} \left[ h_1(\mat{c}) \mat{c} - h_2(\mat{c}) \mat{I} + \kappa(\mat{c} - \mat{I})^2 \right],
    \label{eqn:general_form_for_stress_models}
\end{equation}
where $\kappa$ is a constant and $h_0$, $h_1$, and $h_2$ are scalar-valued functions of the conformation tensor.
Some examples are given in table~\ref{tab:stress_models}.
Using the stress model, the total mechanical energy of the system is given by
\begin{equation}
\revtwo{
    \mcal{E}(\vect{q}) = \frac{1}{2} \int_{\Omega} \left[ \big\vert \vect{u}(\vect{x}) \big\vert^2 + \frac{1}{\Rey} \Tr(\pmb{\tau}) + \frac{(1-\beta) d}{\Rey \Wei} \right] \td \vect{x}.}
    \label{eqn:energy}
\end{equation}
Here, we are adopting the energy defined in \cite{Balci2011symmetric} with the addition of a model-dependent constant $c \geq 0$ selected to ensure that the energy is always non-negative. More details about the energy estimation for viscoelastic fluids can be found in \cite{LOZINSKI2003161}.
For the models in table~\ref{tab:stress_models}, this constant may be taken to be equal to the dimension $d$ of the flow domain.
To simplify the notation, we define
\[ \theta := \frac{1-\beta}{\Rey \Wei}. \]

\begin{table}
    \begin{center}
    \def~{\hphantom{0}}
    \begin{tabular}{l|cccc|c}
        Model  & $h_0(\mat{c})$ & $h_1(\mat{c})$ & $h_2(\mat{c})$ & $\kappa$ & constraints on $D(\mat{s})$  \\
        \hline
        Oldroyd-B & $1$ & $1$ & $1$ & $0$ & None  \\
        Giesekus & $1$ & $1$ & $1$ & $\alpha$ & None  \\ 
        FENE-CR & $1$ & $\frac{L^2}{L^2 - \Tr(\mat{c})}$ & $\frac{L^2}{L^2 - \Tr(\mat{c})}$ & $0$ & $\Tr(\mat{c}) < L^2$  \\
        FENE-P & $1$ & $\frac{L^2}{L^2 - \Tr(\mat{c})}$ & $1$ & $0$ & $\Tr(\mat{c}) < L^2$  \\
        Linear PTT & $1 + \varepsilon\Tr(\mat{c}-\mat{I})$ & $1$ & $1$ & $0$ & None  \\
        Nonlinear PTT & $\exp{\left[\varepsilon\Tr(\mat{c}-\mat{I})\right]}$ & $1$ & $1$ & $0$ & None 
    \end{tabular}
    \caption{Some common stress models for viscoelastic flow (see \cite{Alves2021numerical}) in the general form of Eq.~\eqref{eqn:general_form_for_stress_models}. With the exception of the linear Oldroyd-B model, the nonlinear models introduce an additional nondimensional parameter, each defined accordingly: $\alpha$ represents the mobility factor for the Giesekus model, whereas $L^2$ and $\varepsilon$ correspond to the extensibility parameters for the finitely extensible nonlinear elastic with Peterlin closure (FENE) and the Phan-Thien–Tanner(PTT) models, respectively.} 
    \label{tab:stress_models}
    \end{center}
\end{table}

We denote the space of states with finite total mechanical energies by
\[
    \mcal{F} = \left\{ \vect{q} \ : \ \mcal{E}(\vect{q}) < \infty \right\}.
\]
We declare two elements in $\mcal{F}$ to be equal when they agree almost everywhere on $\Omega$.
In this paper we introduce energy-based distance functions that allow states in $\mcal{F}$ to be compared.
The next section explains how this can be accomplished for a large class of stress models including those listed in table~\ref{tab:stress_models} by embedding $\mcal{F}$ in a Reproducing Kernel Hilbert Space (RKHS) whose squared norm equals the total mechanical energy, as illustrated in Fig.~\ref{fig:energy_compatible_lifting}. 
This enables a variety of kernel-based machine learning algorithms (see \cite{Hofmann2008kernel}) to be applied in viscoelastic flow problems with principled choices for the kernel functions based on flow physics.

\section{Energy-based reproducing kernels}
First we review the concept of a Reproducing Kernel Hilbert Space (RKHS).
An RKHS $\mcal{H}$ over $\mcal{F}$ consists of functions $\phi: \mcal{F} \to \R$ where pointwise evaluation $\phi \mapsto \phi(\vect{q})$ is a bounded linear map for every $\vect{q}\in\mcal{F}$.
By the Riesz lemma \cite{Reed1980functional}, there is a unique element $K_{\vect{q}} \in \mcal{H}$ so that the value of every $\phi\in\mcal{H}$ at $\vect{q}\in\mcal{F}$ is given by the inner product
\begin{equation}
    \phi(\vect{q}) = \left\langle K_{\vect{q}}, \ \phi \right\rangle_{\mcal{H}}.
\end{equation}
The function $k:\mcal{F}\times\mcal{F} \to \R$ defined by
\begin{equation}
    k(\vect{q}_1, \vect{q}_2) = K_{\vect{q}_2}(\vect{q}_1) = \left\langle K_{\vect{q}_1}, \ K_{\vect{q}_2} \right\rangle_{\mcal{H}}
\end{equation}
is called the ``reproducing kernel'' of $\mcal{H}$ and
the map $\Phi: \vect{q} \mapsto K_{\vect{q}}$ is called the ``feature map''.
It is easy to verify that for every finite collection of states $\vect{q}_1, \ldots, \vect{q}_m \in \mcal{F}$ and coefficients $a_1, \ldots, a_m \in \R$ the kernel satisfies the positive-definiteness condition
\begin{equation}
    \sum_{i=1}^{m}\sum_{i=1}^{m} a_i a_j k(\vect{q}_i, \vect{q}_j) \geq 0.
    \label{eqn:PD_cond}
\end{equation}
Conversely, any function $k:\mcal{F} \times \mcal{F} \to \R$ satisfying the above positive-definiteness condition is the reproducing kernel of a unique RKHS thanks to the Moore-Aronszajn theorem (see \cite{Berlinet2011reproducing, Aronszajn1950theory}).

For a large class of stress models, including those in table~\ref{tab:stress_models}, we show that there is an RKHS $\mcal{H}$ whose feature map $\Phi:\mcal{F} \to \mcal{H}$ is injective and respects the total energy in the sense that 
\begin{equation}
    \mcal{E}(\vect{q}) 
    = \| \Phi(\vect{q})\|_{\mcal{H}}^2 
    = k(\vect{q}, \vect{q}).
    \label{eqn:energy_compatibility}
\end{equation}
This notion of compatibility with the total mechanical energy is illustrated in Fig.~\ref{fig:energy_compatible_lifting}.
Using the distance between lifted states in the RKHS, the state space $\mcal{F}$ becomes a metric space with
\begin{equation}
    \label{eqn:kernel_metric}
    d_{\mcal{E}}(\vect{q}_1, \vect{q}_2) 
    := \left\Vert \Phi(\vect{q}_1) - \Phi(\vect{q}_2) \right\Vert_{\mcal{H}}
    = \sqrt{k(\vect{q}_1, \vect{q}_1) - 2 k(\vect{q}_1, \vect{q}_2) + k(\vect{q}_2, \vect{q}_2)}.
\end{equation}
The injective property of the feature map is required to ensure that $d_{\mcal{E}}(\vect{q}_1, \vect{q}_2) = 0$ if and only if $\vect{q}_1 = \vect{q}_2$ in $\mcal{F}$, i.e, almost everywhere in $\Omega$.
We say that an RKHS is an ``injective RKHS'' with an ``injective kernel function'' when the associated feature map is injective.
Most importantly, the metric in Eq.~\eqref{eqn:kernel_metric} can be computed using the kernel function without doing explicit computations in the abstract, possibly infinite-dimensional space $\mcal{H}$.

Our main result, stated in the following theorem, provides general conditions on the stress model ensuring there is an RKHS with injective feature map compatible with the total energy.
Moreover, the theorem provides an explicit formula for the corresponding reproducing kernel.
Before stating the result, we need some preliminary definitions.
Let $\sigma(\mat{c}) \subset \R$ denote the set of eigenvalues (spectrum) of a symmetric matrix $\mat{c} \in \mathbb{S}^d$ and let $\mat{P}_{\mat{c}}(\lambda)$ denote the orthogonal projection onto the eigenspace of $\mat{c}$ corresponding to an eigenvalue $\lambda \in \sigma(\mat{c})$.
Then the action of a function $f: \sigma(\mat{c}) \to \R$ on the matrix $\mat{c}$ is defined by
\begin{equation}
    \label{eqn:matrix_functional_calculus}
    f(\mat{c}) := \sum_{\lambda \in \sigma(\mat{c})} f(\lambda) \mat{P}_{\mat{c}}(\lambda).
\end{equation}
It is easy to see that $1(\mat{c}) = \mat{I}$, $\Id_{\R}(\mat{c}) = \mat{c}$, $(f+g)(\mat{c}) = f(\mat{c}) + g(\mat{c})$, and $(f g)(\mat{c}) = f(\mat{c}) g(\mat{c})$, making $f \mapsto f(\mat{c})$ a commutative algebra homomorphism commonly referred to as the ``functional calculus'' of $\mat{c}$.
This extends the usual notions of matrix polynomials, matrix square roots, and matrix exponentials.
\begin{theorem}
    \label{thm:viscoelastic_kernels}
    Suppose that there are (measurable) functions $\{ f_i \}_{i=0}^{\infty}$ on $\sigma(D(\mat{s})) = \bigcup_{\mat{c}\in D(\mat{s})} \sigma(\mat{c})$ and nonnegative constants $\{ c_{i,p} \}_{i,p=0}^{\infty}$ so that
    \begin{equation}
        \label{eqn:stress_model_series}
        h(\mat{c}) := 
        -\Wei \cdot \Tr[\mat{s}(\mat{c})] + c
        = \sum_{i=0}^{\infty}\sum_{p=0}^{\infty} c_{i,p} \left[ \Tr \big(f_i(\mat{c})^2\big) \right]^p
    \end{equation}
    for every $\mat{c}\in D(\mat{s})$.
    Then the series
    \begin{equation}
        \label{eqn:integrand_kernel}
        \tilde{k}(\mat{c}_1, \mat{c}_2)
        := \sum_{i=0}^{\infty}\sum_{p=0}^{\infty} c_{i,p} \left[ \Tr \big( f_i(\mat{c}_1) f_i(\mat{c}_2) \big) \right]^p
    \end{equation}
    converges absolutely for every $\mat{c}_1, \mat{c}_2 \in D(\mat{s})$ and the function $k:\mcal{F}\times\mcal{F} \to \R$ defined by
    \begin{equation}
    \label{eqn:main_kernel_fun}
        k(\vect{q}_1, \vect{q}_2) := \frac{1}{2} \int_{\Omega} \left[ \vect{u}_1(\vect{x}) \cdot \vect{u}_2(\vect{x}) + \theta \tilde{k}\big(\mat{c}_1(\vect{x}), \mat{c}_2(\vect{x})\big) \right] \td \vect{x}
    \end{equation}
    is a positive-definite kernel satisfying $\mcal{E}(\vect{q}) = k(\vect{q}, \vect{q})$.
    Moreover, if there is a coefficient $c_{i,p} > 0$ with $p \geq 1$, $f_i$ injective on $\sigma(D(\mat{s}))$, and $p$ odd or $f_i$ nonnegative, then the feature map of the corresponding RKHS is injective.
    We provide a proof in \ref{app:viscoelastic_kernels_proof}.
\end{theorem}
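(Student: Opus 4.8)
The plan is to verify the four assertions of the theorem in sequence: absolute convergence of $\tilde k$, positive-definiteness of $k$ (hence existence of the RKHS), the energy identity $k(\vect{q},\vect{q})=\mcal{E}(\vect{q})$, and injectivity of the feature map under the stated hypotheses. The unifying observation is that, since $f_i(\mat{c})$ is a real symmetric matrix, $\Tr\big(f_i(\mat{c}_1)f_i(\mat{c}_2)\big)$ is the Frobenius inner product $\langle f_i(\mat{c}_1),f_i(\mat{c}_2)\rangle_F$. First I would bound $\big|\Tr(f_i(\mat{c}_1)f_i(\mat{c}_2))\big|\le\sqrt{\Tr f_i(\mat{c}_1)^2}\,\sqrt{\Tr f_i(\mat{c}_2)^2}$ by Cauchy--Schwarz, raise this to the $p$-th power, and apply the arithmetic--geometric mean inequality $x^{p/2}y^{p/2}\le\frac{1}{2}(x^p+y^p)$ for $x,y\ge0$; since $c_{i,p}\ge0$, each term of \eqref{eqn:integrand_kernel} is then dominated by half the sum of the corresponding terms of $h(\mat{c}_1)$ and $h(\mat{c}_2)$ in \eqref{eqn:stress_model_series}, so $\tilde k$ converges absolutely with $|\tilde k(\mat{c}_1,\mat{c}_2)|\le\frac{1}{2}\big(h(\mat{c}_1)+h(\mat{c}_2)\big)$, and in particular $\tilde k(\mat{c},\mat{c})=h(\mat{c})$.

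Next I would show $k$ is positive-definite. For each $i$ the kernel $(\mat{c}_1,\mat{c}_2)\mapsto\langle f_i(\mat{c}_1),f_i(\mat{c}_2)\rangle_F$ is a Gram kernel, hence positive-definite; its $p$-th power is positive-definite by the Schur product theorem; and $\tilde k$, a nonnegative combination and pointwise limit of partial sums of such kernels, is positive-definite on $D(\mat{s})$. Then for any finite family $\vect{q}_1,\dots,\vect{q}_m\in\mcal{F}$ and coefficients $a_j$, the integrand of $\sum_{j,l}a_ja_l k(\vect{q}_j,\vect{q}_l)$ equals $\frac{1}{2}\big|\sum_j a_j\vect{u}_j(\vect{x})\big|^2+\frac{\theta}{2}\sum_{j,l}a_ja_l\tilde k(\mat{c}_j(\vect{x}),\mat{c}_l(\vect{x}))$, which is nonnegative a.e.\ since $\theta>0$ and $\tilde k$ is pointwise positive-definite; integrability follows from the bound above together with finiteness of the energies (which controls $\int_\Omega h(\mat{c}_j)\,\td\vect{x}$), and measurability of $\vect{x}\mapsto\tilde k(\mat{c}_1(\vect{x}),\mat{c}_2(\vect{x}))$ follows from that of the $f_i$, the functional calculus, and the pointwise limit. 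Hence $k$ is positive-definite, and by the Moore--Aronszajn theorem it is the reproducing kernel of a unique RKHS $\mcal{H}$ with feature map $\Phi(\vect{q})=K_{\vect{q}}$ and $\|\Phi(\vect{q})\|_\mcal{H}^2=k(\vect{q},\vect{q})$. The energy identity is then a direct computation: with $\pmb{\tau}=-(1-\beta)\mat{s}(\mat{c})$ and $\theta=\frac{1-\beta}{\Rey\Wei}$, the bracketed quantity $\frac{1}{\Rey}\Tr(\pmb{\tau})+\theta c$ equals $\theta\big(-\Wei\,\Tr[\mat{s}(\mat{c})]+c\big)=\theta\,h(\mat{c})=\theta\,\tilde k(\mat{c},\mat{c})$, so the integrand of $\mcal{E}$ in \eqref{eqn:energy} coincides with that of $k(\vect{q},\vect{q})$ from \eqref{eqn:main_kernel_fun}; therefore $k(\vect{q},\vect{q})=\mcal{E}(\vect{q})=\|\Phi(\vect{q})\|_\mcal{H}^2$.

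For injectivity, $\Phi(\vect{q}_1)=\Phi(\vect{q}_2)$ is equivalent to $\|\Phi(\vect{q}_1)-\Phi(\vect{q}_2)\|_\mcal{H}^2=k(\vect{q}_1,\vect{q}_1)-2k(\vect{q}_1,\vect{q}_2)+k(\vect{q}_2,\vect{q}_2)=0$, which by \eqref{eqn:main_kernel_fun} equals $\frac{1}{2}\int_\Omega\big[|\vect{u}_1-\vect{u}_2|^2+\theta\big(\tilde k(\mat{c}_1,\mat{c}_1)-2\tilde k(\mat{c}_1,\mat{c}_2)+\tilde k(\mat{c}_2,\mat{c}_2)\big)\big]\td\vect{x}$; both bracketed terms are nonnegative (the second since $\tilde k$ is pointwise positive-definite), so the integral vanishing forces $\vect{u}_1=\vect{u}_2$ a.e.\ and, a.e.\ in $\vect{x}$, the vanishing of the pointwise $\tilde k$-distance between $\mat{c}_1(\vect{x})$ and $\mat{c}_2(\vect{x})$. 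It thus suffices to prove $\tilde k$ is injective on $D(\mat{s})$, i.e.\ that $\tilde k(\mat{c}_1,\mat{c}_1)-2\tilde k(\mat{c}_1,\mat{c}_2)+\tilde k(\mat{c}_2,\mat{c}_2)=0$ implies $\mat{c}_1=\mat{c}_2$. I would write this quantity as the convergent sum $\sum_{i,p}c_{i,p}\big([\Tr f_i(\mat{c}_1)^2]^p-2[\Tr(f_i(\mat{c}_1)f_i(\mat{c}_2))]^p+[\Tr f_i(\mat{c}_2)^2]^p\big)$ of nonnegative terms; for the distinguished index $(i,p)$ from the hypotheses ($c_{i,p}>0$, $p\ge1$, $f_i$ injective, $p$ odd or $f_i\ge0$) the term must vanish. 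Setting $a=\Tr f_i(\mat{c}_1)^2$, $b=\Tr f_i(\mat{c}_2)^2$, $t=\Tr(f_i(\mat{c}_1)f_i(\mat{c}_2))$, Cauchy--Schwarz gives $|t|\le\sqrt{ab}$; if $p$ is odd then $t^p\le|t|^p\le(\sqrt{ab})^p$, and if $f_i\ge0$ then $f_i(\mat{c}_1),f_i(\mat{c}_2)$ are positive semidefinite, so $t=\Tr\big(f_i(\mat{c}_1)^{1/2}f_i(\mat{c}_2)f_i(\mat{c}_1)^{1/2}\big)\ge0$ and again $t^p\le(\sqrt{ab})^p$; in either case $0=a^p-2t^p+b^p\ge(a^{p/2}-b^{p/2})^2\ge0$ forces $a=b$ and then $t=a$, so $\|f_i(\mat{c}_1)-f_i(\mat{c}_2)\|_F^2=a-2t+b=0$, i.e.\ $f_i(\mat{c}_1)=f_i(\mat{c}_2)$. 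Finally, since $f_i$ is injective on $\sigma(D(\mat{s}))\supseteq\sigma(\mat{c}_1)\cup\sigma(\mat{c}_2)$, it has an inverse on the spectrum of the common matrix $f_i(\mat{c}_1)$, and applying $f_i^{-1}$ through the functional calculus \eqref{eqn:matrix_functional_calculus} recovers $\mat{c}_1$ from $f_i(\mat{c}_1)$ and $\mat{c}_2$ from $f_i(\mat{c}_2)$; hence $\mat{c}_1=\mat{c}_2$, and with $\vect{u}_1=\vect{u}_2$ a.e.\ this gives $\vect{q}_1=\vect{q}_2$ in $\mcal{F}$.

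I expect two steps to require the most care. The first is the measure-theoretic bookkeeping in the positive-definiteness argument: confirming that $\vect{x}\mapsto\tilde k(\mat{c}_1(\vect{x}),\mat{c}_2(\vect{x}))$ is measurable through the infinite series and the matrix functional calculus, and that an integral --- and a countable nonnegative combination --- of positive-definite kernels is again positive-definite; this is routine but fiddly. The second, and the genuine heart of the proof, is the equality analysis for injectivity: it is precisely the squeeze $0=a^p-2t^p+b^p\ge(a^{p/2}-b^{p/2})^2\ge0$, together with the sign control on $t$ furnished by the hypothesis ``$p$ odd or $f_i$ nonnegative'', that collapses the distance to $f_i(\mat{c}_1)=f_i(\mat{c}_2)$, after which scalar injectivity of $f_i$ upgrades --- via the functional calculus --- to injectivity of the matrix map $\mat{c}\mapsto f_i(\mat{c})$. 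Arranging the hypotheses so that they deliver exactly this, and nothing weaker, is the crux of the argument.
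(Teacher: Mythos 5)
Your proposal is correct and follows essentially the same route as the paper's proof: absolute convergence via Cauchy--Schwarz, positive-definiteness of $\tilde{k}$ via Gram/Schur-product/limit closure rules, the energy identity from $\tilde{k}(\mat{c},\mat{c})=h(\mat{c})$, and injectivity reduced pointwise to injectivity of $\tilde{k}$ on $D(\mat{s})$ and then of $f_i$ on the spectrum. The only differences are cosmetic: you close the convergence bound with AM--GM where the paper uses a second Cauchy--Schwarz, and your equality analysis $0=a^p-2t^p+b^p\geq\bigl(a^{p/2}-b^{p/2}\bigr)^2$ replaces the Jensen-inequality argument in the paper's Lemma on combining injective kernels, with the same sign control on $t$ coming from ``$p$ odd or $f_i\geq 0$''.
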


\begin{revtwobox} 
\begin{remark}
    The functions $f_i$ in the Theorem are not necessarily unique, and different choice of $f_i$ can result in different kernel functions.
    For example, we will see below that for the Oldroyd-B model we can take $f_0(x) = \sqrt{x}$ with $c_{0,1} = 1$.
    Another equally valid choice leading to a different kernel function would be $f_0(x) = -\chi_{\mcal{I}}(x) \sqrt{x} + \chi_{\R\setminus \mcal{I}}(x) \sqrt{x}$, where $\mcal{I}$ is a measurable subset of $\R$ and $\chi_{\mcal{I}}(x) = 1$ when $x \in \mcal{I}$ and $\chi_{\mcal{I}}(x) = 0$ otherwise. The choices for $\mcal{I}$ are endless, and we do not explore them further.
\end{remark}
\end{revtwobox}  

Furthermore, $\mcal{F}$ is a complete metric space, as we state in the next theorem.
Intuitively, this means that $\mcal{F}$ has no missing points that can be approached, but never reached.
Completeness is important for understanding limits that appear, for example, when studying the long-time behavior of the system.
The fact that $\mcal{F}$ is complete provides additional evidence that our kernel-based metric is a natural one for the states of viscoelastic flows.

\begin{theorem}
    \label{thm:completeness_of_metric}
    With the same assumptions as Theorem~\ref{thm:viscoelastic_kernels},
    suppose that there is a coefficient $c_{i,p} > 0$ with $p \geq 1$, $f_i$ injective on $\sigma(D(\mat{s}))$, and $p$ odd or $f_i$ nonnegative.
    Then $\mcal{F}$ is a complete metric space with metric given by Eq.~\eqref{eqn:kernel_metric} and $\Phi(\mcal{F})$ is a closed subset of $\mcal{H}$. 
    A proof is provided in \ref{app:completeness_proof}.
\end{theorem}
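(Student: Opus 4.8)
Both assertions follow once I show that $\Phi(\mcal{F})$ is closed in $\mcal{H}$. Under the present hypotheses Theorem~\ref{thm:viscoelastic_kernels} gives that $\Phi$ is injective, so $d_{\mcal{E}}(\vect{q}_1,\vect{q}_2) = \|\Phi(\vect{q}_1) - \Phi(\vect{q}_2)\|_{\mcal{H}}$ vanishes only when $\vect{q}_1 = \vect{q}_2$; together with the Hilbert-space triangle inequality this makes $d_{\mcal{E}}$ a genuine metric and $\Phi$ an isometry of $(\mcal{F}, d_{\mcal{E}})$ onto the subspace $\Phi(\mcal{F}) \subset \mcal{H}$. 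Because $\mcal{H}$ is complete, a subset of it is complete exactly when it is closed, and completeness transfers across isometries; hence $\mcal{F}$ is complete if and only if $\Phi(\mcal{F})$ is closed, and it remains to establish the latter.

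So let $\vect{q}_n = (\vect{u}_n, \mat{c}_n)$ be a sequence in $\mcal{F}$ with $\Phi(\vect{q}_n) \to \phi$ in $\mcal{H}$; I must produce $\vect{q}_\infty \in \mcal{F}$ with $\Phi(\vect{q}_\infty) = \phi$. The energies $\mcal{E}(\vect{q}_n) = \|\Phi(\vect{q}_n)\|_{\mcal{H}}^2$ are bounded, whence $\|\vect{u}_n\|_{L^2}^2 \le 2\mcal{E}(\vect{q}_n)$ and $\int_\Omega h(\mat{c}_n)\,\td\vect{x} \le 2\mcal{E}(\vect{q}_n)/\theta$ are bounded. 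I would then invoke the paper's bounded linear reconstruction (preimage) operators: the velocity $\vect{u}$ and the square root $\mat{c}^{1/2}$ of the conformation tensor are recovered from $\Phi(\vect{q})$ by fixed bounded linear maps $\mcal{H} \to L^2(\Omega;\R^d)$ and $\mcal{H} \to L^2(\Omega;\mathbb{S}^d)$. Applying these to $\Phi(\vect{q}_n) \to \phi$ gives $\vect{u}_n \to \vect{u}_\infty$ in $L^2(\Omega;\R^d)$ and $\mat{c}_n^{1/2} \to \mat{b}_\infty$ in $L^2(\Omega;\mathbb{S}^d)$, where $\vect{u}_\infty, \mat{b}_\infty$ are the images of $\phi$; passing to a subsequence I may assume both convergences also hold pointwise a.e. on $\Omega$. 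Putting $\mat{c}_\infty := \mat{b}_\infty^2$, the field $\mat{c}_\infty$ is measurable, takes values in $\mathbb{S}_+^d$ a.e.\ (this set is closed and contains every $\mat{c}_n^{1/2}(\vect{x})$), and $\mat{c}_n(\vect{x}) \to \mat{c}_\infty(\vect{x})$ a.e.\ by continuity of squaring. Extending $h$ by $+\infty$ off $D(\mat{s})$ --- which for the models of table~\ref{tab:stress_models} makes it lower semicontinuous, the blow-up of $h$ at $\partial D(\mat{s})$ forcing this --- Fatou's lemma and lower semicontinuity give $\int_\Omega h(\mat{c}_\infty) \le \liminf_n \int_\Omega h(\mat{c}_n) < \infty$, so $\mat{c}_\infty(\vect{x}) \in D(\mat{s})$ a.e.\ and $\mcal{E}(\vect{q}_\infty) < \infty$ for $\vect{q}_\infty := (\vect{u}_\infty, \mat{c}_\infty) \in \mcal{F}$.

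It remains to prove $\Phi(\vect{q}_\infty) = \phi$. As $\Phi(\vect{q}_n) \to \phi$ strongly, by uniqueness of weak limits it suffices to show $\Phi(\vect{q}_n) \rightharpoonup \Phi(\vect{q}_\infty)$; and since $\sup_n\|\Phi(\vect{q}_n)\|_{\mcal{H}} < \infty$ and $\vspan\{K_{\vect{r}} : \vect{r}\in\mcal{F}\}$ is dense in $\mcal{H}$, it is enough to verify $k(\vect{q}_n,\vect{r}) = \langle\Phi(\vect{q}_n), K_{\vect{r}}\rangle_{\mcal{H}} \to k(\vect{q}_\infty,\vect{r})$ for each $\vect{r} = (\vect{u}_r, \mat{c}_r) \in \mcal{F}$. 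The velocity part $\tfrac12\int_\Omega \vect{u}_n\cdot\vect{u}_r$ converges by $L^2$ convergence of $\vect{u}_n$. For the conformation part $\tfrac{\theta}{2}\int_\Omega \tilde{k}(\mat{c}_n,\mat{c}_r)$, the integrand converges a.e.\ along the subsequence (continuity of $\mat{c} \mapsto \tilde{k}(\mat{c},\mat{c}_r(\vect{x}))$, from its absolutely convergent defining series), and Cauchy--Schwarz for the positive-definite kernel $\tilde{k}$ gives $|\tilde{k}(\mat{c}_n,\mat{c}_r)| \le h(\mat{c}_n)^{1/2}\, h(\mat{c}_r)^{1/2}$, so that $\{\tilde{k}(\mat{c}_n,\mat{c}_r)\}_n$ is uniformly integrable and tight (boundedness of $\int_\Omega h(\mat{c}_n)$ and absolute continuity of $\int_\Omega h(\mat{c}_r)$); Vitali's convergence theorem then gives convergence of the integral. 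Finally, an every-subsequence-has-a-further-subsequence argument --- the reconstructed $\mat{c}_\infty$ is the same along every a.e.-convergent subsequence because $\mat{c}_n^{1/2} \to \mat{b}_\infty$ in $L^2$ along the whole sequence --- promotes this to the full sequence, so $\Phi(\vect{q}_n) \rightharpoonup \Phi(\vect{q}_\infty)$ and $\phi = \Phi(\vect{q}_\infty) \in \Phi(\mcal{F})$.

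The crux is this last step. Controlling the conformation-tensor contribution to the kernel under the limit requires both the correct use of the bounded linear reconstruction of $\mat{c}^{1/2}$ from $\Phi(\vect{q})$ --- exactly where the hypotheses ``$f_i$ injective'' and ``$p$ odd or $f_i$ nonnegative'' enter --- and a uniform-integrability (Vitali) argument rather than plain dominated convergence, since the natural pointwise bound $h(\mat{c}_n)^{1/2}$ on the integrand is only $L^2$-bounded, not dominated by a fixed integrable function. The Fatou step, in turn, quietly uses lower semicontinuity of the $+\infty$-extended $h$; this is evident for the models of table~\ref{tab:stress_models}, but in the general setting it should be recorded as a standing hypothesis on admissible stress models.
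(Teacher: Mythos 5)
Your overall architecture is sound and genuinely different from the paper's: you reduce everything to closedness of $\Phi(\mcal{F})$ and identify the limit via weak convergence plus a Vitali argument, whereas the paper proves completeness of $(\mcal{F}, d_{\mcal{E}})$ directly from a Cauchy sequence and then deduces closedness. However, there are two genuine gaps.

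The central one is your invocation of ``fixed bounded linear maps'' recovering $\vect{u}$ and $\mat{c}^{1/2}$ from $\Phi(\vect{q})$. Under the general hypotheses of the theorem, Corollary~\ref{cor:reconstruction} only provides a bounded linear reconstruction of $\big(\vect{u},\ f_i(\mat{c})^{\otimes p}\big)$ for the distinguished index pair $(i,p)$; the map $\Phi(\vect{q}) \mapsto \sqrt{\mat{c}}$ is linear and bounded only for the specific example models where $f_0(x)=\sqrt{x}$ and $c_{0,1}>0$. In general you obtain $f_i(\mat{c}_n)^{\otimes p} \to \mat{g}$ in $L^2$ and must still descend to a.e.\ convergence of $\mat{c}_n$ itself, which requires (a) undoing the $p$-th Kronecker power on symmetric matrices (injective under ``$p$ odd or $f_i$ nonnegative'', but needing a continuity-of-inverse argument you do not supply) and (b) inverting $f_i$ via its injectivity on $\sigma(D(\mat{s}))$. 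This is exactly the technical content the paper isolates in Lemma~\ref{lem:kernel_power_inequality}, which converts the kernel distance into a lower bound on $\|f_i(\mat{c}_m)-f_i(\mat{c}_n)\|_F^{2p}$ and hence gives an $L^{2p}$ Cauchy sequence for $f_i(\mat{c}_n)$ directly, bypassing the Kronecker power entirely. Your proof skips this step.

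The second gap you flag yourself: the Fatou step needs the $+\infty$-extension of $h$ to be lower semicontinuous, which is an additional standing hypothesis not present in the theorem statement. The paper avoids it by the Riesz--Fischer-style device of passing to a rapidly convergent subsequence with $d_{\mcal{E}}(\vect{q}_{n_k},\vect{q}_{n_{k+1}}) < 2^{-k}$, building the telescoping dominating function $G \in L^2(\Omega)$ with $\tilde{k}(\mat{c}_{n_k}(\vect{x}),\mat{c}_{n_k}(\vect{x})) \le G(\vect{x})^2$, and applying dominated convergence to get both $\mcal{E}(\vect{q}) < \infty$ and $d_{\mcal{E}}(\vect{q}_{n_k},\vect{q}) \to 0$ in one stroke; this also renders your separate weak-convergence/Vitali identification of the limit unnecessary. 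Your Vitali argument itself (uniform integrability of $h(\mat{c}_n)^{1/2}h(\mat{c}_r)^{1/2}$ from the energy bound and absolute continuity of $\int h(\mat{c}_r)$) is fine, but it sits downstream of the two unproved steps above.
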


To see how these theorems work, we use them to show that the stress models in table~\ref{tab:stress_models} admit injective positive-definite kernel functions turning $\mcal{F}$ into a complete metric space.
However, we note that in certain cases the parameters in the stress models must be constrained.
Interestingly, for the Oldroyd-B, Giesekus, and linear PTT stress models it is also possible to explicitly construct feature maps $\Psi: \mcal{F} \to L^2(\Omega)$ acting point-wise so that
\begin{equation}
    \label{eqn:explicit_feature_map}
    k(\vect{q}_1, \vect{q}_2) 
    = \langle \Psi(\vect{q}_1),\ \Psi(\vect{q}_2) \rangle_{L^2(\Omega)}
    := \int_{\Omega} \Psi(\vect{q}_1)(\vect{x})^T \Psi(\vect{q}_2)(\vect{x}) \td \vect{x}.
\end{equation}
As a result of the Moore-Aronszajn theorem (see Theorem~\ref{thm:Moore-Aronszajn} in \ref{app:viscoelastic_kernels_proof}), for these cases the RKHS $\mcal{H}$ is identified isometrically with the closed subspace of $L^2(\Omega)$ spanned by $\{ \Psi(\vect{q}) \}_{\vect{q} \in \mcal{F}}$.
We let $\vct(\mat{A})$ denote any vectorization of a matrix $\mat{A}$ and we recall that $\vct(\mat{A}_1)^T \vct(\mat{A}_2) = \Tr(\mat{A}_1^T\mat{A}_2)$.

\begin{example}
For the \textbf{Oldroyd-B} model, we have
\begin{equation}
    h(\mat{c})
    = \Tr(\mat{c} - \mat{I}) + d = \Tr(\mat{c}),
\end{equation}
which can be written in the form of Eq.~\eqref{eqn:stress_model_series} with $f_0:x \mapsto \sqrt{x}$ and coefficient $c_{0,1} = 1$.
Thus, Eq.~\eqref{eqn:main_kernel_fun} with
\begin{equation}
    \tilde{k}(\mat{c}_1, \mat{c}_2) := \Tr\big(\sqrt{\mat{c}_1}\sqrt{ \mat{c}_2}\big),
    \label{eqn:Oldroyd-B_integrand_kernel}
\end{equation}
is an injective positive semi-definite kernel function satisfying Eq.~\eqref{eqn:energy_compatibility} for the Oldroyd-B.
Moreover, it is easy to see that
    \begin{equation}\label{eqn:OldB_explicit_feature_map}
        \Psi(\vect{q})(\vect{x})
        = \frac{1}{\sqrt{2}} 
        \begin{bmatrix}
            \vect{u}(\vect{x}) \\
            \sqrt{\theta} \vct\big(\sqrt{\mat{c}(\vect{x}}) \big)
        \end{bmatrix}
        \in \R^{d + d^2}.
    \end{equation}
provides an explicit feature map satisfying Eq.~\eqref{eqn:explicit_feature_map}.
\end{example}

\begin{example}
For the \textbf{Giesekus} model, we have
\begin{equation}
    h(\mat{c})
    = \Tr(\mat{c} - \mat{I}) + \alpha \Tr\left[ (\mat{c} - \mat{I})^2 \right] + d 
    = \Tr(\mat{c}) + \alpha \Tr\left[ (\mat{c} - \mat{I})^2 \right],
\end{equation}
which can be written in the form of Eq.~\eqref{eqn:stress_model_series} with $f_0 : x \mapsto \sqrt{x}$ and $f_1: x \mapsto (x-1)$ with coefficients $c_{0,1} = 1$ and $c_{1,1} = \alpha$.
Thus, Eq.~\eqref{eqn:main_kernel_fun} with
\begin{equation}
    \tilde{k}(\mat{c}_1, \mat{c}_2) := \Tr\big(\sqrt{\mat{c}_1}\sqrt{ \mat{c}_2}\big) + \alpha \Tr\big[ (\mat{c}_1 - \mat{I}) (\mat{c}_2 - \mat{I}) \big],
    \label{eqn:Giesekus_integrand_kernel}
\end{equation}
is an injective positive semi-definite kernel function satisfying Eq.~\eqref{eqn:energy_compatibility} for the Giesekus model with parameter $\varepsilon \geq 0$.
Moreover, 
\begin{equation}\label{eqn:Giesekus_explicit_feature_map}
    \Psi(\vect{q})(\vect{x})
    = \frac{1}{\sqrt{2}} 
    \begin{bmatrix}
        \vect{u}(\vect{x}) \\
        \sqrt{\theta} \vct\big(\sqrt{\mat{c}(\vect{x}}) \big) \\
        \sqrt{\theta \alpha} \vct\big(\mat{c}(\vect{x}) - \mat{I} \big)
    \end{bmatrix}
    \in \R^{d + 2 d^2},
\end{equation}
defines an explicit feature map satisfying Eq.~\eqref{eqn:explicit_feature_map}.
\end{example}

\begin{example}
For the \textbf{FENE-CR} model, we have
\begin{equation}
    h(\mat{c})
    = \frac{L^2\Tr(\mat{c} - \mat{I})}{L^2 - \Tr(\mat{c})} + d
    = \frac{(L^2 - d)\Tr(\mat{c})}{L^2 - \Tr(\mat{c})}
    = (L^2 - d) \sum_{p=1}^{\infty} \frac{1}{L^{2p}} \Tr(\mat{c})^p,
\end{equation}
where the geometric series converges when $\vert \Tr(\mat{c}) \vert < L^2$, hence on $D(\mat{s})$ for this model.
This expression takes the form of Eq.~\eqref{eqn:stress_model_series} with $f_0: x\mapsto \sqrt{x}$ and $c_{0,p} = (L^2 - d) L^{-2p}$ for every $p \geq 1$.
Therefore, if the parameter $L$ of the FENE-CR model is chosen so that $L^2 > d$, then Eq.~\eqref{eqn:main_kernel_fun} with
\begin{equation}
    \tilde{k}(\mat{c}_1, \mat{c}_2) = \frac{(L^2-d)\Tr\big(\sqrt{\mat{c}_1}\sqrt{\mat{c}_2}\big)}{L^2 - \Tr\big(\sqrt{\mat{c}_1}\sqrt{\mat{c}_2}\big)},
    \label{eqn:FENE-CR_integrand_kernel}
\end{equation}
defines an injective positive semi-definite kernel function satisfying Eq.~\eqref{eqn:energy_compatibility}.
\end{example}

\begin{example}
For the \textbf{FENE-P} model, we have
\begin{equation}
    h(\mat{c})
    = \frac{L^2 \Tr(\mat{c})}{L^2 - \Tr(\mat{c})}
    = L^2 \sum_{p=1}^{\infty} \frac{1}{L^{2p}} \Tr(\mat{c})^p,
\end{equation}
where the geometric series converges when $\vert \Tr(\mat{c}) \vert < L^2$, hence on $D(\mat{s})$ for this model.
This expression takes the form of Eq.~\eqref{eqn:stress_model_series} with $f_0: x\mapsto \sqrt{x}$ and $c_{0,p} = L^{-2(p-1)}$ for every $p \geq 1$.
Therefore, Eq.~\eqref{eqn:main_kernel_fun} with
\begin{equation}
    \tilde{k}(\mat{c}_1, \mat{c}_2) = \frac{L^2\Tr\big(\sqrt{\mat{c}_1}\sqrt{\mat{c}_2}\big)}{L^2 - \Tr\big(\sqrt{\mat{c}_1}\sqrt{\mat{c}_2}\big)},
    \label{eqn:FENE-P_integrand_kernel}
\end{equation}
defines an injective positive semi-definite kernel function satisfying Eq.~\eqref{eqn:energy_compatibility} for the FENE-P model.
We note that unlike the FENE-CR model, there are no constraints on the model parameter $L$.
\end{example}

\begin{example}
For the \textbf{linear PTT} model, we have
\begin{equation}
    h(\mat{c}) 
    = \left[ 1 + \varepsilon \Tr(\mat{c}-\mat{I}) \right]\Tr(\mat{c} - \mat{I}) + d
    = \varepsilon d^2 + (1 - 2\varepsilon d) \Tr(\mat{c}) + \varepsilon \Tr(\mat{c})^2,
\end{equation}
which takes the form of Eq.~\eqref{eqn:stress_model_series} with $f_0: x\mapsto \sqrt{x}$ and coefficients $c_{0,0} = \varepsilon d^2$, $c_{0,1} = 1-2\varepsilon d$, and $c_{0,2} = \varepsilon$.
Therefore, if the parameter $\varepsilon$ of the linear PTT model satisfies $0\leq \varepsilon \leq (2 d)^{-1}$, then Eq.~\eqref{eqn:main_kernel_fun} with
\begin{equation}
    \tilde{k}(\mat{c}_1, \mat{c}_2) = 
    \left[ 1 + \varepsilon \Tr\big(\sqrt{\mat{c}_1}\sqrt{\mat{c}_2}-\mat{I}\big) \right]\Tr\big(\sqrt{\mat{c}_1}\sqrt{\mat{c}_2} - \mat{I}\big) + d,
    \label{eqn:linear_PTT_integrand_kernel}
\end{equation}
defines an injective positive semi-definite kernel function satisfying Eq.~\eqref{eqn:energy_compatibility}.
To provide an explicit feature map for this model, we first observe that
\begin{equation}
\begin{aligned}
    \tilde{k}(\mat{c}_1, \mat{c}_2) 
    &= d^2 + (1-2\varepsilon d) \Tr\big(\sqrt{\mat{c}_1}\sqrt{\mat{c}_2}\big) + \varepsilon \Tr\big(\sqrt{\mat{c}_1}\sqrt{\mat{c}_2}\big)^2 \\
    &= d^2 + (1-2\varepsilon d) \Tr\big(\sqrt{\mat{c}_1}\sqrt{\mat{c}_2}\big) + \varepsilon \Tr\Big[ \big(\sqrt{\mat{c}_1} \sqrt{\mat{c}_2} \big) \otimes \big(\sqrt{\mat{c}_1} \sqrt{\mat{c}_2} \big) \Big] \\
    &= d^2 + (1-2\varepsilon d) \Tr\big(\sqrt{\mat{c}_1}\sqrt{\mat{c}_2}\big) + \varepsilon \Tr\Big[ \big(\sqrt{\mat{c}_1} \otimes \sqrt{\mat{c}_1} \big) \big(\sqrt{\mat{c}_2} \otimes \sqrt{\mat{c}_2} \big)\Big],
\end{aligned}
\end{equation}
where $\otimes$ denotes the Kronecker product of matrices.
Here, we have used the trace and mixed product properties of the Kronecker product.
Thanks to the above expression,
\begin{equation}\label{eqn:LinearPTT_explicit_feature_map}
    \Psi(\vect{q})(\vect{x})
    = \frac{1}{\sqrt{2}} 
    \begin{bmatrix}
        d \sqrt{\theta} \\
        \vect{u}(\vect{x}) \\
        \sqrt{\theta (1 - 2\varepsilon d)} \vct\big(\sqrt{\mat{c}(\vect{x}}) \big) \\
        \sqrt{\theta \varepsilon} \vct\big( \sqrt{\mat{c}(\vect{x})} \otimes \sqrt{\mat{c}(\vect{x})} \big)
    \end{bmatrix}
    \in \R^{1 + d + d^2 + d^4},
\end{equation}
defines an explicit feature map satisfying Eq.~\eqref{eqn:explicit_feature_map} for the linear PTT model.
\end{example}

\begin{example}
For the \textbf{nonlinear PTT} model, we have
\begin{equation}
\begin{aligned}
    h(\mat{c}) &= \exp\left[ \varepsilon\Tr(\mat{c} - \mat{I}) \right]\Tr(\mat{c} - \mat{I}) + d \\
        &= d + e^{-\varepsilon d}\left[ \Tr(\mat{c}) e^{\varepsilon\Tr(\mat{c})} - d e^{\varepsilon \Tr(\mat{c})} \right] \\
        &= d + e^{-\varepsilon d}\left[ \sum_{p=1}^{\infty} \frac{1}{(p-1)!}\varepsilon^{p-1}\Tr(\mat{c})^p - d - d\sum_{p=1}^{\infty} \frac{1}{p!} \varepsilon^p \Tr(\mat{c})^p \right] \\
        &= \underbrace{d\left(1-e^{-\varepsilon d}\right)}_{c_{0,0}} + \sum_{p=1}^{\infty}\underbrace{\frac{\varepsilon^{p-1} e^{-\varepsilon d}}{(p-1)!}\left(1 - \frac{\varepsilon d}{p} \right)}_{c_{0,p}}\Tr(\mat{c})^p,
\end{aligned}
\end{equation}
which takes the form of Eq.~\eqref{eqn:stress_model_series} with $f_0: x\mapsto \sqrt{x}$.
We observe that if $0 \leq \varepsilon \leq d^{-1}$ then all of the coefficients $c_{0,p}$ are non-negative and $c_{0,p}$ is strictly positive for every $p\geq 2$.
Therefore, if the parameter $\varepsilon$ of the nonlinear PTT model satisfies $0\leq \varepsilon \leq d^{-1}$, then
Eq.~\eqref{eqn:main_kernel_fun} with
\begin{equation}
    \tilde{k}(\mat{c}_1, \mat{c}_2) = 
    \exp\left[\varepsilon \Tr\big(\sqrt{\mat{c}_1}\sqrt{\mat{c}_2}-\mat{I}\big) \right]\Tr\big(\sqrt{\mat{c}_1}\sqrt{\mat{c}_2} - \mat{I}\big) + d,
    \label{eqn:nonlinear_PTT_integrand_kernel}
\end{equation}
defines an injective positive semi-definite kernel function satisfying Eq.~\eqref{eqn:energy_compatibility}.
\end{example}

\begin{revtwobox}   
\subsection{Forming new energy-compatible kernel functions}
\label{subsec:creating_new_kernels}
A large family of energy-compatible, injective, positive-definite kernel functions can be generated from a single kernel function $k:\mcal{F} \times \mcal{F} \to \R$ satisfying these properties.
New kernel functions are obtained by multiplying $k$ pointwise by any positive-definite kernel functions $\tilde{k}:\mcal{F} \times \mcal{F} \to \R$ satisfying $\tilde{k}(\vect{q}, \vect{q}) = 1$ for every $\vect{q} \in \mcal{F}$.
Examples include radial kernel functions such as Gaussian and Mat\'{e}rn kernels, as well as the cosine kernel.
It is well-known \cite{Aronszajn1950theory} that the pointwise product $k \tilde{k}:\mcal{F} \times \mcal{F} \to \R$ defined by
\begin{equation}
    \label{eqn:product_kernel}
    k \tilde{k} (\vect{q}_1, \vect{q}_2) := k(\vect{q}_1, \vect{q}_2) \tilde{k}(\vect{q}_1, \vect{q}_2)
\end{equation}
is a new positive-definite kernel function.
Energy-compatibility of this product kernel is obvious and
the following proposition shows that it gives rise to an injective feature map.

\begin{proposition}\label{prop:product_kernel_injectivity}
        Let $k, \tilde{k}: \mcal{F} \times \mcal{F} \to \R$ be positive-definite kernel functions where $k$ gives rise to an injective feature map and $\tilde{k}(\vect{q}, \vect{q}) = 1$ for every $\vect{q} \in \mcal{F}$. Then the positive-definite kernel function defined by \eqref{eqn:product_kernel} gives rise to an injective feature map.
        A proof is given in Appendix~\ref{app:viscoelastic_kernels_proof}
    \end{proposition}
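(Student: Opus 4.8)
The plan is to show that the feature map $\Phi_{k\tilde k}$ associated with the product kernel separates points of $\mcal{F}$, using the fact that $\|\Phi_{k\tilde k}(\vect{q})\|^2 = k\tilde k(\vect{q},\vect{q}) = k(\vect{q},\vect{q})$ and exploiting the tensor-product structure of the RKHS of a product kernel. First I would recall the standard fact (see \cite{Aronszajn1950theory, Berlinet2011reproducing}) that if $\mcal{H}$ and $\widetilde{\mcal{H}}$ are the RKHSs of $k$ and $\tilde k$ respectively, then the RKHS of the product kernel $k\tilde k$ can be realized inside the tensor-product Hilbert space $\mcal{H} \otimes \widetilde{\mcal{H}}$ with feature map $\Phi_{k\tilde k}(\vect{q}) = \Phi_k(\vect{q}) \otimes \widetilde{\Phi}(\vect{q})$, where $\Phi_k$ and $\widetilde{\Phi}$ are the feature maps of $k$ and $\tilde k$; indeed one checks $\langle \Phi_k(\vect{q}_1)\otimes\widetilde{\Phi}(\vect{q}_1),\ \Phi_k(\vect{q}_2)\otimes\widetilde{\Phi}(\vect{q}_2)\rangle = k(\vect{q}_1,\vect{q}_2)\tilde k(\vect{q}_1,\vect{q}_2)$, so this realization has the right reproducing kernel and is isometric to the canonical one by Moore--Aronszajn.

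Next I would extract the key normalization consequence of the hypothesis $\tilde k(\vect{q},\vect{q}) = 1$: it says precisely that $\|\widetilde{\Phi}(\vect{q})\|_{\widetilde{\mcal{H}}} = 1$ for every $\vect{q}\in\mcal{F}$, i.e. $\widetilde{\Phi}$ takes values on the unit sphere. Now suppose $\Phi_{k\tilde k}(\vect{q}_1) = \Phi_{k\tilde k}(\vect{q}_2)$, i.e. $\Phi_k(\vect{q}_1)\otimes\widetilde{\Phi}(\vect{q}_1) = \Phi_k(\vect{q}_2)\otimes\widetilde{\Phi}(\vect{q}_2)$ as elements of $\mcal{H}\otimes\widetilde{\mcal{H}}$. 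A nonzero elementary tensor $v\otimes w$ determines the lines $\R v$ and $\R w$ uniquely, and the product of the norms $\|v\|\|w\|$; combined with $\|\widetilde{\Phi}(\vect{q}_i)\| = 1$, the equality of the two elementary tensors forces $\Phi_k(\vect{q}_1) = \lambda \Phi_k(\vect{q}_2)$ and $\widetilde{\Phi}(\vect{q}_1) = \lambda^{-1}\widetilde{\Phi}(\vect{q}_2)$ for some scalar $\lambda\neq 0$, and then taking norms of the second relation gives $1 = |\lambda|^{-1}\cdot 1$, so $|\lambda| = 1$; since the tensors agree (not just up to sign) one gets $\lambda = 1$, hence $\Phi_k(\vect{q}_1) = \Phi_k(\vect{q}_2)$. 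Injectivity of $\Phi_k$ then yields $\vect{q}_1 = \vect{q}_2$ in $\mcal{F}$. The degenerate case $\Phi_k(\vect{q}_i) = 0$ is handled separately: then $k(\vect{q}_i,\vect{q}_i) = 0$, and injectivity of $\Phi_k$ already forces $\vect{q}_i$ to be the unique state with vanishing energy, so both must coincide with it.

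The main obstacle is the rigorous handling of the tensor-product realization and the uniqueness-of-elementary-tensor-decomposition argument in a possibly infinite-dimensional Hilbert space; rather than invoking tensor products abstractly, a cleaner route that I would actually write down is to work directly with the canonical RKHS $\mcal{H}_{k\tilde k}$ and test against reproducing elements. Concretely, if $\Phi_{k\tilde k}(\vect{q}_1) = \Phi_{k\tilde k}(\vect{q}_2)$ then $k\tilde k(\vect{q}_1,\vect{q}) = k\tilde k(\vect{q}_2,\vect{q})$ for all $\vect{q}$; setting $\vect{q} = \vect{q}_1$ and $\vect{q} = \vect{q}_2$ and using $\tilde k(\vect{q}_i,\vect{q}_i) = 1$ gives $k(\vect{q}_1,\vect{q}_1) = k(\vect{q}_1,\vect{q}_2)\tilde k(\vect{q}_1,\vect{q}_2) = k(\vect{q}_2,\vect{q}_2)$. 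By the Cauchy--Schwarz inequality in $\widetilde{\mcal H}$, $|\tilde k(\vect{q}_1,\vect{q}_2)| \le 1$, and by Cauchy--Schwarz in $\mcal{H}$, $|k(\vect{q}_1,\vect{q}_2)| \le \sqrt{k(\vect{q}_1,\vect{q}_1)k(\vect{q}_2,\vect{q}_2)} = k(\vect{q}_1,\vect{q}_1)$. Combining $k(\vect{q}_1,\vect{q}_1) = k(\vect{q}_1,\vect{q}_2)\tilde k(\vect{q}_1,\vect{q}_2)$ with both bounds forces $|\tilde k(\vect{q}_1,\vect{q}_2)| = 1$ and $|k(\vect{q}_1,\vect{q}_2)| = k(\vect{q}_1,\vect{q}_1) = k(\vect{q}_2,\vect{q}_2)$ with matching signs, which is the equality case of Cauchy--Schwarz in $\mcal{H}$ and hence $\Phi_k(\vect{q}_1) = \Phi_k(\vect{q}_2)$ (the equality case forces proportionality with a positive constant, and equal norms make the constant $1$; the case of zero norm is as above). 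Injectivity of $\Phi_k$ completes the proof.
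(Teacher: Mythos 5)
Your second, ``cleaner'' route is essentially the paper's own proof. The paper starts from the vanishing of $k\tilde{k}(\vect{q}_1,\vect{q}_1) - 2k\tilde{k}(\vect{q}_1,\vect{q}_2) + k\tilde{k}(\vect{q}_2,\vect{q}_2)$, uses $\tilde{k}(\vect{q},\vect{q})=1$ together with the AM--GM and Cauchy--Schwarz inequalities to conclude $\tilde{k}(\vect{q}_1,\vect{q}_2)=1$, and then substitutes back to reduce the claim to injectivity of the feature map of $k$; your evaluation of the reproducing identity at $\vect{q}_1$ and $\vect{q}_2$ generates the same chain of (in)equalities. The tensor-product realization in your first route is a correct description of the RKHS of a product kernel but adds nothing here.

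However, both of your routes contain the same unclosed case, and it is worth recording that the paper's proof shares it (at the step ``this means that $\tilde{k}(\vect{q}_1,\vect{q}_2)\geq 1$,'' which implicitly divides by $k(\vect{q}_1,\vect{q}_2)$ without checking its sign). Nothing in the argument excludes the possibility $\tilde{k}(\vect{q}_1,\vect{q}_2) = -1$ together with $k(\vect{q}_1,\vect{q}_2) = -\sqrt{k(\vect{q}_1,\vect{q}_1)\,k(\vect{q}_2,\vect{q}_2)}$: the product is then still $+\sqrt{k(\vect{q}_1,\vect{q}_1)k(\vect{q}_2,\vect{q}_2)}$, every displayed equality holds, but the equality case of Cauchy--Schwarz only gives $\Phi_k(\vect{q}_1) = \lambda\,\Phi_k(\vect{q}_2)$ with $\lambda = \sgn k(\vect{q}_1,\vect{q}_2) = -1$. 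Your parenthetical ``the equality case forces proportionality with a positive constant'' is exactly where this is swept aside, and the corresponding slip in the tensor route is ``since the tensors agree one gets $\lambda=1$'' --- false, because $v\otimes w = (-v)\otimes(-w)$. The case is not vacuous: on $\R$ take $k(x,y)=xy$ (injective feature map) and $\tilde{k}(x,y)=\sgn(x)\sgn(y)$ (positive definite, $\tilde{k}(x,x)=1$); the product kernel is $|x||y|$, whose feature map identifies $x$ with $-x$. The same mechanism can occur in the paper's own setting with the cosine kernel (which can attain $-1$) as $\tilde{k}$. The argument does go through --- yours and the paper's --- under an additional hypothesis such as $\tilde{k}(\vect{q}_1,\vect{q}_2) > -1$ for all $\vect{q}_1,\vect{q}_2$ (satisfied by Gaussian and Mat\'{e}rn kernels) or $k \geq 0$; without some such assumption the step from $|\tilde{k}(\vect{q}_1,\vect{q}_2)|=1$ to $\tilde{k}(\vect{q}_1,\vect{q}_2)=1$ is a genuine gap.
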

\end{revtwobox}

\subsection{Comparing states with different parameter values}

So far, our kernel functions and the associated distance function on $\mcal{F}$ only allows us to compare states at the same values of the parameters $\beta$, $\Rey$, and $\Wei$.
However, this situation is easily rectified by defining a new kernel function
\begin{equation}
        \check{k}\left((\vect{q}_1,\theta_1),\ (\vect{q}_2, \theta_2) \right) 
        := \frac{1}{2} \int_{\Omega} \left[ \vect{u}_1(\vect{x}) \cdot \vect{u}_2(\vect{x}) + \sqrt{\theta_1 \theta_2} \tilde{k}\big(\mat{c}_1(\vect{x}), \mat{c}_2(\vect{x})\big) \right] \td \vect{x},
        \label{eqn:parametric_kernel_fun}
\end{equation}
which depends on the values of the parameter $\theta = (1-\beta)/(\Rey \Wei)$.
This is a positive-definite kernel function of a unique RKHS $\check{\mcal{H}}$ over $\mcal{F}\times (0,\infty)$ thanks to the kernel product rule stated in Lemma~\ref{lem:combining_kernels} in \ref{app:viscoelastic_kernels_proof}.
Under the same conditions stated in Theorem~\ref{thm:viscoelastic_kernels}, the associated feature map $\check{\Phi}: \mcal{F}\times (0,\infty) \to \check{\mcal{H}}$ is ``conditionally injective'', meaning that
\begin{equation}
    \Phi_\theta : \vect{q} \mapsto \check{\Phi}(\vect{q}, \theta)
\end{equation}
is injective for every $\theta > 0$.
To see this, we observe that $k(\vect{q}_1, \vect{q}_2) = \check{k}\left((\vect{q}_1, \theta),\ (\vect{q}_2, \theta) \right)$.
Since $k$ is an injective kernel function, it follows that 
\[
    \| \Phi_\theta(\vect{q}_1) - \Phi_\theta(\vect{q}_2) \|_{\check{\mcal{H}}}
    = d_{\mcal{E}}(\vect{q}_1, \vect{q}_2) > 0
\]
when $\vect{q}_1 \neq \vect{q}_2$ in $\mcal{F}$.

\section{Reconstructing fields and preimages for kernel functions}
\label{subsec:reconstrucion}
Once states are lifted into the RKHS, it is often of practical importance to be able to reconstruct them.
This is referred to as the preimage problem in kernel-based machine learning \cite{Kwok2004preimage, Honeine2011preimage}.
The following theorem provides a partial solution for the class of viscoelastic kernels presented above.
It describes spatial fields that can be linearly reconstructed from the lifted representations of states in the RKHS.
In particular, Corollary~\ref{cor:reconstruction} means that for each of the stress models listed in Table~\ref{tab:stress_models}, 
the field $\vect{y} = (\vect{u}, \sqrt{\mat{c}})$ lies in $L^2(\Omega;\ \R^d \times \R^{d\times d})$ and can be reconstructed via a bounded linear operator acting on $\Phi(\vect{q})$ for every $\vect{q} \in \mcal{F}$.
For the Oldroyd-B, Giesekus, and linear PTT models, this property is evident from their explicit feature maps given above in Eqs.~\eqref{eqn:OldB_explicit_feature_map},~\eqref{eqn:Giesekus_explicit_feature_map},~and~\eqref{eqn:LinearPTT_explicit_feature_map}.

\begin{theorem}
    \label{thm:reconstruction}
    With the same assumptions as Theorem~\ref{thm:viscoelastic_kernels}, let $\vect{a}_0\in \R^d$ and $\mat{A}_{i,p} \in \R^{d^p \times d^p}$ satisfy
    \begin{equation}
        A := \sqrt{2 \| \vect{a}_0 \|_2^2 + \frac{2}{\theta} \sum_{i=0}^{\infty} \sum_{p=0}^{\infty} \| \mat{A}_{i,p} \|_F^2} < \infty
    \end{equation}
    and consider the function $\psi: \R^d \times D(\mat{s}) \to \R$ defined by
    \begin{equation}
        \psi(\vect{u}, \mat{c}) = \vect{a}_0^T \vect{u} + \sum_{i=0}^{\infty} \sum_{p=0}^{\infty} \sqrt{c_{i,p}} \Tr\left[ \mat{A}_{i,p}^T f_{i}(\mat{c})^{\otimes p} \right].
    \end{equation}
    Then there is a unique bounded linear operator $R_{\psi}: \mcal{H} \to L^2(\Omega)$ satisfying
    \begin{equation}
        R_{\psi} \Phi(\vect{q}) = \psi \circ \vect{q}
    \end{equation}
    for every $\vect{q} = (\vect{u}, \mat{c}) \in \mcal{F}$ and $\| R_{\psi} \| \leq A$.
    We provide a proof in \ref{app:reconstruction_proofs}.
\end{theorem}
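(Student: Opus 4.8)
The plan is to realize the abstract RKHS $\mcal{H}$ concretely as a closed subspace of a vector-valued $L^2$ space, so that $R_{\psi}$ becomes nothing more than a fiberwise inner product against a single fixed vector. Let $\mcal{V}$ be the separable Hilbert space $\R^d \oplus \bigoplus_{i,p\ge 0}\R^{d^p}$ with the (unweighted) $\ell^2$ inner product, and recall from the proof of Theorem~\ref{thm:viscoelastic_kernels} that
\[
    \Psi(\vect{q})(\vect{x}) = \tfrac{1}{\sqrt{2}}\Big(\vect{u}(\vect{x}),\ \big(\sqrt{\theta\, c_{i,p}}\,\vct(f_i(\mat{c}(\vect{x}))^{\otimes p})\big)_{i,p}\Big)\in\mcal{V}
\]
is a well-defined element of $L^2(\Omega;\mcal{V})$ with $\|\Psi(\vect{q})\|_{L^2(\Omega;\mcal{V})}^2 = \mcal{E}(\vect{q})$, and that $k(\vect{q}_1,\vect{q}_2) = \langle\Psi(\vect{q}_1),\Psi(\vect{q}_2)\rangle_{L^2(\Omega;\mcal{V})}$, using $[\Tr(f_i(\mat{c}_1)f_i(\mat{c}_2))]^p = \langle\vct(f_i(\mat{c}_1)^{\otimes p}),\vct(f_i(\mat{c}_2)^{\otimes p})\rangle$. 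By the Moore--Aronszajn theorem (Theorem~\ref{thm:Moore-Aronszajn}), $\mcal{H}$ is isometrically identified with $\overline{\vspan}\{\Psi(\vect{q})\}_{\vect{q}\in\mcal{F}}\subset L^2(\Omega;\mcal{V})$, with the feature map $\Phi$ corresponding to $\Psi$; I would work under this identification throughout.

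Next I would construct the fixed vector. Define $\vect{w}\in\mcal{V}$ by $\vect{w}_0 = \sqrt{2}\,\vect{a}_0$ and $\vect{w}_{i,p} = \sqrt{2/\theta}\,\vct(\mat{A}_{i,p})$, so that $\|\vect{w}\|_{\mcal{V}}^2 = 2\|\vect{a}_0\|_2^2 + \tfrac{2}{\theta}\sum_{i,p}\|\mat{A}_{i,p}\|_F^2 = A^2<\infty$. Expanding $\langle\vect{w},\Psi(\vect{q})(\vect{x})\rangle_{\mcal{V}}$ and cancelling the weights $\sqrt{\theta\,c_{i,p}}$ against those hidden in $\vect{w}_{i,p}$ yields exactly $\vect{a}_0^T\vect{u}(\vect{x}) + \sum_{i,p}\sqrt{c_{i,p}}\,\Tr[\mat{A}_{i,p}^T f_i(\mat{c}(\vect{x}))^{\otimes p}] = \psi(\vect{q}(\vect{x}))$. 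That this series converges absolutely at each $\vect{x}$ follows from Cauchy--Schwarz over the index set: it is bounded by $\big(\sum_{i,p}\|\mat{A}_{i,p}\|_F^2\big)^{1/2}\big(\sum_{i,p}c_{i,p}[\Tr f_i(\mat{c}(\vect{x}))^2]^p\big)^{1/2} = \big(\sum_{i,p}\|\mat{A}_{i,p}\|_F^2\big)^{1/2}\sqrt{h(\mat{c}(\vect{x}))}<\infty$. I would then define $R_{\psi}:L^2(\Omega;\mcal{V})\to L^2(\Omega)$ by $(R_{\psi}g)(\vect{x}) := \langle\vect{w},g(\vect{x})\rangle_{\mcal{V}}$; pointwise Cauchy--Schwarz together with Tonelli's theorem gives $\|R_{\psi}g\|_{L^2(\Omega)}\le\|\vect{w}\|_{\mcal{V}}\|g\|_{L^2(\Omega;\mcal{V})}$, so $R_{\psi}$ is bounded with $\|R_{\psi}\|\le A$, and this bound is inherited by its restriction to the subspace identified with $\mcal{H}$. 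By construction $R_{\psi}\Phi(\vect{q}) = \psi\circ\vect{q}$, and $\psi\circ\vect{q}\in L^2(\Omega)$ because $\Psi(\vect{q})\in L^2(\Omega;\mcal{V})$. Uniqueness is immediate: two bounded operators agreeing on every $\Phi(\vect{q})$ agree on $\vspan\{\Phi(\vect{q})\}_{\vect{q}\in\mcal{F}}$, hence by continuity on its closure $\mcal{H}$.

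The places where the genuine work lives — and thus the main obstacles — are (i) measurability of $\vect{x}\mapsto f_i(\mat{c}(\vect{x}))^{\otimes p}$, hence of $\vect{x}\mapsto\psi(\vect{q}(\vect{x}))$, which I would inherit from the measurability of the functional calculus $\mat{c}\mapsto f_i(\mat{c})$ established in the proof of Theorem~\ref{thm:viscoelastic_kernels}, and (ii) the legitimacy of treating $\mcal{H}$ as a closed subspace of $L^2(\Omega;\mcal{V})$ with $\Phi\equiv\Psi$, again obtained from the explicit Moore--Aronszajn construction used there; everything else reduces to the Cauchy--Schwarz estimates above. Finally, Corollary~\ref{cor:reconstruction} follows by specializing $\psi$ to pick out each scalar component of $\vect{u}$ (take $\vect{a}_0$ a standard basis vector and all $\mat{A}_{i,p}=0$) and of $\sqrt{\mat{c}}$ (take the single term with $f_i(x)=\sqrt{x}$, $p=1$, and $\mat{A}_{i,1}$ a scaled matrix unit), using that this term appears with a positive coefficient $c_{i,1}$ for every stress model in Table~\ref{tab:stress_models}.
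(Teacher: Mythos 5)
Your proposal is correct and follows essentially the same route as the paper's proof: both realize $\mcal{H}$ isometrically (via Moore--Aronszajn and the Kronecker-power trace identity) as a closed subspace of a vector-valued $L^2$ space carrying the explicit feature map $\Psi$, define the reconstruction operator as a fiberwise linear functional (your $\langle\vect{w},\cdot\rangle_{\mcal{V}}$ is exactly the paper's operator $T$), bound it by two applications of Cauchy--Schwarz to get $\|R_\psi\|\le A$, and obtain uniqueness from density of $\vspan\{\Phi(\vect{q})\}$. The only quibble is notational: the fiber components should live in $\R^{d^p\times d^p}\cong\R^{d^{2p}}$, not $\R^{d^p}$, since $f_i(\mat{c})^{\otimes p}$ is a $d^p\times d^p$ matrix.
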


\begin{corollary}
    \label{cor:reconstruction}
    With the same assumptions as Theorem~\ref{thm:viscoelastic_kernels}, suppose that there is a nonzero coefficient $c_{i,p} > 0$.
    Then there is a unique bounded linear operator $R_{i,p}: \mcal{H} \to L^2(\Omega;\ \R^{d} \times \R^{d^p \times d^p} )$ satisfying
    \begin{equation}
        R_{i,p} \Phi(\vect{q}) = \left( \vect{u},\ f_i(\mat{c})^{\otimes p} \right)
    \end{equation}
    for every $\vect{q} = (\vect{u}, \mat{c}) \in \mcal{F}$. \revtwo{Its} norm is bounded by
    $
        \| R_{i,p} \| \leq \sqrt{2d + \frac{2 d^{2p}}{\theta c_{i,p}}}
    $.
\end{corollary}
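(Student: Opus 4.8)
The plan is to deduce the Corollary directly from Theorem~\ref{thm:reconstruction} by choosing the coefficient arrays so that the scalar reconstruction operator $R_\psi$ recovers, component by component, the vector field $\bigl(\vect{u},\ f_i(\mat{c})^{\otimes p}\bigr)$. Fix the pair $(i,p)$ with $c_{i,p} > 0$. The target field takes values in $\R^d \times \R^{d^p \times d^p}$, a space of dimension $d + d^{2p}$; I would reconstruct each of these $d + d^{2p}$ scalar components separately using an appropriate instance of Theorem~\ref{thm:reconstruction}, and then assemble them into a single operator.

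First I would handle the velocity components. For $j = 1, \dots, d$, apply Theorem~\ref{thm:reconstruction} with $\vect{a}_0 = \vect{e}_j$ (the $j$th standard basis vector) and all $\mat{A}_{i',p'} = 0$. This gives $\psi(\vect{u},\mat{c}) = u_j$, hence a bounded operator recovering the $j$th velocity component, with $A = \sqrt{2}$. Next, for the conformation part, index the entries of a $d^p \times d^p$ matrix by pairs $(r,s)$ with $1 \le r,s \le d^p$; for each such $(r,s)$ apply Theorem~\ref{thm:reconstruction} with $\vect{a}_0 = 0$, with $\mat{A}_{i,p} = \tfrac{1}{\sqrt{c_{i,p}}}\, \vect{e}_r \vect{e}_s^T$, and with all other $\mat{A}_{i',p'} = 0$. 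Then $\sqrt{c_{i,p}}\,\Tr\bigl[\mat{A}_{i,p}^T f_i(\mat{c})^{\otimes p}\bigr] = \vect{e}_r^T f_i(\mat{c})^{\otimes p} \vect{e}_s = \bigl[f_i(\mat{c})^{\otimes p}\bigr]_{rs}$, so the resulting bounded operator recovers the $(r,s)$ entry, with $A = \sqrt{\tfrac{2}{\theta c_{i,p}}\,\|\vect{e}_r\vect{e}_s^T\|_F^2} = \sqrt{\tfrac{2}{\theta c_{i,p}}}$.

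Stacking these $d + d^{2p}$ scalar-valued bounded operators yields a bounded linear operator $R_{i,p}:\mcal{H}\to L^2(\Omega;\ \R^d\times\R^{d^p\times d^p})$ with $R_{i,p}\Phi(\vect{q}) = (\vect{u},\ f_i(\mat{c})^{\otimes p})$; uniqueness follows since $\mathrm{span}\,\{\Phi(\vect{q})\}$ is dense in $\mcal{H}$ (from the Moore--Aronszajn construction, as in Theorem~\ref{thm:reconstruction}). For the norm bound, note that if $T:\mcal{H}\to L^2(\Omega;\R^N)$ is assembled from scalar components $T_1,\dots,T_N$, then $\|T h\|^2 = \sum_{n} \|T_n h\|^2 \le \bigl(\sum_n \|T_n\|^2\bigr)\|h\|^2$, so $\|T\| \le \sqrt{\sum_n \|T_n\|^2}$. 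Applying this with the $d$ velocity operators each of norm $\le \sqrt{2}$ and the $d^{2p}$ entrywise operators each of norm $\le \sqrt{2/(\theta c_{i,p})}$ gives
\[
    \| R_{i,p} \| \le \sqrt{\, d \cdot 2 + d^{2p} \cdot \frac{2}{\theta c_{i,p}} \,} = \sqrt{2d + \frac{2 d^{2p}}{\theta c_{i,p}}},
\]
as claimed. The only genuinely delicate point is bookkeeping: one must make sure the single-component instances of Theorem~\ref{thm:reconstruction} are legitimately additive — i.e.\ that summing the component operators gives an operator still satisfying the reproducing identity on $\mathrm{span}\,\{\Phi(\vect{q})\}$ and hence (by density and boundedness) on all of $\mcal{H}$ — but this is immediate from linearity, so there is no substantial obstacle.
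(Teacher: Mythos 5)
Your proposal is correct and is essentially the argument the paper intends: the corollary is stated without a separate proof precisely because it follows from Theorem~\ref{thm:reconstruction} by the component-wise choice of coefficients ($\vect{a}_0=\vect{e}_j$ for the velocity entries and $\mat{A}_{i,p}=c_{i,p}^{-1/2}\vect{e}_r\vect{e}_s^T$ for the tensor entries), and your stacking bound $\|R_{i,p}\|\leq\sqrt{\sum_n\|T_n\|^2}$ reproduces exactly the stated constant $\sqrt{2d+2d^{2p}/(\theta c_{i,p})}$. Uniqueness via density of $\mathrm{span}\,\{\Phi(\vect{q})\}_{\vect{q}\in\mcal{F}}$ is also the mechanism used in the paper's proof of the theorem, so there is no gap.
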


\begin{remark}
    Incidentally, the proof of Theorem~\ref{thm:reconstruction} shows that the metric space $(\mcal{F}, d_{\mcal{E}})$ and the RKHS $\mcal{H}$ are separable.
    This is a useful property for studying operators on $\mcal{H}$, such as the covariance operator in kernel principal component analysis described below.
\end{remark}

These results justify linear reconstruction of the field $\vect{y} = (\vect{u}, \sqrt{\mat{c}})$ using low-dimensional coordinates obtained via kernel Principal Component Analysis (KPCA) \cite{Scholkopf1998nonlinear}.
Here, there is a probability measure $\mu$ over states in $\mcal{F}$, and we seek a collection of low-dimensional features describing states from this distribution.
For the sake of simplicity, we consider the uncentered version of KPCA where the covariance operator $C_{\mu}:\mcal{H} \to \mcal{H}$ is defined by
\begin{equation}
    C_{\mu} f = \int_{\mcal{F}} \Phi(\vect{q}) \langle \Phi(\vect{q}), f \rangle_{\mcal{H}} \td \mu(\vect{q}).
\end{equation}
If the probability measure has finite average energy, i.e., $\int_{\mcal{F}} \mcal{E}(\vect{q}) \td\mu(\vect{q}) < \infty$, then the covariance operator is self-adjoint, positive semidefinite, and trace-class by Theorem~4.1 in \cite{Minh2018covariances}.
Therefore, $\mcal{H}$ admits an orthonormal basis of eigenvectors $\{ u_j \}_{j=1}^{\infty}$ of $C_{\mu}$ with eigenvalues $\lambda_j = \sigma_j^2$, $\sigma_j \geq 0$, arranged in descending order and satisfying \revtwo{$\sum_{j=1}^{\infty} \sigma_j^2 < \infty$. Using KPCA, the principal components}
\begin{equation}
    z_j(\vect{q}) := \langle u_j,\ \Phi(\vect{q}) \rangle_{\mcal{H}}
\end{equation}
can be computed without performing explicit operations in the RKHS.
To do this, KPCA uses a kernel integral operator $K_{\mu} : L^2(\mcal{F},\mu) \to L^2(\mcal{F},\mu)$ defined by
\begin{equation}
    (K_{\mu} f)(\vect{q}) = \int_{\mcal{F}} k(\vect{q}, \vect{q}') f(\vect{q}') \td \mu(\vect{q}).
\end{equation}
This operator is self-adjoint and has the same eigenvalues $\lambda_j$ as $C_{\mu}$, with eigenfunctions $\{ v_j \}_{j=1}^{\infty}$ forming an orthonormal basis for $L^2(\mcal{F},\mu)$.
Crucially, these eigenfunctions can be computed using the kernel function, enabling one to compute the KPCA coordinates
\begin{equation}
    z_j(\vect{q}) 
    = \sigma_{j}^{-1} \int_{\mcal{F}} v_j(\vect{q}') k(\vect{q}', \vect{q}) \td \mu(\vect{q}').
\end{equation}

Since $C_{\mu}$ is a compact operator on a Hilbert space, it can be approximated in norm by finite-rank operators.
One often uses empirical covariance operators obtained by drawing finitely many independent samples $\{ \vect{q}_i \}_{i=1}^m$ from $\mu$ and forming the covariance $C_{\mu_m}$ for the empirical measure $\mu_m = m^{-1} \sum_{i=1}^m \delta_{\vect{q}_i}$ with $\delta_{\vect{q}_i}$ denoting the Dirac measure centered at $\vect{q}_i$.
With the identification $L^2(\mcal{F}, \mu_m) \cong \R^m$, which is valid when the sampled states are distinct, the corresponding kernel integral operator $K_{\mu_m}$ becomes the kernel matrix $\mat{K} \in \R^{m\times m}$ with entries $[\mat{K}]_{i,j} = m^{-1} k(\vect{q}_i, \vect{q}_j)$ and eigenvectors $\vect{v}_j \in \R^m$.
The KPCA features for the empirical distribution are then given by
\begin{equation}
    z_j(\vect{q}) = m^{-1/2} \sigma_j^{-1} 
    \begin{bmatrix}
        k(\vect{q}_1, \vect{q}) &
        \cdots &
        k(\vect{q}_m, \vect{q})
    \end{bmatrix}
    \vect{v}_j.
\end{equation}

The following result says that certain spatial fields such as $(\vect{u}, \sqrt{\mat{c}})$ can be linearly reconstructed with guaranteed accuracy using the leading KPCA features.
Here, we arrange the first $r$ eigenvectors into an operator $U_r:\R^r \to \mcal{H}$ defined by $(w_1, \ldots, w_r) \mapsto w_1 u_1 + \cdots w_r u_r$ and we denote the vector of leading principal components $\vect{z}_r(\vect{q}) = ( z_1(\vect{q}), \ldots, z_r(\vect{q}) ) = U_r^* \Phi(\vect{q})$.
\begin{proposition}
    \label{prop:KPCA_field_reconstruction}
    With the same assumptions as Theorem~\ref{thm:viscoelastic_kernels}, suppose that there is a nonzero coefficient $c_{i,p} > 0$ and the average energy $\int_{\mcal{F}} \mcal{E}(\vect{q}) \td \mu(\vect{q}) < \infty$ is finite.
    For a state vector $\vect{q} = (\vect{u}, \mat{c}) \in \mcal{F}$, the reconstruction error
    is bounded by
    \begin{equation}
        \label{eqn:single_state_KPCA_rec_bound}
        \left\| \big( \vect{u},\ f_i(\mat{c})^{\otimes p} \big) - R_{i,p} U_r \vect{z}_r(\vect{q}) \right\|_{L^2(\Omega)}^2 
        \leq 2 \left(d + \frac{d^{2p}}{\theta c_{i,p}}\right) \Big( \mcal{E}(\vect{q}) - \| \vect{z}_r(\vect{q}) \|_2^2 \Big).
    \end{equation}
    The average reconstruction error for states drawn according to $\mu$ is bounded by
    \begin{equation}
        \label{eqn:average_KPCA_rec_bound}
        \int_{\mcal{F}} \left\| \big( \vect{u},\ f_i(\mat{c})^{\otimes p} \big) - R_{i,p} U_r \vect{z}_r(\vect{q}) \right\|_{L^2(\Omega)}^2 \td \mu(\vect{q})
        \leq 2 \left( d + \frac{d^{2p}}{\theta c_{i,p}}\right) \sum_{j=r+1}^{\infty} \sigma_j^2 .
    \end{equation}
    We provide a proof in \ref{app:reconstruction_proofs}.
\end{proposition}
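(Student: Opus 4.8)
The plan is to realize the reconstruction error as the image under $R_{i,p}$ of the projection of $\Phi(\vect{q})$ onto the orthogonal complement of the leading KPCA subspace, and then control everything with the operator-norm estimate of Corollary~\ref{cor:reconstruction} together with the energy-compatibility identity $\| \Phi(\vect{q}) \|_{\mcal{H}}^2 = k(\vect{q},\vect{q}) = \mcal{E}(\vect{q})$ from Theorem~\ref{thm:viscoelastic_kernels}.

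First I would record the elementary linear-algebra facts about $U_r$. Since $\{ u_j \}_{j=1}^{\infty}$ is orthonormal, $U_r^* U_r = \Id_{\R^r}$, so $P_r := U_r U_r^*$ is the orthogonal projection of $\mcal{H}$ onto $\vspan\{ u_1, \ldots, u_r \}$ and $U_r$ is an isometry on its range. Consequently $U_r \vect{z}_r(\vect{q}) = U_r U_r^* \Phi(\vect{q}) = P_r \Phi(\vect{q})$ and $\| \vect{z}_r(\vect{q}) \|_2^2 = \| U_r^* \Phi(\vect{q}) \|_2^2 = \| P_r \Phi(\vect{q}) \|_{\mcal{H}}^2$. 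Using $R_{i,p} \Phi(\vect{q}) = (\vect{u},\, f_i(\mat{c})^{\otimes p})$ from Corollary~\ref{cor:reconstruction}, the reconstruction error equals $R_{i,p}\big(\Phi(\vect{q}) - P_r \Phi(\vect{q})\big) = R_{i,p}(I - P_r)\Phi(\vect{q})$, whence
\[
    \big\| (\vect{u},\, f_i(\mat{c})^{\otimes p}) - R_{i,p} U_r \vect{z}_r(\vect{q}) \big\|_{L^2(\Omega)}^2
    \leq \| R_{i,p} \|^2 \, \| (I - P_r)\Phi(\vect{q}) \|_{\mcal{H}}^2 .
\]
Into this I substitute $\| R_{i,p} \|^2 \leq 2d + 2 d^{2p}/(\theta c_{i,p})$ and, since $I - P_r$ is an orthogonal projection and $\Phi$ is energy-compatible,
\[
    \| (I - P_r)\Phi(\vect{q}) \|_{\mcal{H}}^2 = \| \Phi(\vect{q}) \|_{\mcal{H}}^2 - \| P_r \Phi(\vect{q}) \|_{\mcal{H}}^2 = \mcal{E}(\vect{q}) - \| \vect{z}_r(\vect{q}) \|_2^2 ,
\]
which is exactly \eqref{eqn:single_state_KPCA_rec_bound}.

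For the averaged estimate \eqref{eqn:average_KPCA_rec_bound} I would integrate the same pointwise inequality over $\mu$, so it remains to evaluate $\int_{\mcal{F}} \| (I - P_r)\Phi(\vect{q}) \|_{\mcal{H}}^2 \td\mu(\vect{q})$. Expanding $\| (I - P_r)\Phi(\vect{q}) \|_{\mcal{H}}^2 = \sum_{j > r} z_j(\vect{q})^2$ in the eigenbasis and exchanging the sum and integral (legitimate by Tonelli, the summands being nonnegative and measurable), each term is
\[
    \int_{\mcal{F}} z_j(\vect{q})^2 \td\mu(\vect{q}) = \int_{\mcal{F}} \langle u_j, \Phi(\vect{q}) \rangle_{\mcal{H}} \langle \Phi(\vect{q}), u_j \rangle_{\mcal{H}} \td\mu(\vect{q}) = \langle u_j, C_{\mu} u_j \rangle_{\mcal{H}} = \sigma_j^2 ,
\]
by the definition of $C_{\mu}$ and $C_{\mu} u_j = \sigma_j^2 u_j$. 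Hence $\int_{\mcal{F}} \| (I - P_r)\Phi(\vect{q}) \|_{\mcal{H}}^2 \td\mu = \sum_{j > r} \sigma_j^2$, and multiplying by $\| R_{i,p} \|^2 \leq 2\big(d + d^{2p}/(\theta c_{i,p})\big)$ gives the claim.

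I do not anticipate a genuine obstacle: the argument is the standard projection (Eckart--Young type) bound transported through the bounded operator $R_{i,p}$. The only points deserving care are verifying that $U_r$ is an isometry so that $U_r \vect{z}_r(\vect{q}) = P_r \Phi(\vect{q})$ and $\|\vect{z}_r(\vect{q})\|_2 = \|P_r\Phi(\vect{q})\|_{\mcal{H}}$, invoking the energy-compatibility identity at the right place, and justifying the interchange of summation and integration --- valid because $\mu$ has finite average energy, which by Theorem~4.1 of \cite{Minh2018covariances} makes $C_{\mu}$ trace-class so that $\sum_j \sigma_j^2 < \infty$.
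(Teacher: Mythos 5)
Your proposal is correct and follows essentially the same route as the paper's proof: realize the error as $R_{i,p}(I-P_r)\Phi(\vect{q})$, bound it by $\|R_{i,p}\|^2\|(I-P_r)\Phi(\vect{q})\|_{\mcal{H}}^2$ using Corollary~\ref{cor:reconstruction}, invoke energy-compatibility for the pointwise bound, and expand in the eigenbasis with an interchange of sum and integral to get $\sum_{j>r}\sigma_j^2$ for the averaged bound. The only cosmetic difference is your appeal to Tonelli where the paper cites Parseval plus Fubini; both are valid.
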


\begin{revtwobox} 
\begin{remark}
    When the centered version of KPCA is used the reconstruction will involve an affine, rather than purely linear map of the leading principal components.
\end{remark}
\end{revtwobox} 

\section{A case study: transient lid-driven cavity flow}

Let us consider the lid-driven cavity problem defined over a square spatial domain $[0,1]\times [0,1]$, and driven by the top lid moving to the right. 
While the cavity flow problem may not have immediate practical applications, it holds significant importance as a viscoelastic benchmark (see \cite{SOUSA2016129,Castillo2015,chai2021efficient}). 
This simple geometry can be employed to address challenges associated with the singularity of the stress field on the corner of the cavity as well as to study the emergence of elastic turbulence for high Weissenberg number (\cite{comminal2015robust}). 
In particular, we have illustrated the streamlines for two situations in Fig. \ref{stream} for the lid-driven cavity flow considering the Oldroyd-B model.

To generate the numerical data for our investigation, we employed the same code designed for solving viscoelastic fluid flows as in \cite{Martins2015,evans2017stresses,evans2017transient,Franca}. 
In this methodology, we employ finite-difference approximations for the discretization of the momentum and mass equations (\ref{mom}) and (\ref{mass}), as well as the constitutive equation (\ref{cons}). 
Managing the interdependence between the velocity and pressure fields, a key challenge in solving the governing equations (\ref{mom}) and (\ref{mass}), is addressed through the application of a projection scheme. 
Widely recognized in computational fluid dynamics, this scheme effectively decouples the velocity and pressure fields, simplifying the computational process and enhancing the stability and accuracy of our numerical simulations. 
Specifically, the momentum equation (\ref{mom}) is time-discretized using a semi-implicit strategy, while an explicit time discretization is employed for solving the constitutive equation (\ref{cons}). 
Additionally, for the discretization of the convective terms presented in equations (\ref{mom}) and (\ref{cons}), we apply a high-order upwind methodology. 
In order to obtain efficient solutions for high Weissenberg number simulations, we incorporate a stabilization scheme known as log-conformation (\cite{Alves2021numerical}). 
The log-conformation method proves particularly effective in handling flows with high Weissenberg numbers, ensuring stability and accuracy in capturing the intricate rheological behavior of viscoelastic fluids.

In our numerical tests, we use a uniform mesh with a spatial discretization of $\Delta x= 1/80 = 0.0125$ within the square domain. 
Time integration was performed with a fixed time-step of $\Delta t = 10^{-5}$, ensuring temporal stability and capturing the transient dynamics of the viscoelastic fluid flow. 
The fixed mesh size and time-step, were selected based on a careful balance between computational efficiency and the accuracy required for our investigation. 

Since we aim to analyze the dynamical behavior of the flow, we adopt the same smooth, transient velocity profile for the top lid used in \cite{Castillo2015}:
\begin{equation}
	u(x,1,t) = 16x^2(1 - x)^2\sin(\pi t).\\
\end{equation}
The remaining cavity walls are stationary and we impose no-slip boundary conditions.

\begin{figure}
    \centering
    a){\includegraphics[width=.35\textwidth,trim= 0in 0.4in 0in .1in]{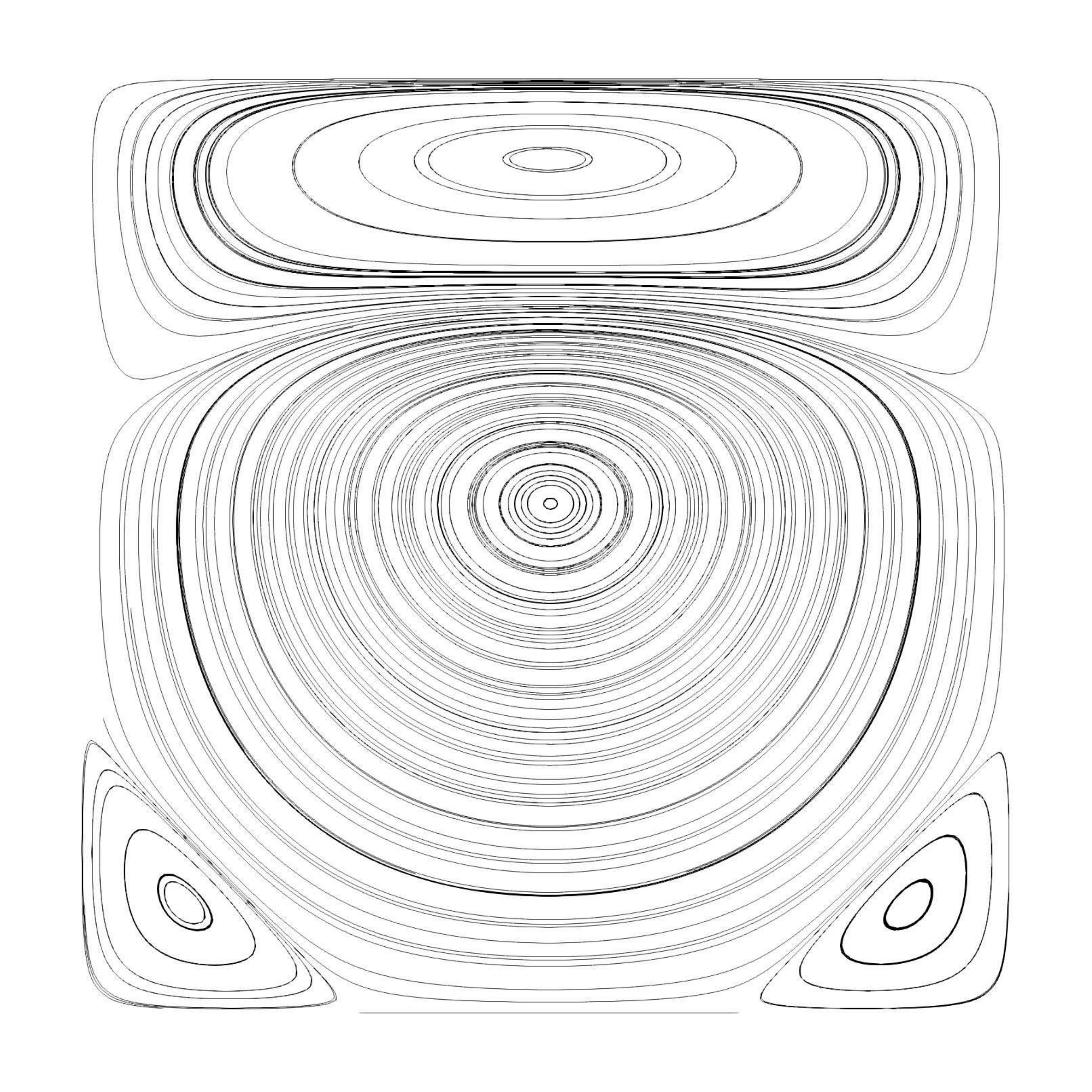}
    \includegraphics[width=.4\textwidth]{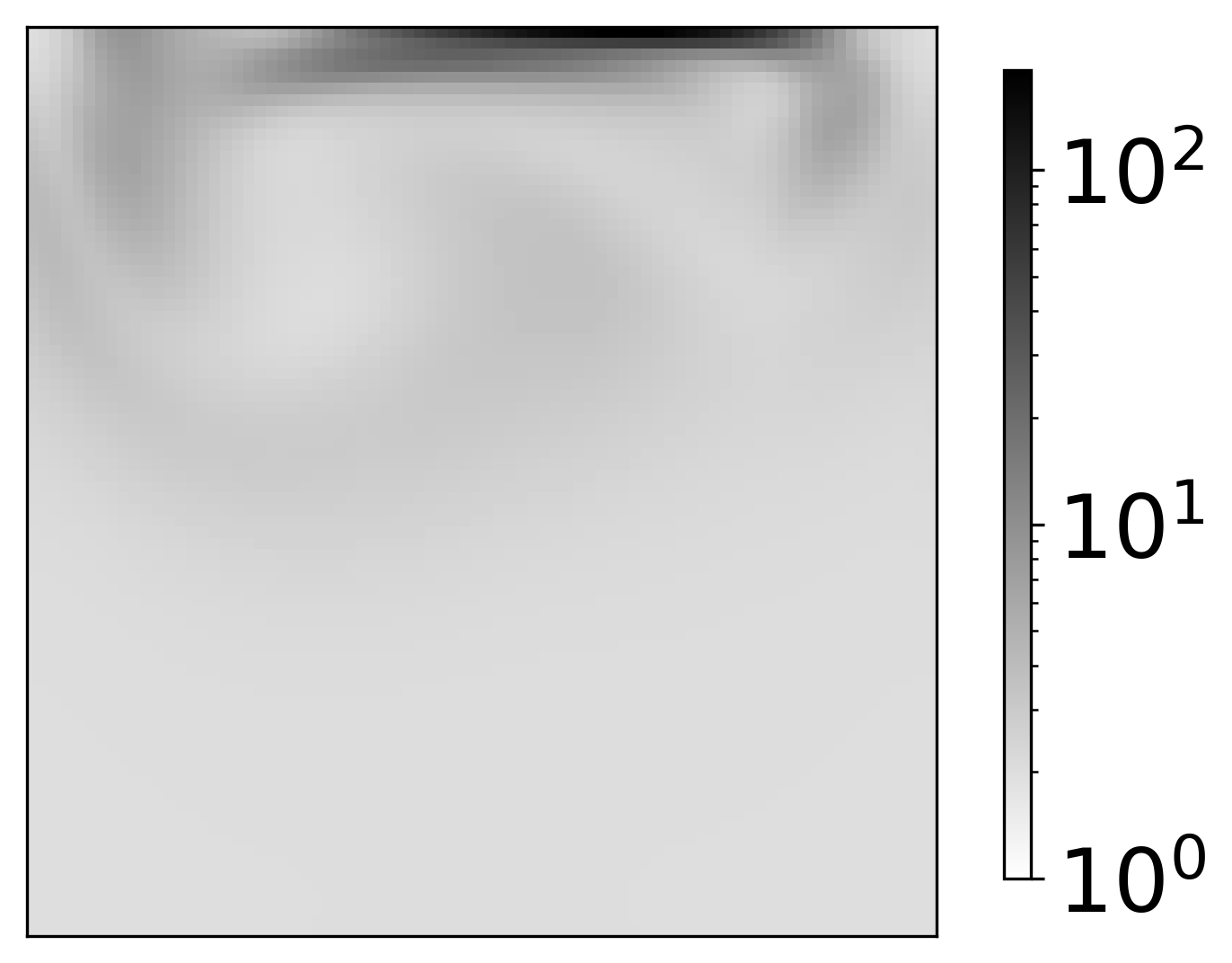}}\\
    b){\includegraphics[width=.35\textwidth,trim= 0in 0.4in 0in .1in]{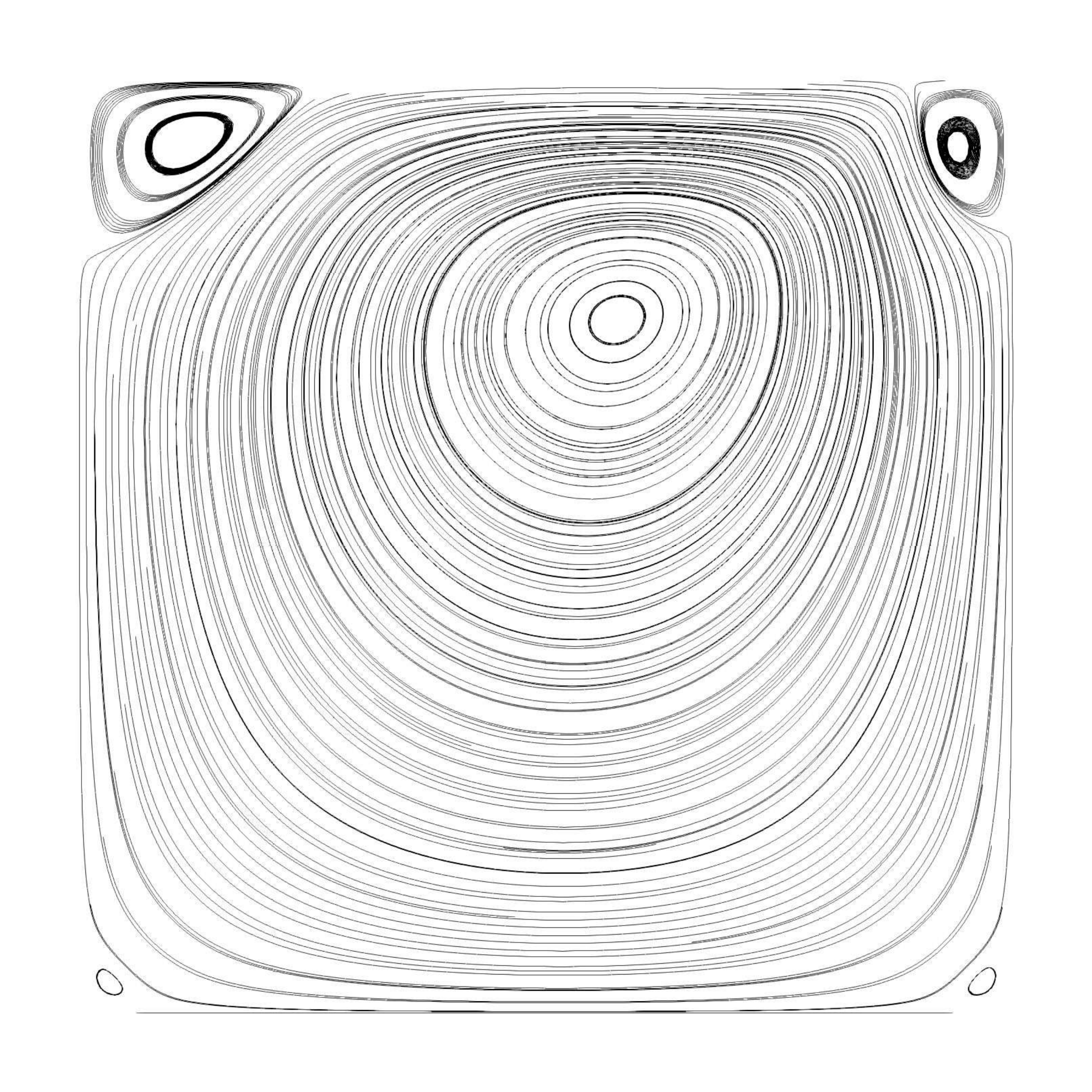} 
    \includegraphics[width=.4\textwidth]{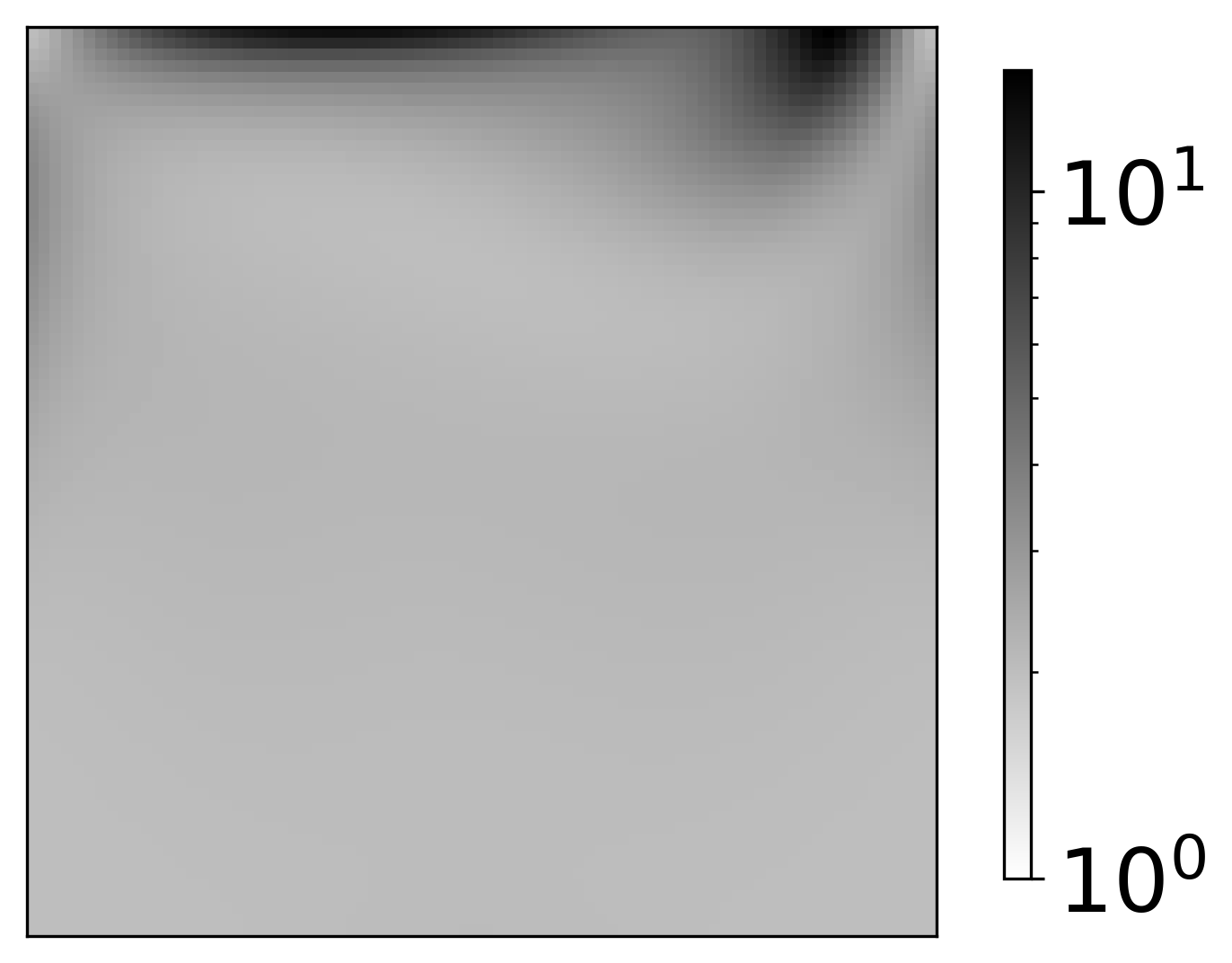}}\\
    \caption{Streamlines (left) and $\Tr(\mat{c})$ (right) of the cavity flow using the viscoelastic Oldroyd-B model: a) $Re=10 , Wi=10, \beta=0.9$ and b) $Re=0.001, Wi=5 ,\beta=0.5$.}
    \label{stream}
\end{figure}

\subsection{Dataset, algorithm and measurement of errors}

The data from simulations are organized into a matrix $\textbf{Y} \in \mathbb{R}^{D\times m}$, where $D$ represents the product of the number of mesh points and the number of variables in each spatial location.
Each column is a modified state vector
\begin{equation}
    \vect{y} =\begin{pmatrix}
        u   \\
        v   \\
        b_{xx}   \\
        b_{xy}  \\
        b_{yy}
    \end{pmatrix}
    \qquad \mbox{where} \qquad \mat{b} = \sqrt{\mat{c}}
\label{q_vector}\end{equation}
corresponding to one of the $m$ snapshots collected from a simulation.
This definition of a modified state vector is due the fact it can be linearly reconstructed from the lifted states, as discussed in section \ref{subsec:reconstrucion}. 
We also drop the $b_{yx}$ entry of the square root conformation tensor since it is equal to $b_{xy}$ by symmetry.
The original state vector $\vect{q}$ can be obtained from $\vect{y}$ by the relation $\mat{c} = \mat{b}^2$ \cite{Balci2011symmetric}. 
Hence, the matrix $\textbf{Y}$ is structured as
\begin{equation}
\textbf{Y} = 
\left[
  \begin{array}{cccc}
    \vertbar & \vertbar & &   \vertbar \\
    \vect{y}_1   & \vect{y}_2  & \dots & \vect{y}_m \\
    \vertbar & \vertbar  & & \vertbar 
  \end{array}
\right],
\label{eqn:snapshot_matrix}
\end{equation}  
where $\vect{y}_i$ is the modified state vector defined by Eq.~\ref{q_vector} for the $i$th state vector snapshot $\vect{q}_i$ gathered from a simulation.


We use the Kernel Principal Component Analysis (KPCA) algorithm developed in \cite{Scholkopf1998nonlinear} to extract low-dimensional variables describing the simulation data.
In summary, the kernel matrix $\mat{K} \in \mathbb{R}^{m\times m}$ with entries $[\mat{K}]_{i,j} = k(\vect{q}_i, \vect{q}_j)$ is centered to form
\begin{equation}
    \mat{K}_c = \left(\mat{I} - \frac{1}{m} \vect{1}_m \vect{1}_m^T \right) \mat{K} \left(\mat{I} - \frac{1}{m} \vect{1}_m \vect{1}_m^T \right),
\end{equation}
where $\vect{1}_m$ denotes the vector with unit entries in $\R^m$.
Computing the symmetric eigenvalue decomposition of the centered kernel matrix 
\begin{equation}
    \mat{K}_c = \mat{V} \mat{\Lambda} \mat{V}^T,
    \label{K_alg}
\end{equation}
and forming its rank-$r$ truncation $\mat{V}_r \mat{\Lambda}_r \mat{V}_r^T$ retaining the largest $r$ eigenvalues
yields
\begin{equation}
\mat{Z}_r = 
\left[
  \begin{array}{cccc}
    \vertbar & \vertbar &  &   \vertbar \\
    \vect{z}_r(\vect{q}_1)   & \vect{z}_r(\vect{q}_2)  & \dots & \vect{z}_r(\vect{q}_m) \\
    \vertbar & \vertbar  &  & \vertbar 
  \end{array}
\right] = \sqrt{\mat{\Lambda}_r} \mat{V}_r^T.
\label{eqn:KPCA_matrix}
\end{equation}
The columns of this matrix are vectors comprising the leading $r$ kernel principal components of each snapshot.
The kernel matrix is constructed using Eq.~\eqref{eqn:integrand_kernel} for the selected stress model. 
For example, if the Oldroyd-B model is chosen, the kernel function Eq.~\eqref{eqn:integrand_kernel} is computed using Eq.~\eqref{eqn:Oldroyd-B_integrand_kernel}. 
It is worth noting that the linear PCA reduction is also obtained as a special case of KPCA using a kernel given by the classical $L^2(\Omega; \R^d\times \R^{d\times d})$ inner product of state vectors.

We reconstruct the modified snapshots via an affine function of the kernel principal components
\begin{equation}
    \vect{\hat{y}}_i = \vect{r}_0 + \mat{R} \vect{z}_r(\vect{q}_i),
\end{equation}
by solving the least-squares problem
  \begin{equation}
    \argmin_{\mat{R}, \vect{r}_0} \Vert \textbf{Y} - \vect{r}_0 \vect{1}_m^T - \mat{R}\mat{Z}_r \Vert_{F}^{2},
        \label{opt2}
    \end{equation}
where $\mat{R}  \in \mathbb{R}^{D\times r}$ and $\vect{r}_0 \in \R^D$. 
Affine reconstruction of modified state vectors is justified per the discussion in Section~\ref{subsec:reconstrucion}.
Approximations of the original state vectors $\vect{\hat{q}}$ are then obtained from $\vect{\hat{y}}$ by the relation $\mat{\hat{c}} = \mat{\hat{b}}^2$.

In order to study the effect of the kernel function in dimensionality reduction, we consider several ways of measuring the reconstruction error.
The first way to measure the error is using the ordinary $L^2$ norm of the reconstructed fields
\begin{equation}
E_F = \frac{\sum_{i=1}^{m} \Vert \vect{y}_i - \vect{\hat{y}}_i \Vert_{L^2(\Omega)}^{2}}{\sum_{i=1}^{m}\Vert \vect{y}_i \Vert_{L^2(\Omega)}^{2}}.
\label{E1}
\end{equation}
Note that the numerator and denominator can be computed as Frobenius norms of respective snapshot matrices.
The problem with using this notion of error is that it is not grounded in the mechanical energy of the system, and could assign improper weight to the kinetic and elastic components of the error.
One approach to rectify this is to measure the error in the total mechanical energy of the reconstruction by
\begin{equation}
E_E =  \frac{\sum_{i=1}^{m}| \mcal{E}(\vect{q}_{i}) - \mcal{E}(\vect{\hat{q}_{i}})|}{\sum_{i=1}^{m} \mcal{E}(\vect{q}_{i})}.
\label{E3}
\end{equation}
However, this notion of error can be zero even when the reconstructed and ground truth snapshots are different.
Both issues are resolved by measuring the error in the kernel Hilbert space associated with the ground-truth stress model by
\begin{equation}
E_H =  \frac{\sum_{i=1}^{m} \Vert \Phi(\vect{q}_i) - \Phi(\vect{\hat{q}}_i)\Vert_{\mcal{H}}^{2}}{\sum_{i=1}^{m} \Vert \Phi(\vect{q}_i) \Vert_{\mcal{H}}^2}
= \frac{\sum_{i=1}^{m} d_{\mcal{E}}(\vect{q}_i, \vect{\hat{q}}_i)^{2}}{\sum_{i=1}^{m} \mcal{E}(\vect{q}_i) },
\label{E2}
\end{equation}
where quantities in the numerator and denominator are computed using Eq.~\eqref{eqn:kernel_metric}, Eq.~\eqref{eqn:energy_compatibility}, and the kernel function.

In the following two subsections, we study how the choice of kernel affects the ability of kernel-based dimensionality reduction and reconstruction methods to capture essential features of the lid-driven cavity flow.
To do this, we simulate the cavity flow using different choices of ground-truth stress model, Reynolds number, and Weissenberg number.
These ground-truth quantities are used to define the error metrics $E_E$ and $E_H$ in Eqs.~\ref{E3}~and~\ref{E2}.
Kernel PCA-based dimensionality reduction and reconstruction are then performed using the ground-truth kernel function and compared to other ad-hoc choices such as ordinary PCA and kernel functions associated with incorrect stress models.

\subsection{Results using the Oldroyd-B model}

We begin by studying the case where the linear Oldroyd-B model is used to define the ground-truth, and we compare against ordinary PCA.
This model holds significant popularity in non-Newtonian fluid mechanics, making it an ideal starting point for making comparisons.
We note that KPCA using the Oldroyd-B stress model is equivalent to performing PCA using properly weighted modified state vectors $\vect{\tilde{y}} = \Psi(\vect{q})$ defined by the explicit feature map in Eq.~\eqref{eqn:OldB_explicit_feature_map}.
To compare, we na\"{i}vely perform PCA using the state vectors defined by Eq.~\ref{q_vector} without adjusting how the kinetic and elastic components are weighted.
These two methods become nearly identical when $\theta = 1$, with the only difference being that $b_{xy}$ and $b_{yx}$ both appear in the properly weighted modified state vector, while only $b_{xy}$ appears in the modified state vector used to perform PCA.
The results discussed below show that using the correct weighting, i.e., the kernel function associated with the Oldroyd-B model is crucial for PCA to properly capture the most energetic flow structures.

In Fig.~\ref{comparison_OldB_theta_modes}, we plot our three error metrics against the number of kernel principal components (modes) used for reconstruction at different values of $\theta$.
These values were calculated using the triples $(\Rey = 10, \Wei = 10, \beta = 0.9)$, $(\Rey = 1, \Wei = 0.5, \beta = 0.5)$, and $(\Rey = 0.001, \Wei = 0.5, \beta = 0.5)$.
Ordinary PCA minimizes the reconstruction error with respect to the Frobenius norm, so it is no surprise that its error with respect to $E_F$ is lower.
However, this norm does not reflect the energetic content of the flow.
When $\theta \neq 1$, reduction and reconstruction based on the correct Oldroyd-B kernel yields smaller errors both in the kernel Hilbert space, as measured by $E_H$, and in the mechanical energy of the reconstructed snapshots, as measured by $E_E$.
As expected, the two methods have nearly identical performance in all error metrics when $\theta = 1$, as discussed above.

\begin{figure}[!ht]
    \centering
 a)
    \includegraphics[width=.3\textwidth]{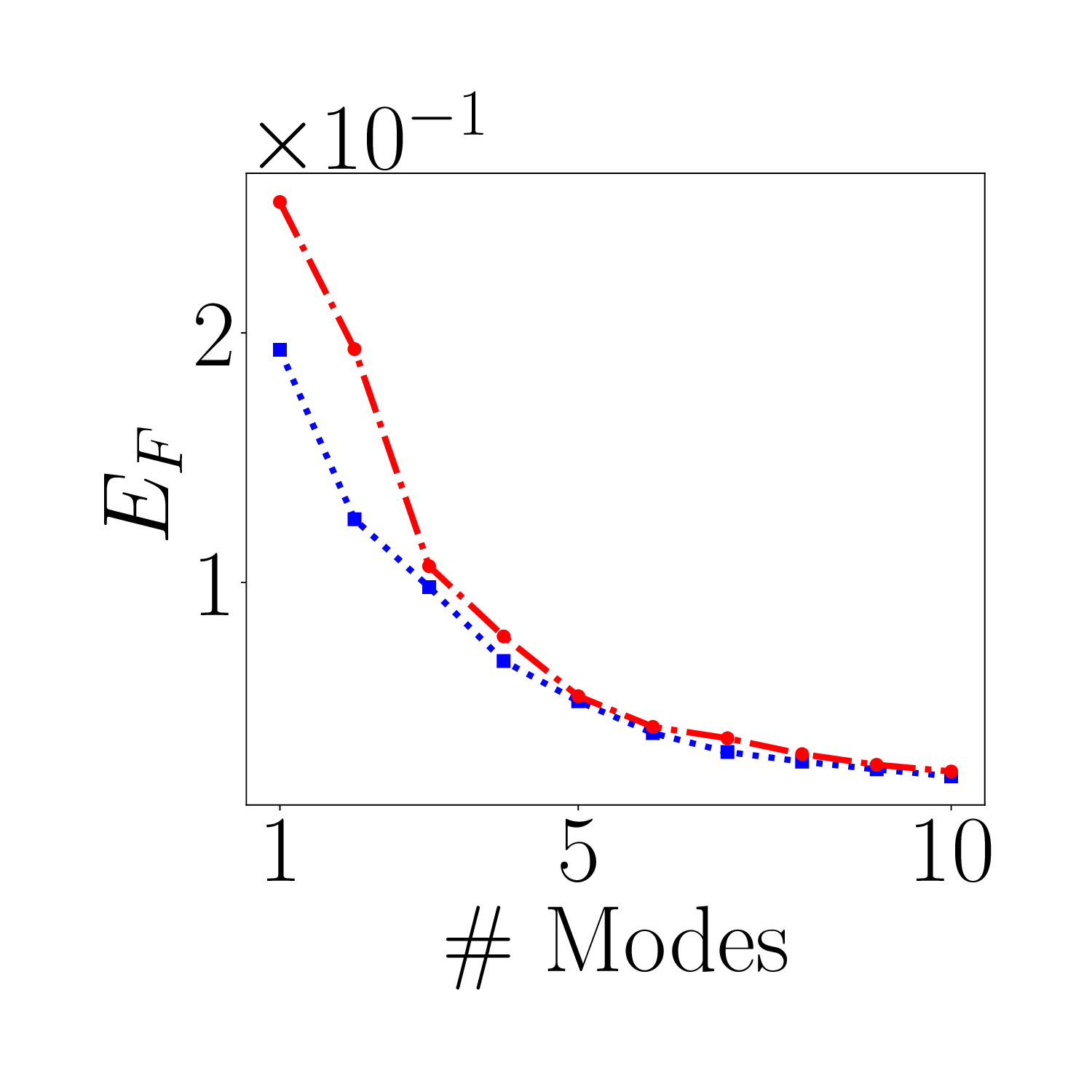}
    \includegraphics[width=.3\textwidth]{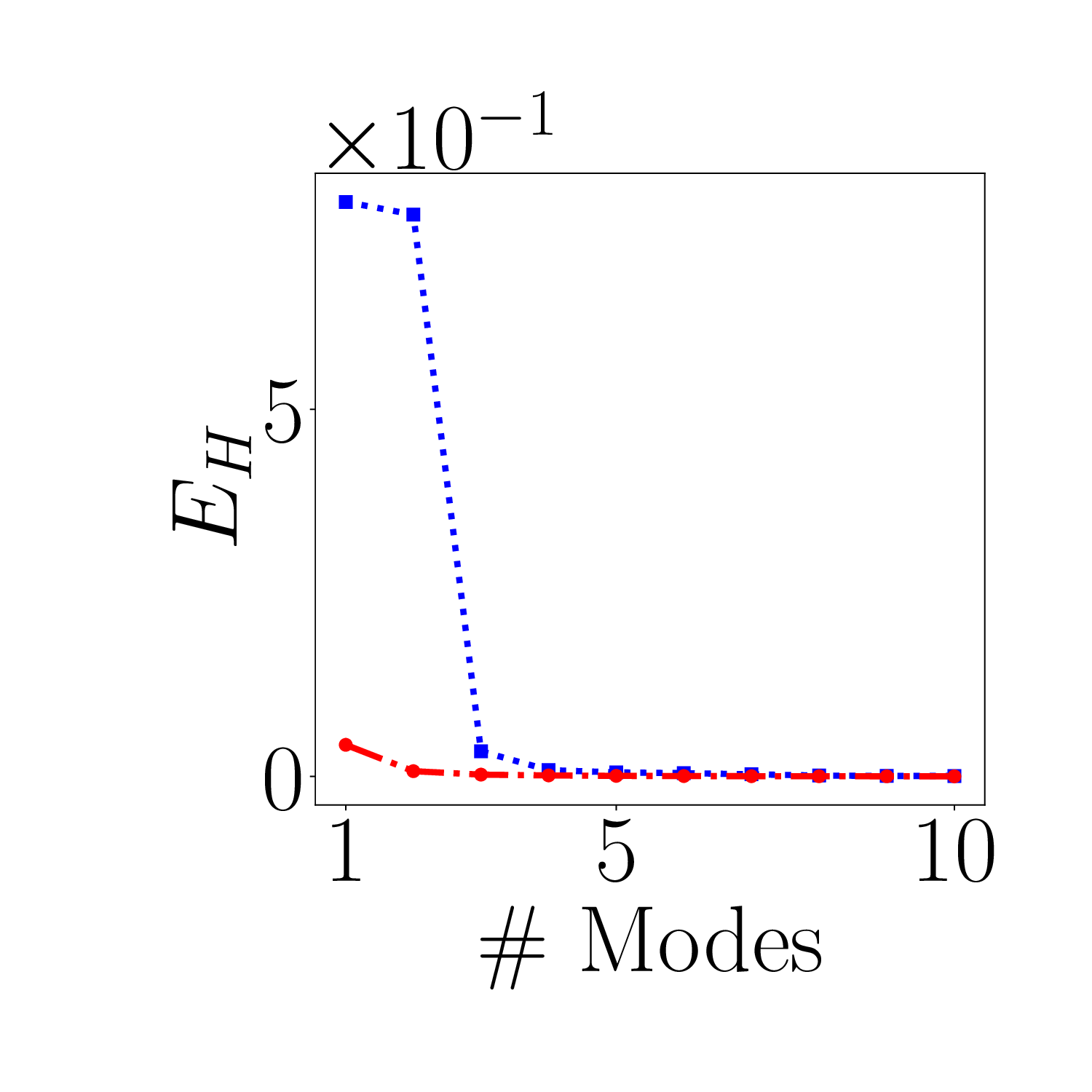} 
    \includegraphics[width=.3\textwidth]{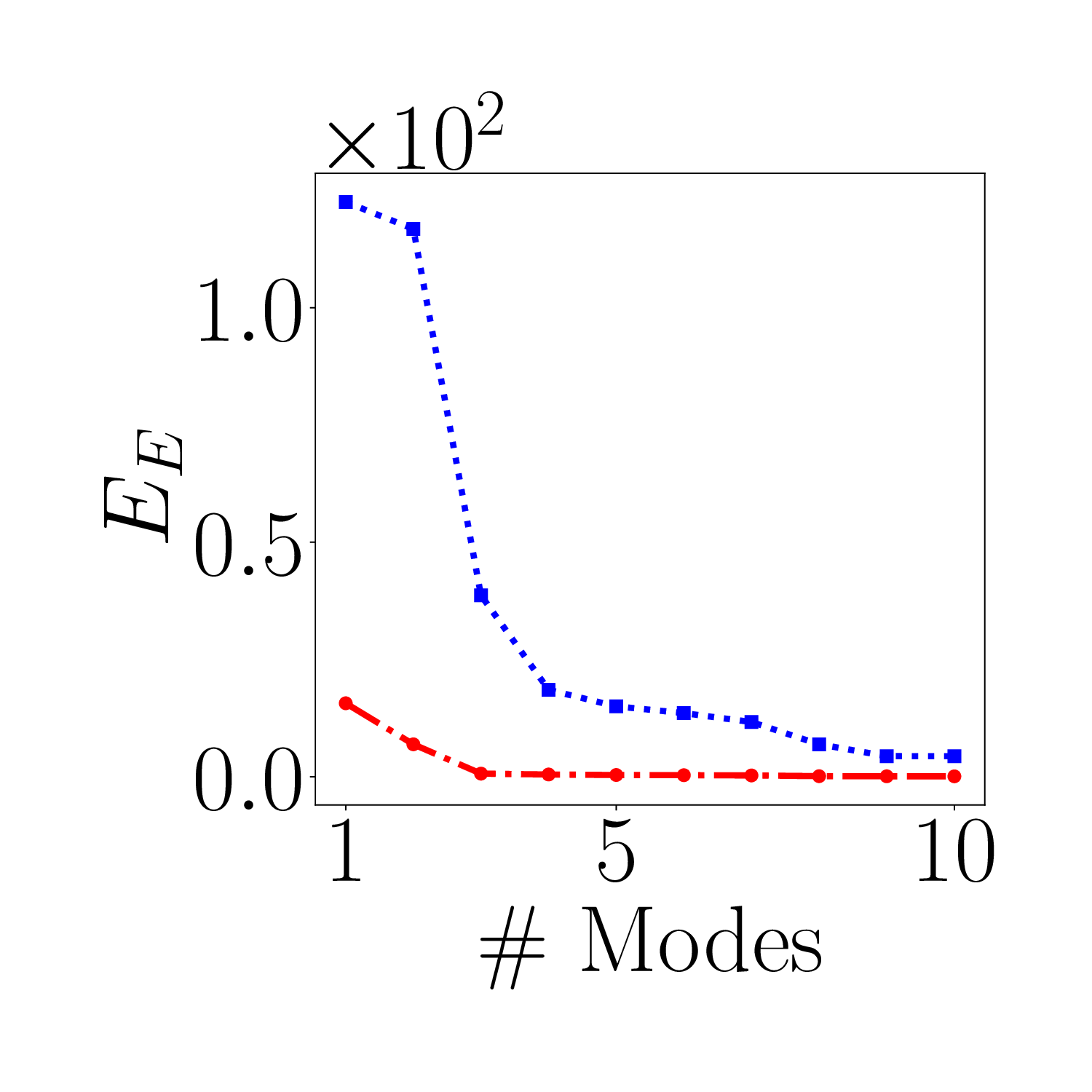} \\
b)
    \includegraphics[width=.3\textwidth]{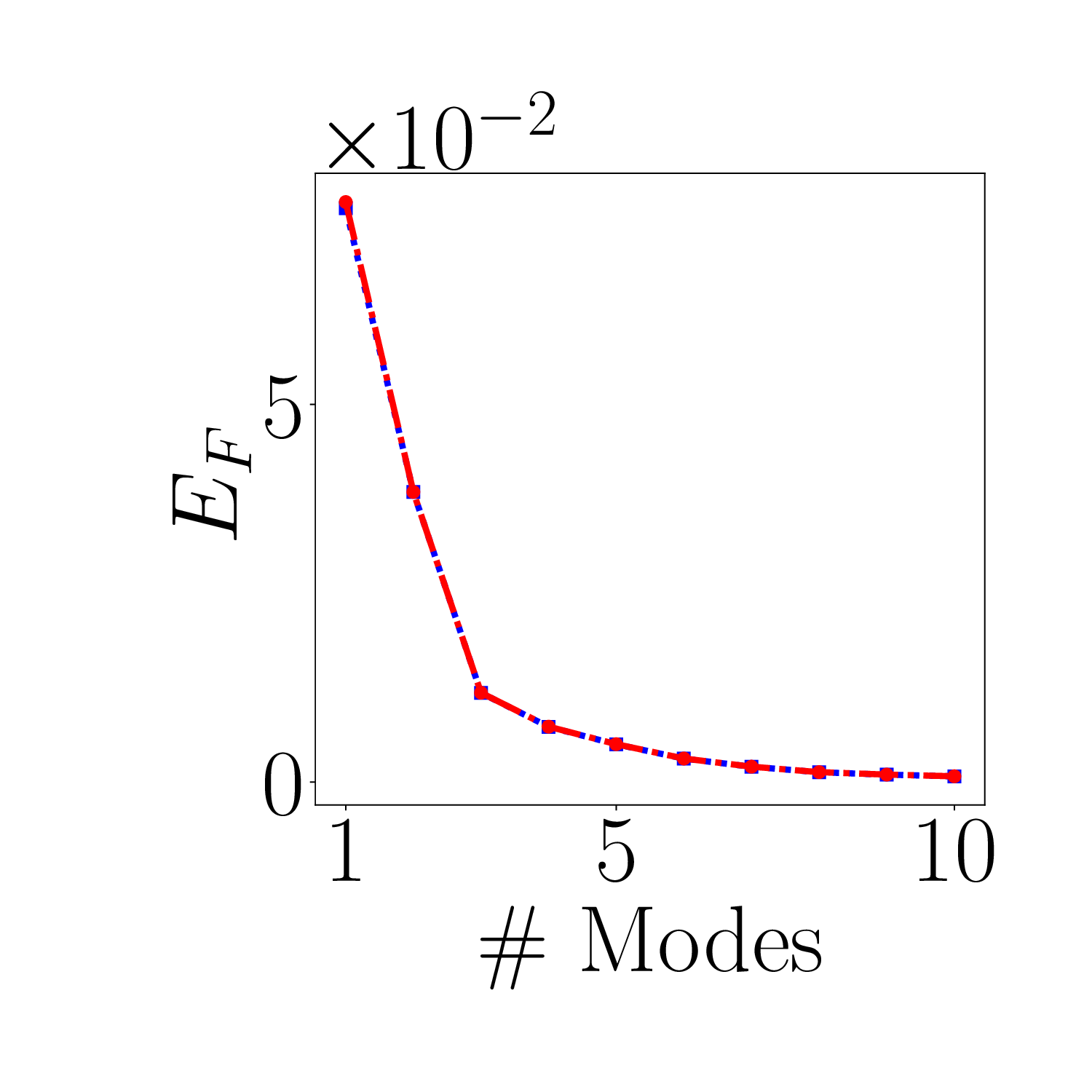}
    \includegraphics[width=.3\textwidth]{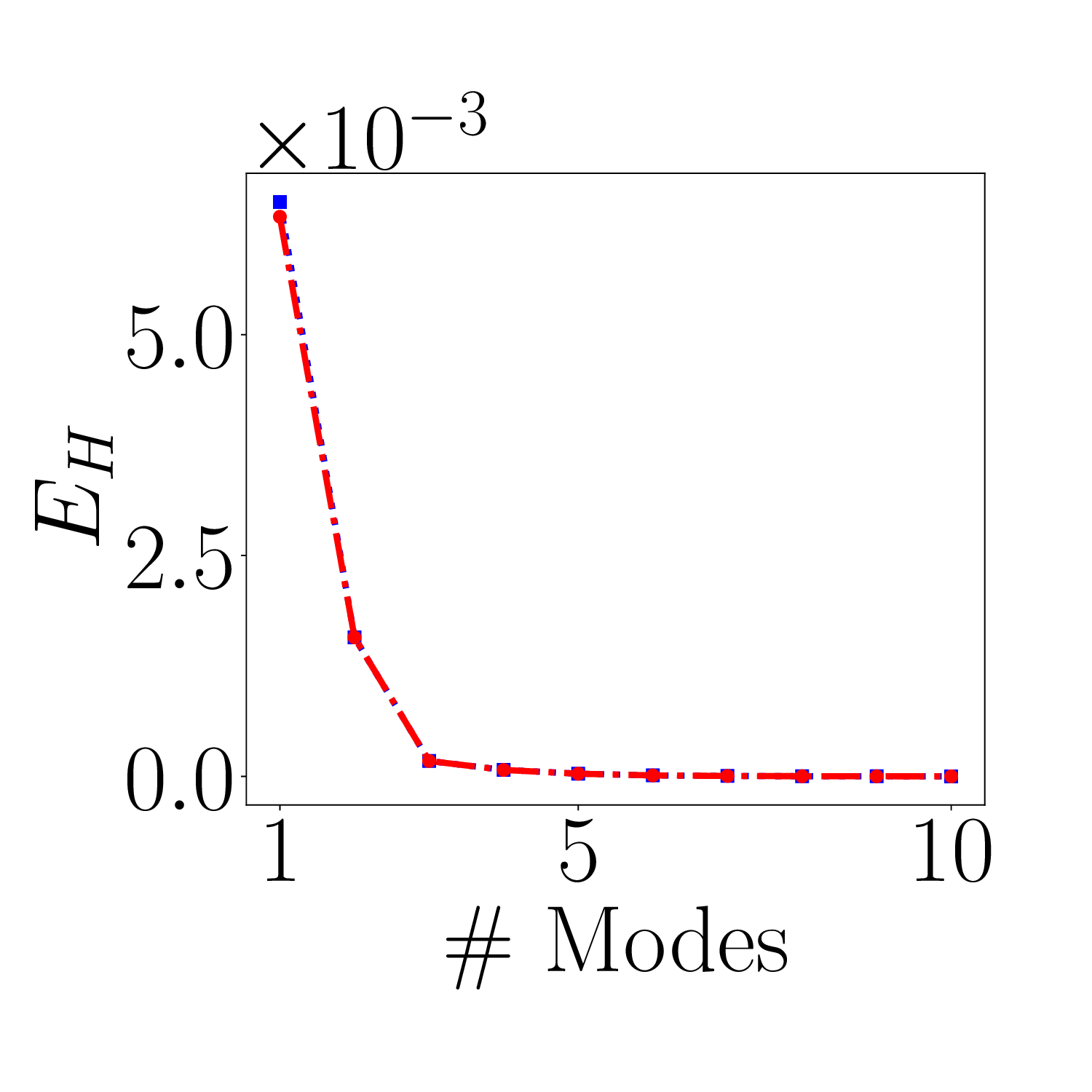} 
    \includegraphics[width=.3\textwidth]{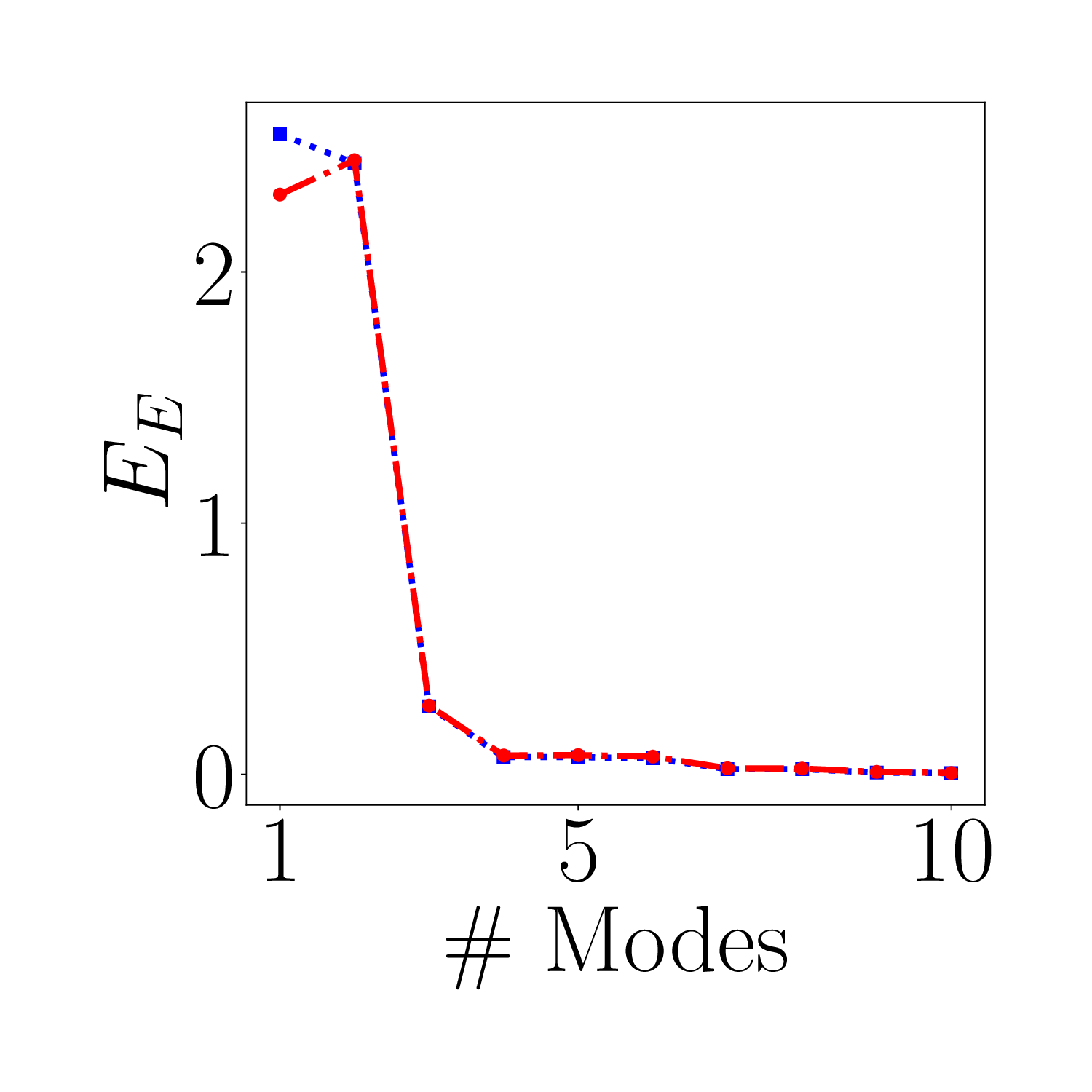} \\
c)
    \includegraphics[width=.3\textwidth]{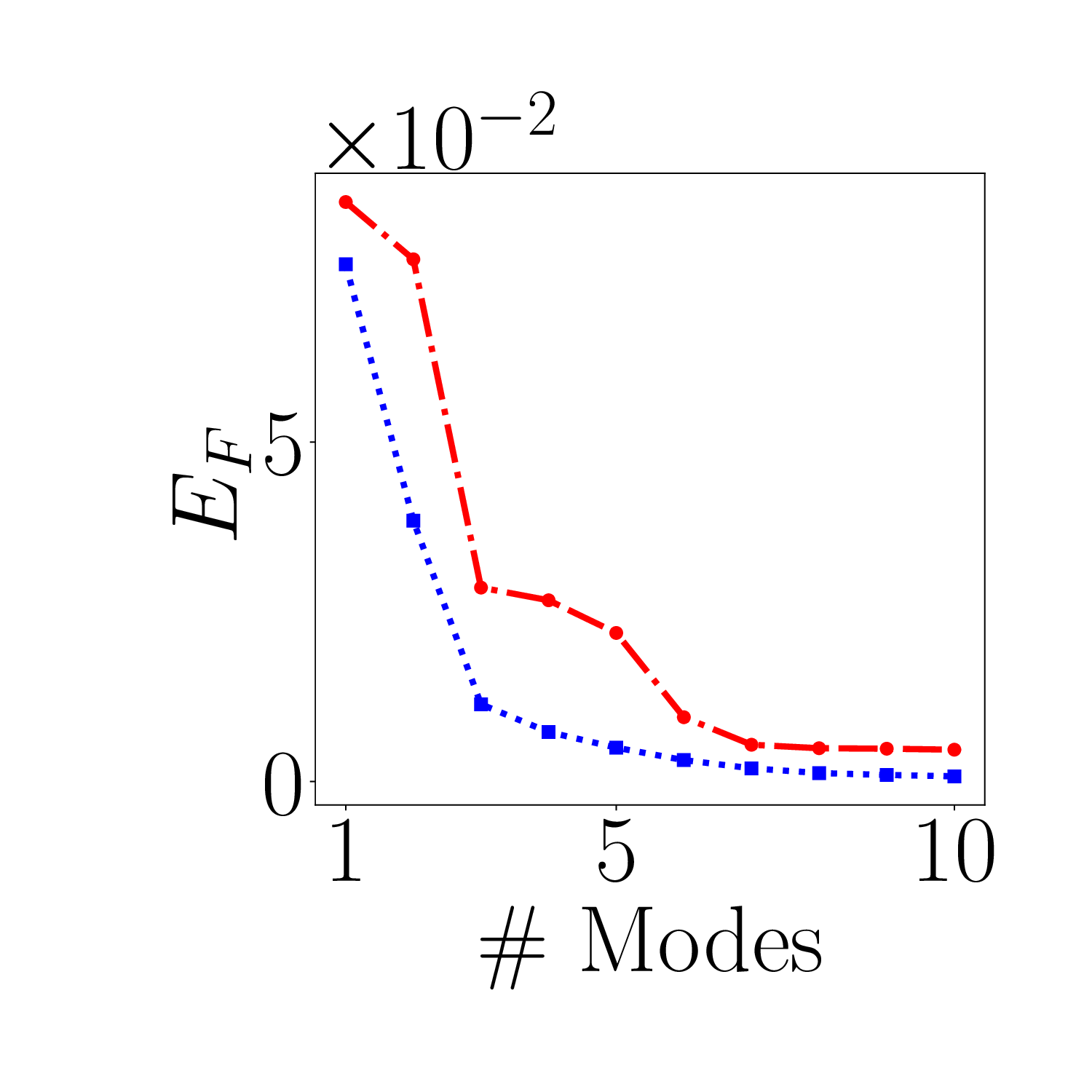}
    \includegraphics[width=.3\textwidth]{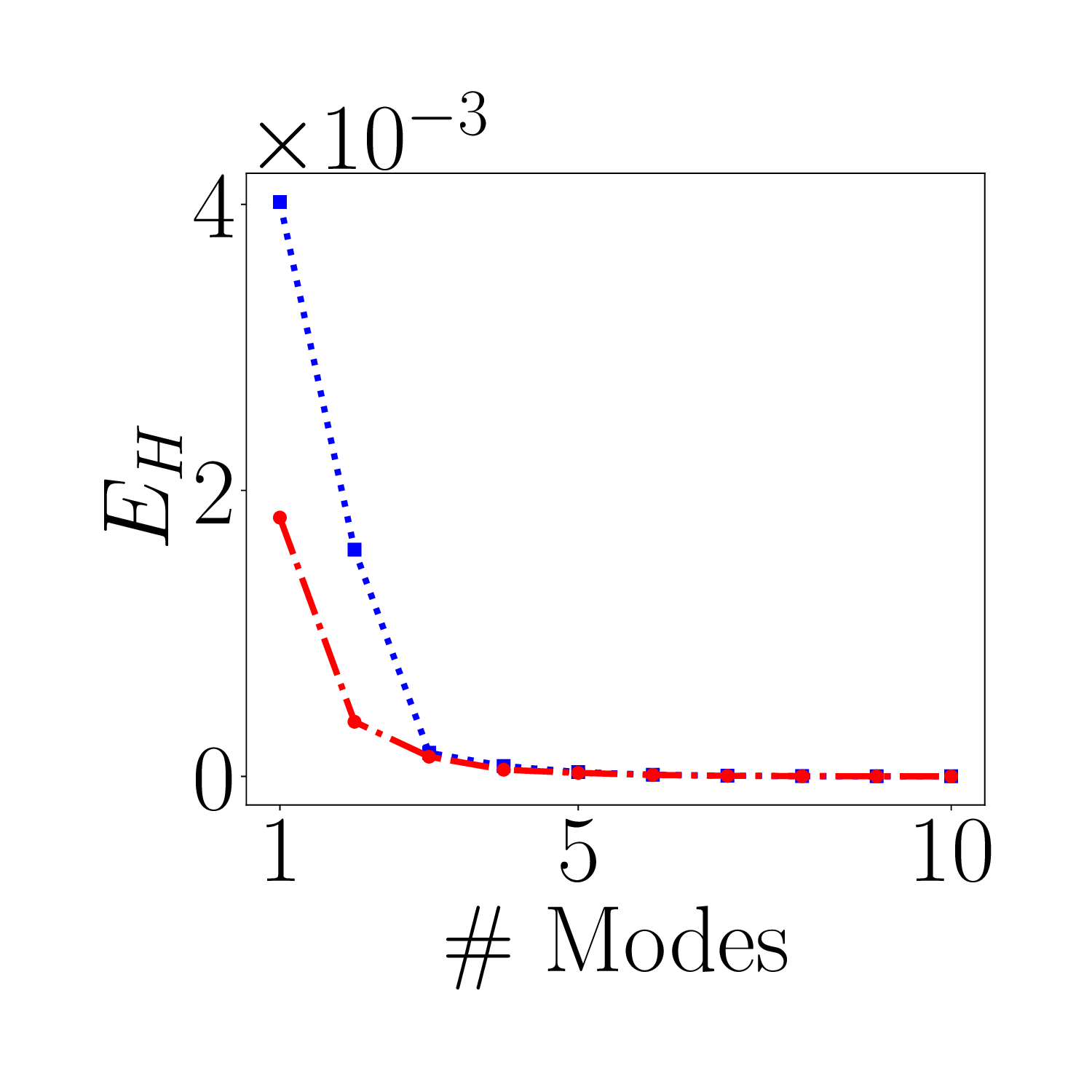}
    \includegraphics[width=.3\textwidth]{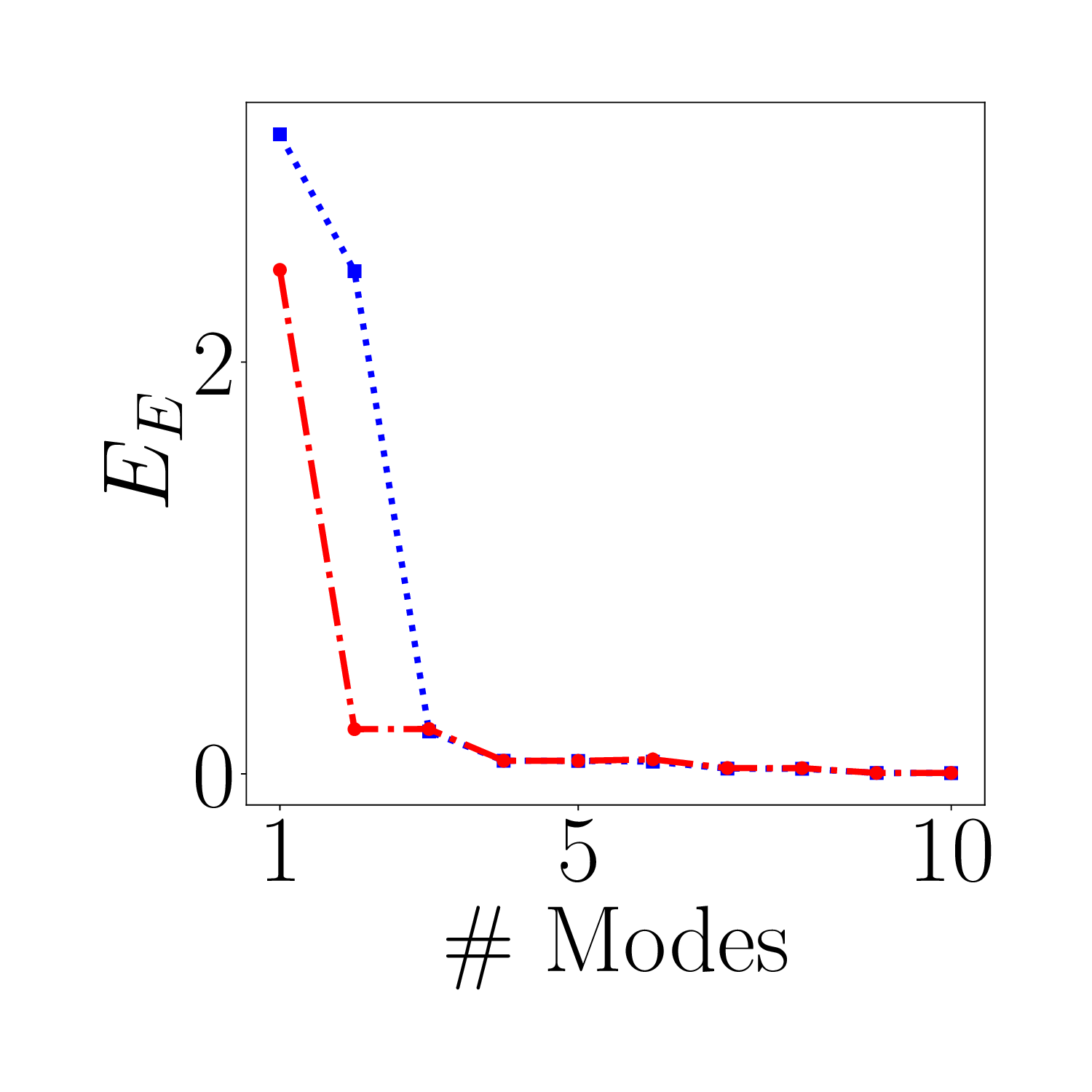}\\
    \includegraphics[width=.5\textwidth]{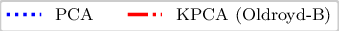}
    
    \caption{Comparing reconstruction performance using PCA and Kernel PCA (KPCA) with the Oldroyd-B kernel for simulations performed using the Oldroyd-B stress model for different values of $\theta$ in rows (a) $\theta = 0.001$, (b) $\theta = 1$ and (c) $\theta = 1000$. Error is measured using Eq.~\ref{E1} (left), Eq.~\ref{E2} (center), and Eq.~\ref{E3} (right).}
    \label{comparison_OldB_theta_modes}
\end{figure}

Now we select the leading $r=2$ modes and compare how well the transient dynamics of the flow are captured using the Oldroyd-B kernel and ordinary PCA-based methods.
In Fig.~\ref{energy_oldB_time} we plot the ground truth mechanical energy and the mechanical energies of our $2$-mode reconstructions across four values of $\theta$.
In each case, the reduction and reconstruction of the flow using the correct Oldroyd-B kernel accurately captures the total mechanical energy of the flow at all times, whereas the na\"{i}ve PCA-based method fails to captured the total energy.
The error using PCA is most profound for the small $\theta$ cases shown in Fig.~\ref{energy_oldB_time}(a) and Fig.~\ref{energy_oldB_time}(b).

\begin{figure}[!ht]
  \centering
a)  \includegraphics[scale=0.35]{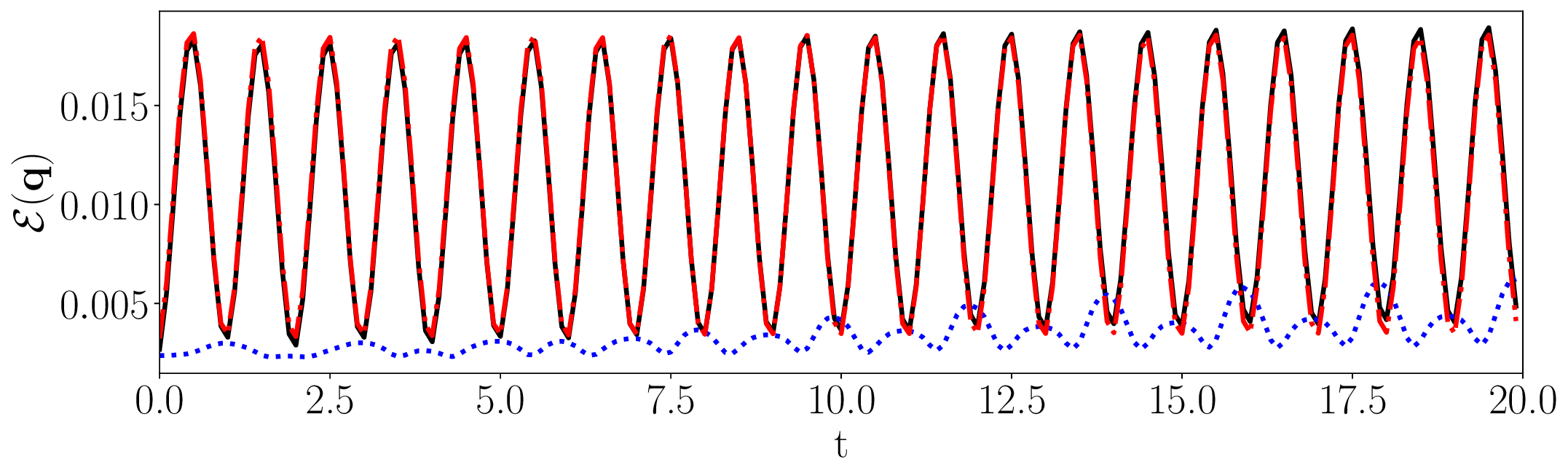}\\
b)  \includegraphics[scale=0.35]{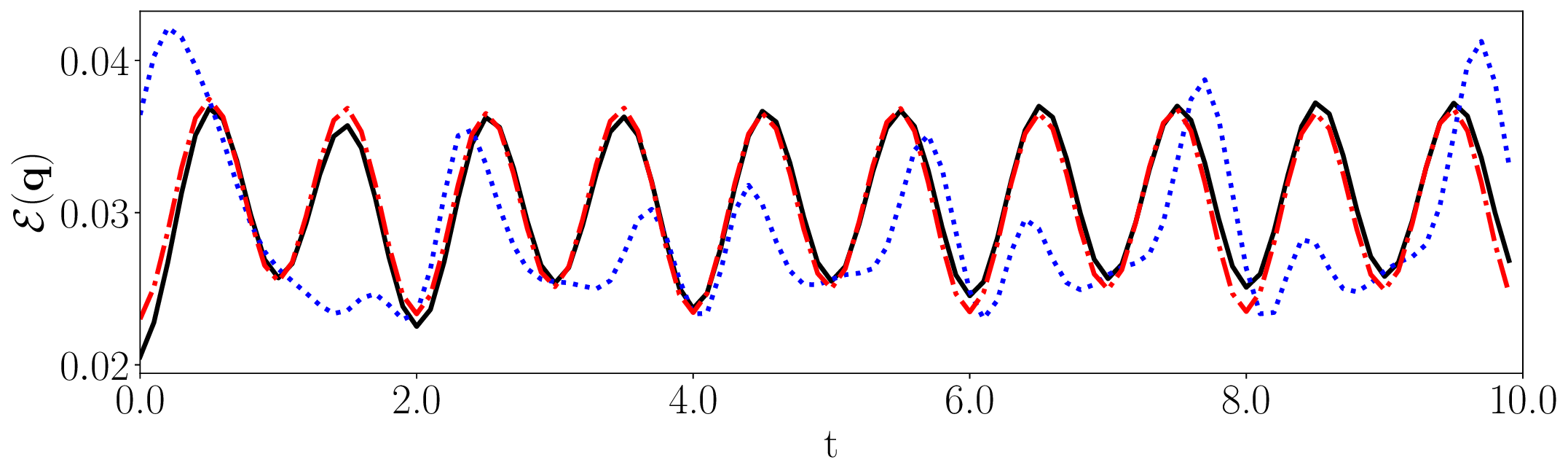}\\
c)  \includegraphics[scale=0.35]{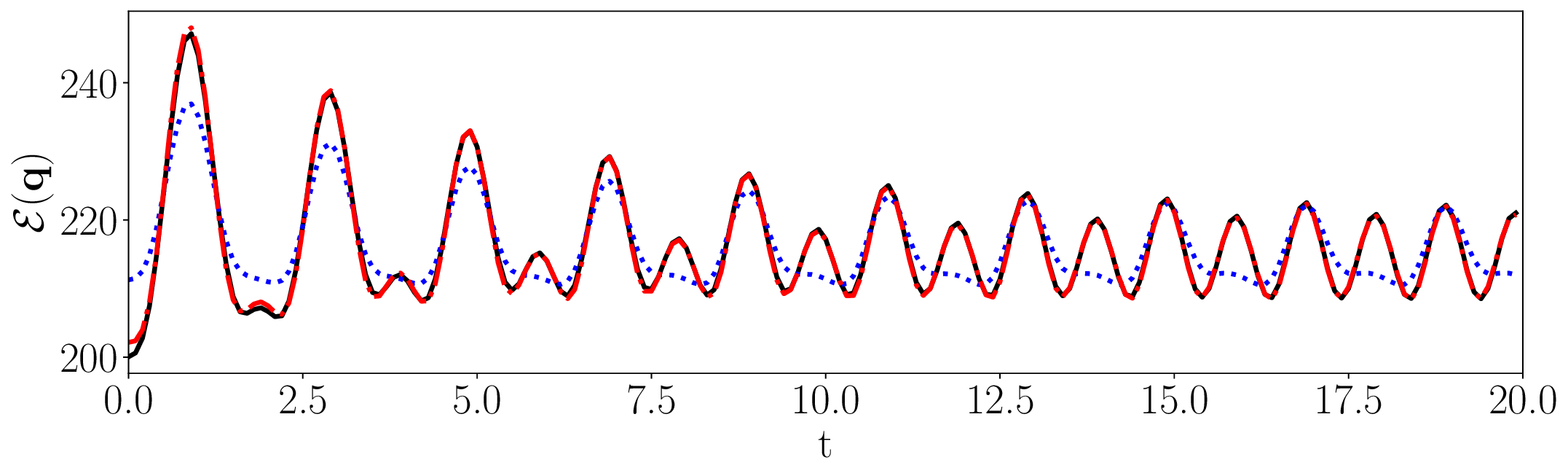}\\
d)  \includegraphics[scale=0.35]{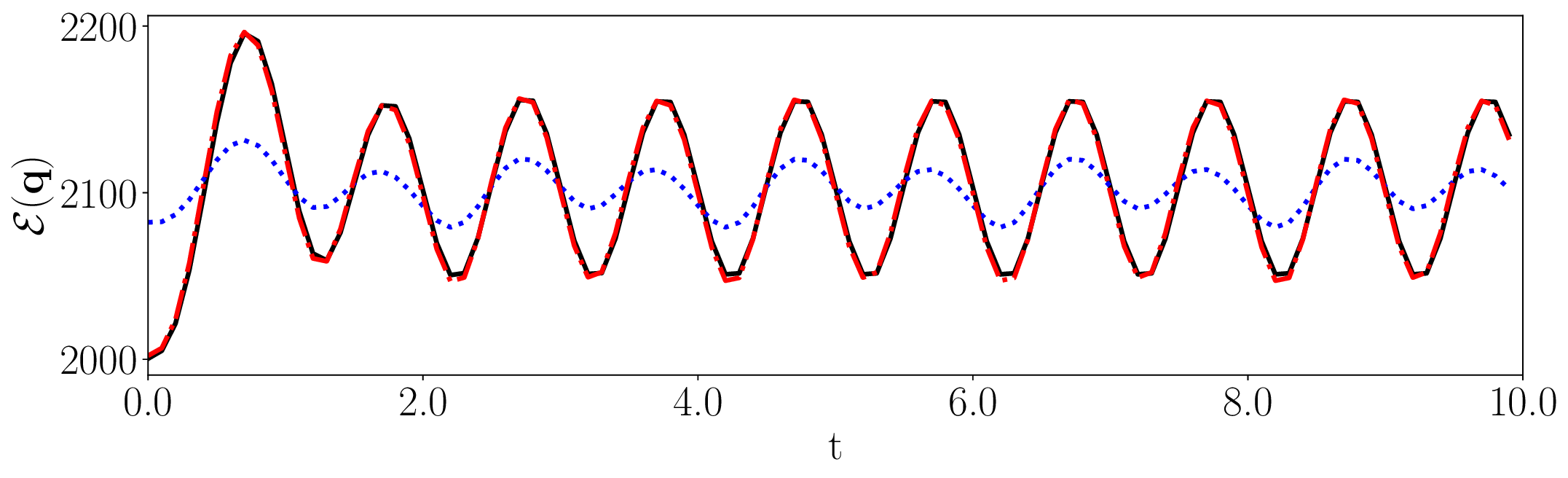}
  \includegraphics[width=.9\textwidth]{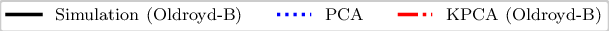}
  \caption{Total mechanical energy (\eqref{eqn:energy}) as function of time, reconstructed using $r=2$ principal components (modes):  
  a) $\theta = 0.001$, b) $\theta = 0.01$, c) $\theta = 100$ and d) $\theta = 1000$.
  }
  \label{energy_oldB_time}
\end{figure}

Finally, figure \ref{recons_OldB} provides a comparison between simulation and $2$-mode reconstructions for distribution of mechanical energy in a fixed snapshot at $t=2.5$ for the $\theta = 0.001$ case. 
Consistent with the previous discussions concerning the error plots in Fig.~\ref{energy_oldB_time}(a), the reconstructed total mechanical energy obtained using the Oldroyd-B kernel exhibits an exceptional agreement with the simulation, even when only two modes are retained. 
On the other hand, the na\"{i}ve PCA-based reconstruction fails to capture the distribution of mechanical energy in the snapshot.

\begin{figure}[!ht]
  \centering
\includegraphics[width=.9\textwidth]{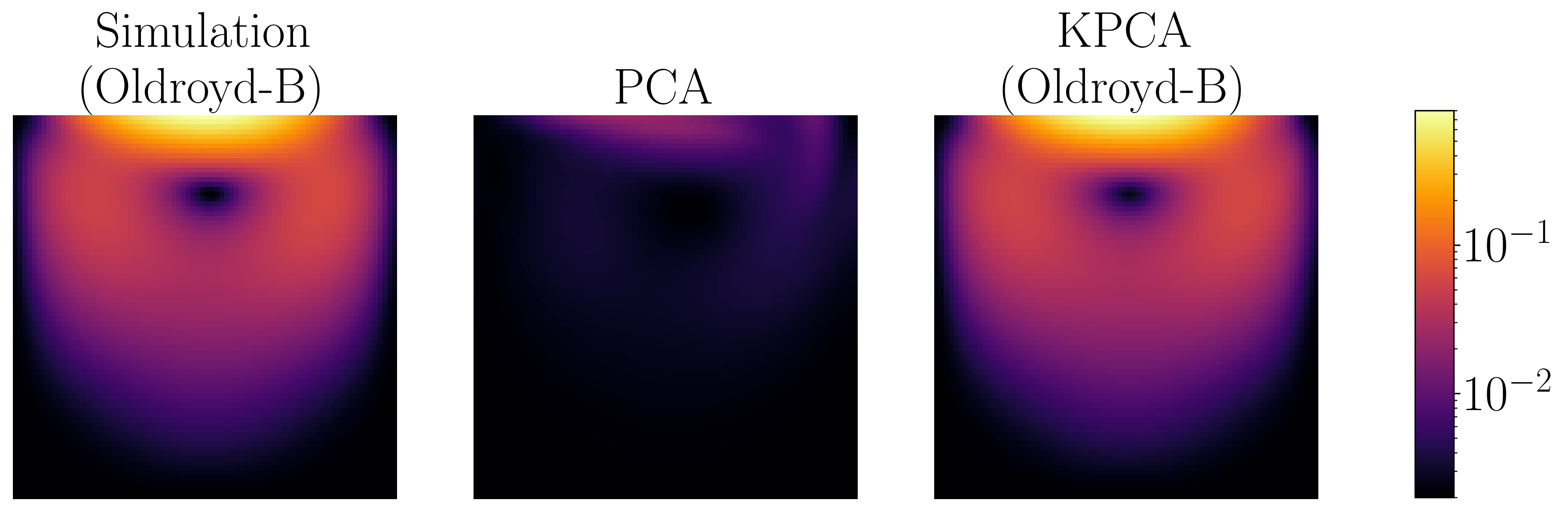}\\
  \caption{Spatial distribution of the total mechanical energy (\eqref{eqn:energy}) at time $t = 2.5$ for $\theta = 0.001$. Results obtained by the simulation (left), PCA (center) and KPCA (right). The reconstructed values were obtained using $r=2$ modes.
  }
  \label{recons_OldB}
\end{figure}

\subsection{Results using nonlinear stress models}

This section explores the use of nonlinear stress models and their associated kernel functions.
We study whether, and to what extent KPCA-based dimensionality reduction and reconstruction benefits from choosing the kernel function in accordance with the underlying stress model used to generate the simulation data.
To do this, we fix $\theta = 1$ by setting $\Rey = 1$, $\Wei = 0.5$, and $\beta=0.5$, and we simulate the lid-driven cavity flow using a variety of nonlinear stress models across a range of parameter values modulating the degree of nonlinearity.
As above, we compare the performance of KPCA-based dimensionality reduction and reconstruction using the kernel function matching the underlying simulation against kernel functions that do not.
In particular, we are interested in when a nonlinear kernel function is necessary, or whether there are parameter regimes where we can simply use the Oldroyd-B kernel, which has the benefit of an explicit bijective feature map given by Eq.~\eqref{eqn:OldB_explicit_feature_map}.
The error $E_H$ in Eq.~\ref{E2} is always computed using the kernel associated with the underlying stress model.

In Fig.~\ref{fene-p_merged}, the simulation was performed using the FENE-P stress model with parameter $L^2 = 5$, and we compare the corresponding kernel against Oldroyd-B.
While the Oldroyd-B kernel yields better reconstructions in the $L^2(\Omega)$ sense measured by $E_F$, the reconstructions using small numbers of kernel principal components, or ``modes'', are worse than those obtained using the FENE-P kernel function in the energetic sense measured by $E_H$ and $E_E$.

\begin{figure}[!ht]
  \centering
 a)
    \includegraphics[width=.3\textwidth]{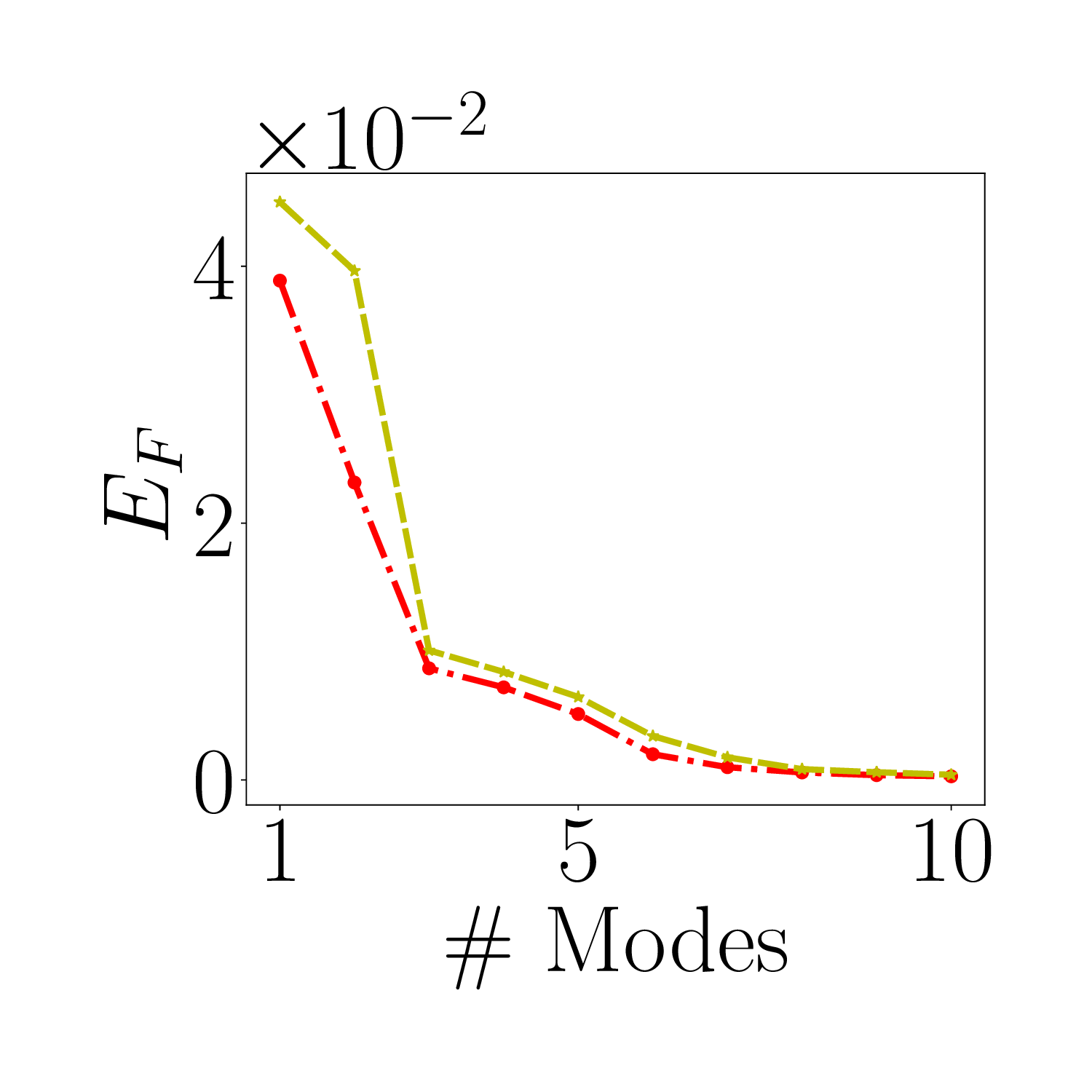}
    \includegraphics[width=.3\textwidth]{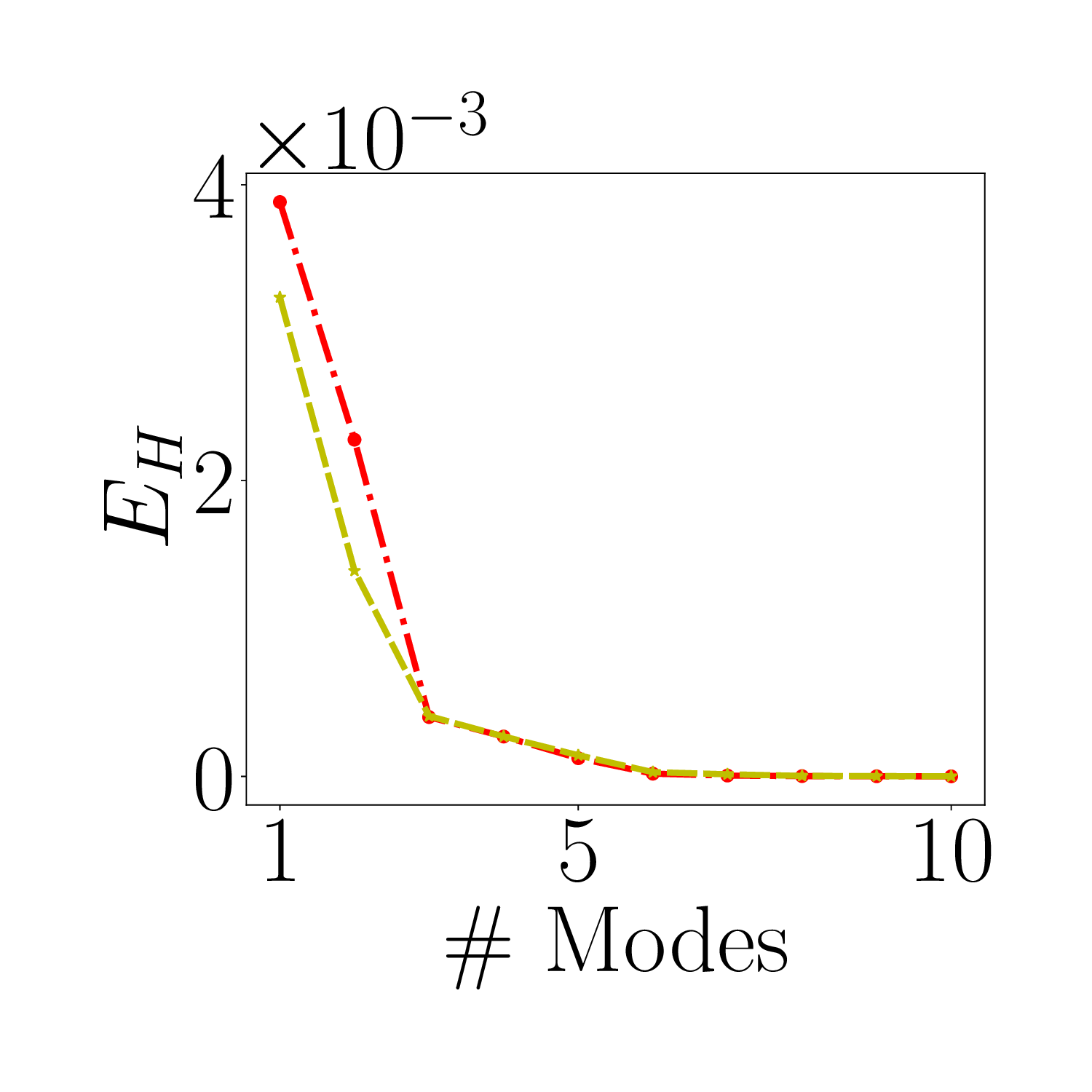} 
    \includegraphics[width=.3\textwidth]{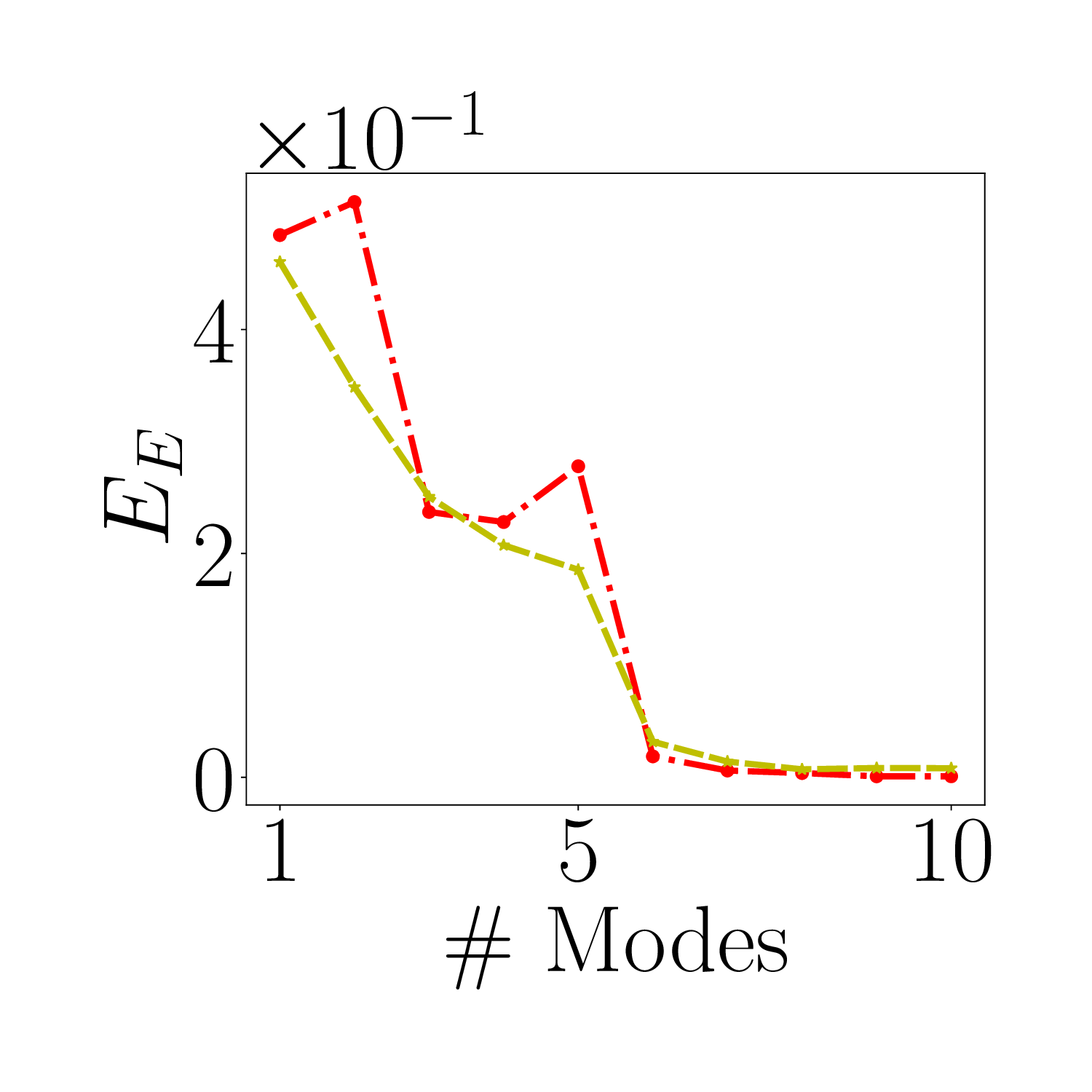} \\
b) \includegraphics[width=.9\textwidth]{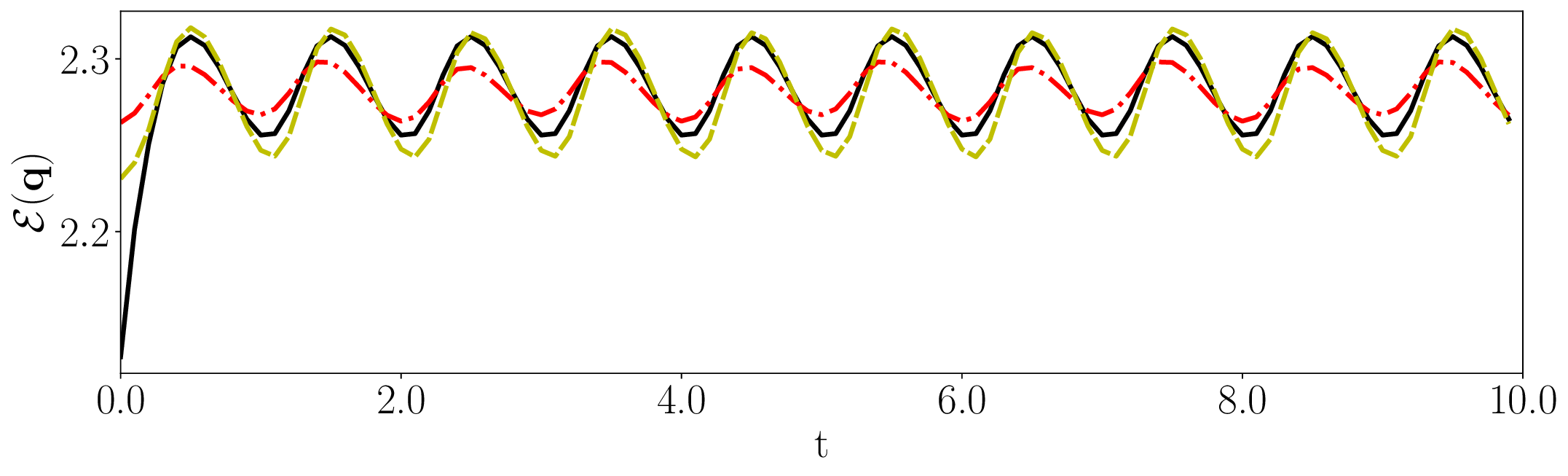}\\
   \includegraphics[width=.9\textwidth]{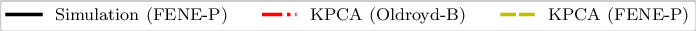}
  \caption{Results for the $\theta=1$ case using the nonlinear FENE-P model with $L^2= 5$. Row (a) shows reconstruction errors computed using Eq.~\ref{E1} (left), Eq.~\ref{E2} (center) and Eq.~\ref{E3} (right). Row (b) shows the total mechanical energy (\eqref{eqn:energy}) as function of time for fields reconstructed using $r=2$ kernel principal components (modes). 
  }
  \label{fene-p_merged}
\end{figure}

We repeat this experiment in Fig.~\ref{PTT_merged} using the nonlinear PTT stress model with parameter $\varepsilon = 0.5$ for the simulation.
A similar trend is observed; matching the kernel function to the stress model is beneficial for reconstruction using small numbers of modes in the energetic sense measured by $E_H$ and $E_E$.
This improvement is dramatic when $r=2$ modes are used for reconstruction, as illustrated by the time histories of total mechanical energy plotted in Fig.~\ref{PTT_merged}(b).
However, for intermediate numbers of modes, the simple Oldroyd-B kernel slightly outperforms the nonlinear PTT kernel in all error metrics.

\begin{figure}[!ht]
  \centering
 a)
    \includegraphics[width=.3\textwidth]{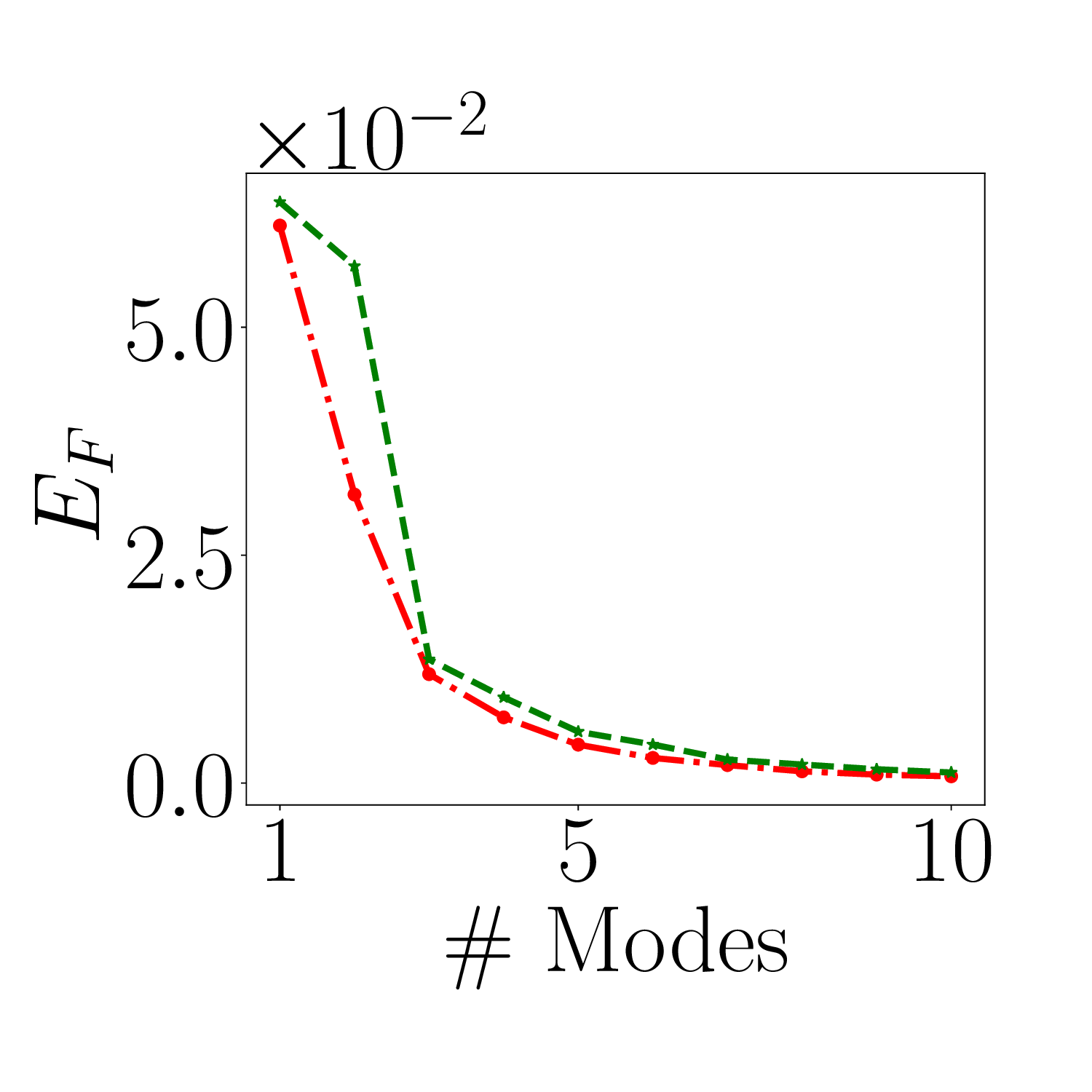}
    \includegraphics[width=.3\textwidth]{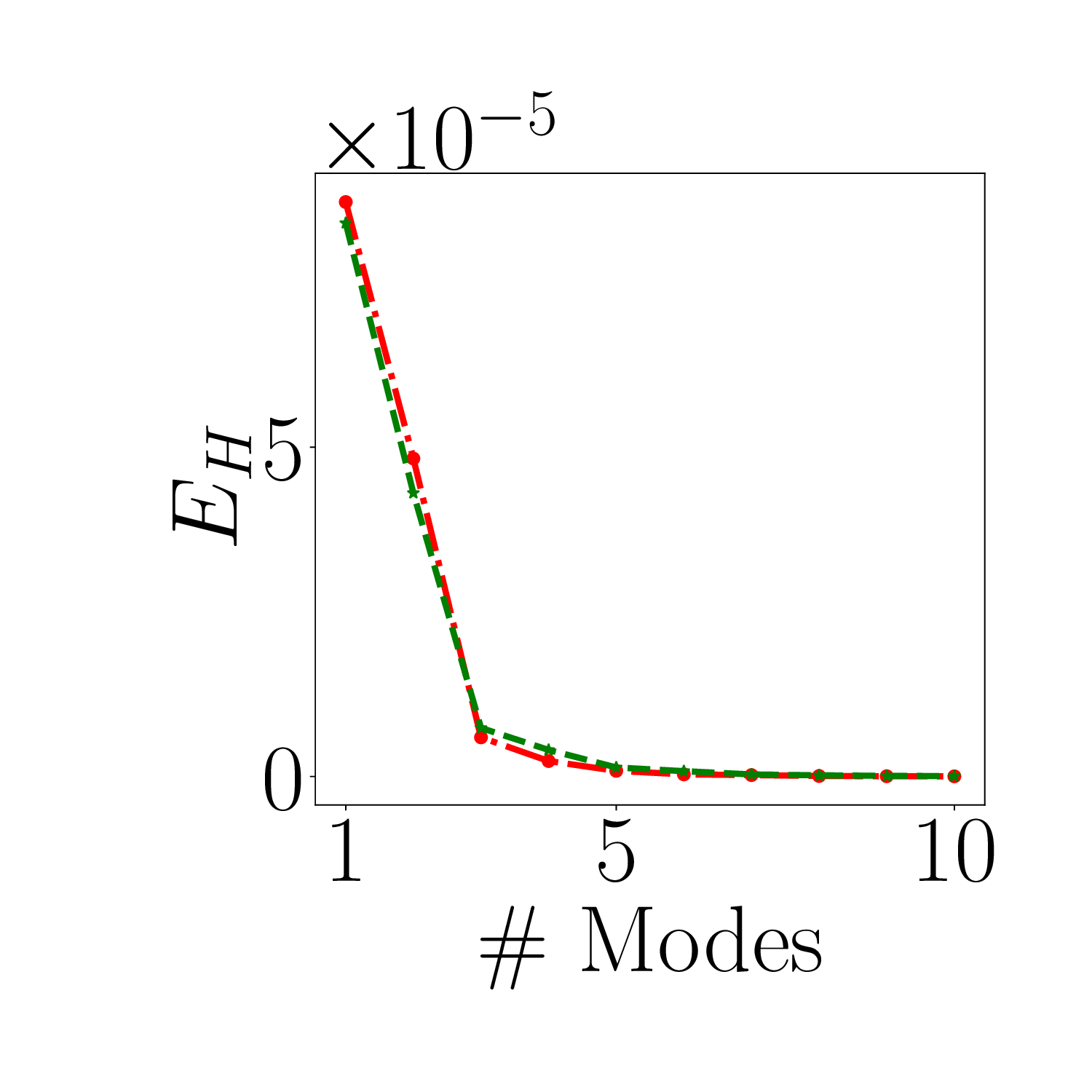} 
    \includegraphics[width=.3\textwidth]{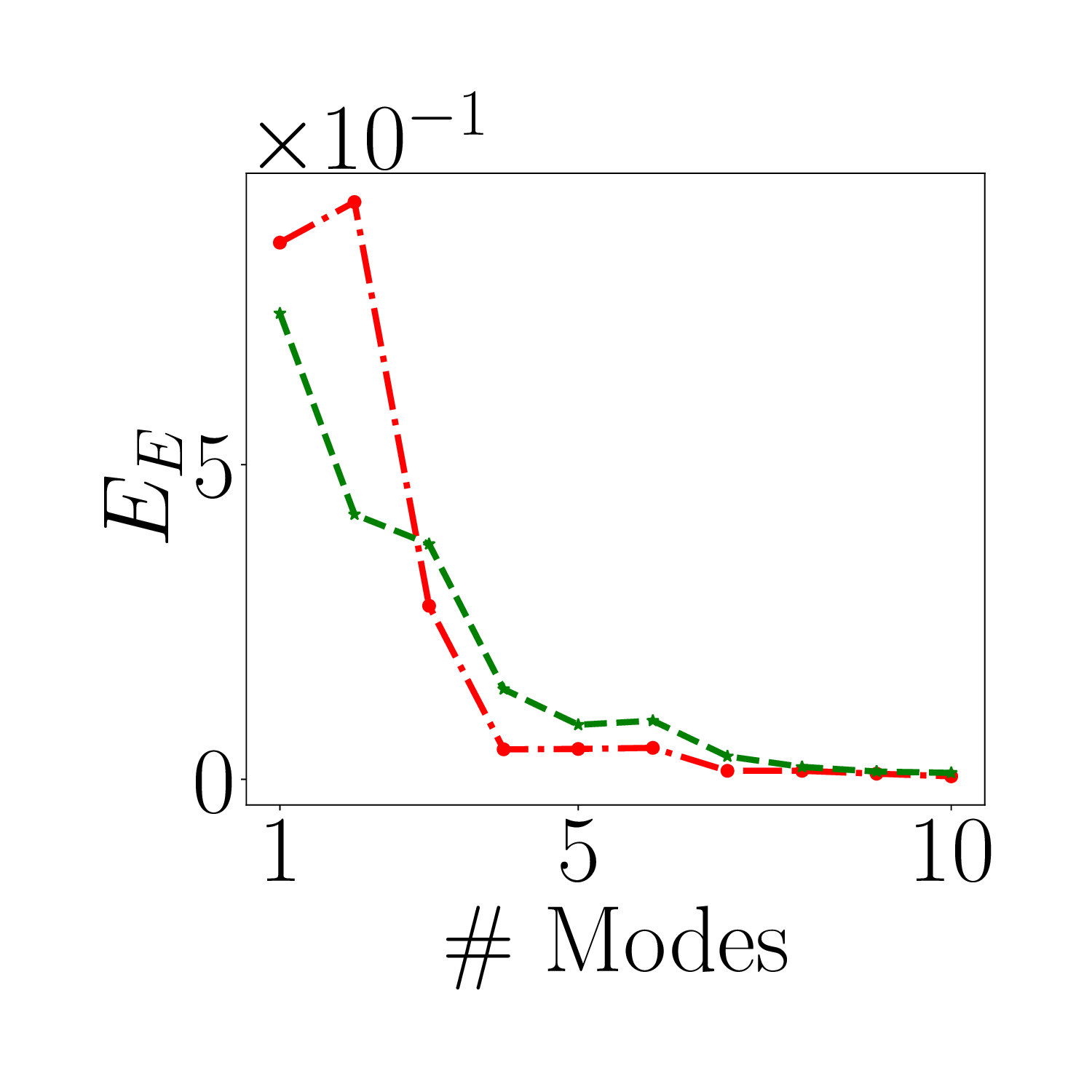} \\
b)  \includegraphics[width=.9\textwidth]{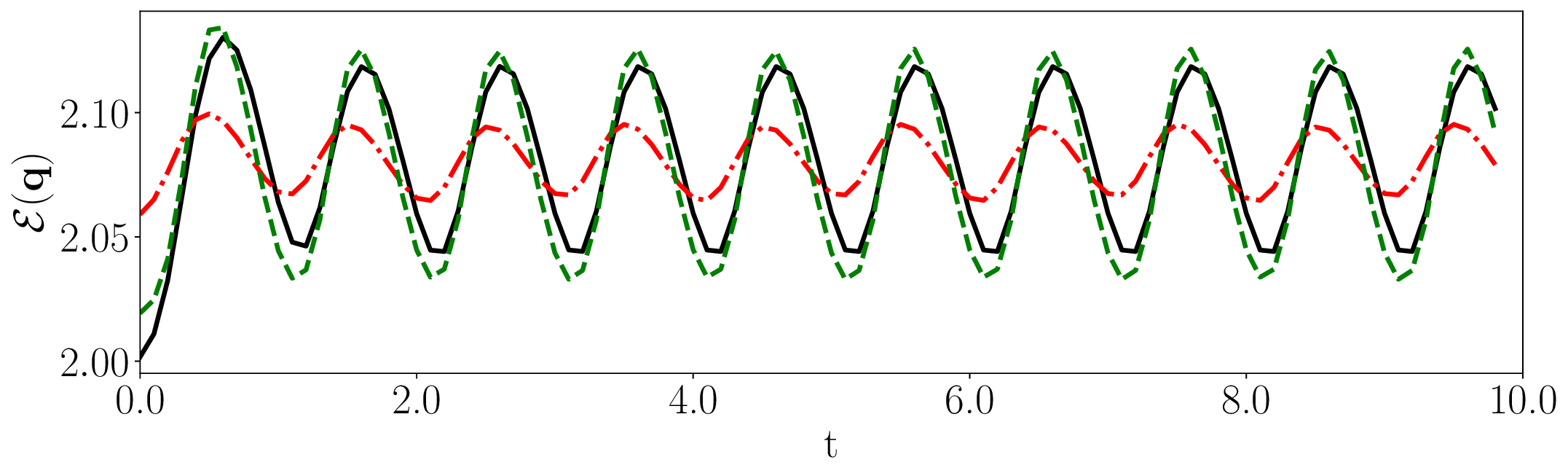}
    \includegraphics[width=.9\textwidth]{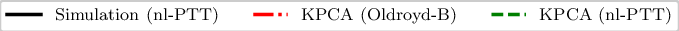}\\
  \caption{Analogue of Fig.~\ref{fene-p_merged} using the nonlinear PTT model (denoted nl-PTT) with $\varepsilon=0.5$.}
  \label{PTT_merged}
\end{figure} 

In Fig.~\ref{G_merged} we present analogous results for the Giesekus model with $\alpha=1$.
Again, for small numbers of modes, using the corresponding Giesekus kernel function tends to yield slight improvements over the Oldroyd-B kernel in the energetic sense measured by $E_H$ and $E_E$.
The performance using the two kernel functions is nearly the same when larger numbers of modes are used to reconstruct.

\begin{figure}[!ht]
  \centering
 a)
    \includegraphics[width=.3\textwidth]{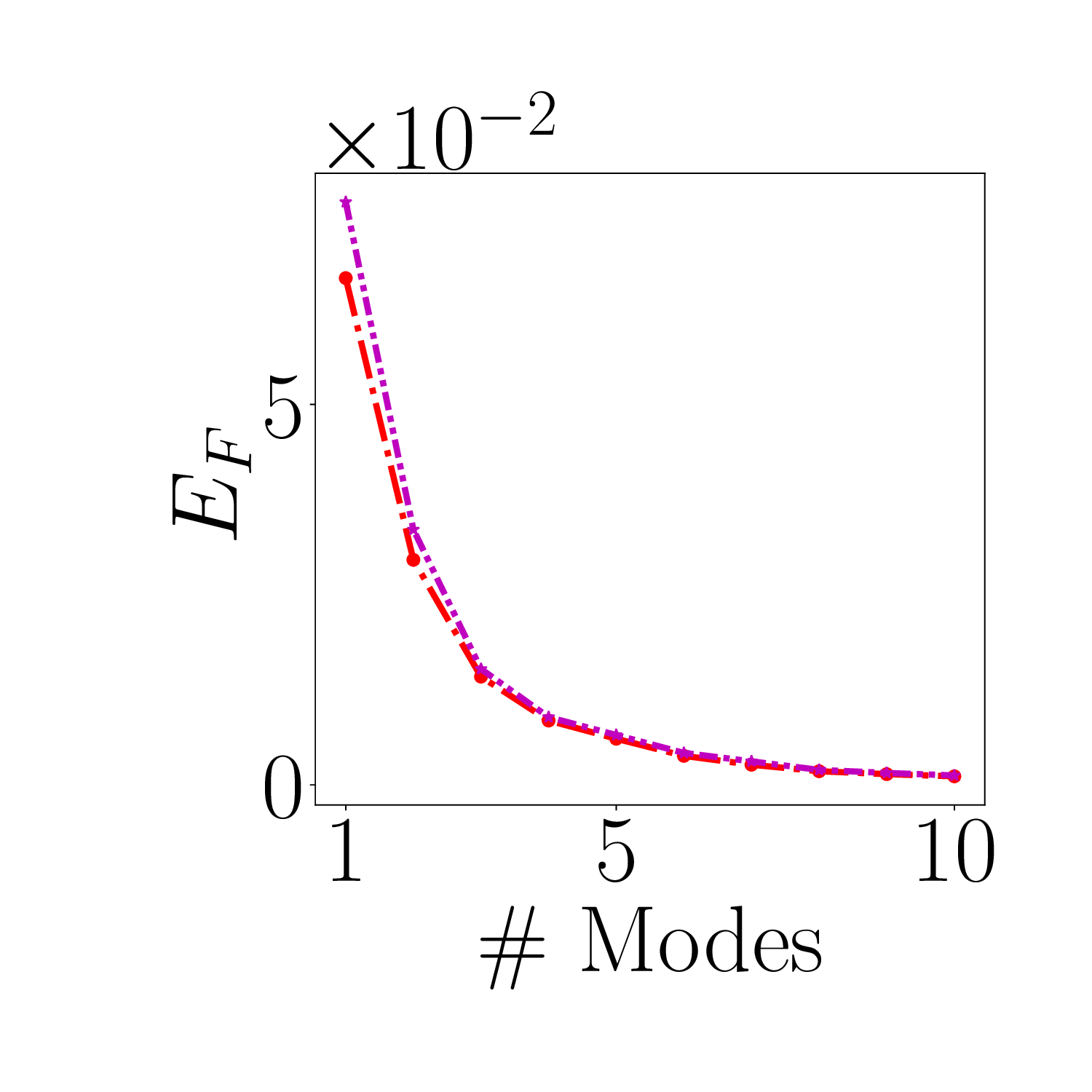}
    \includegraphics[width=.3\textwidth]{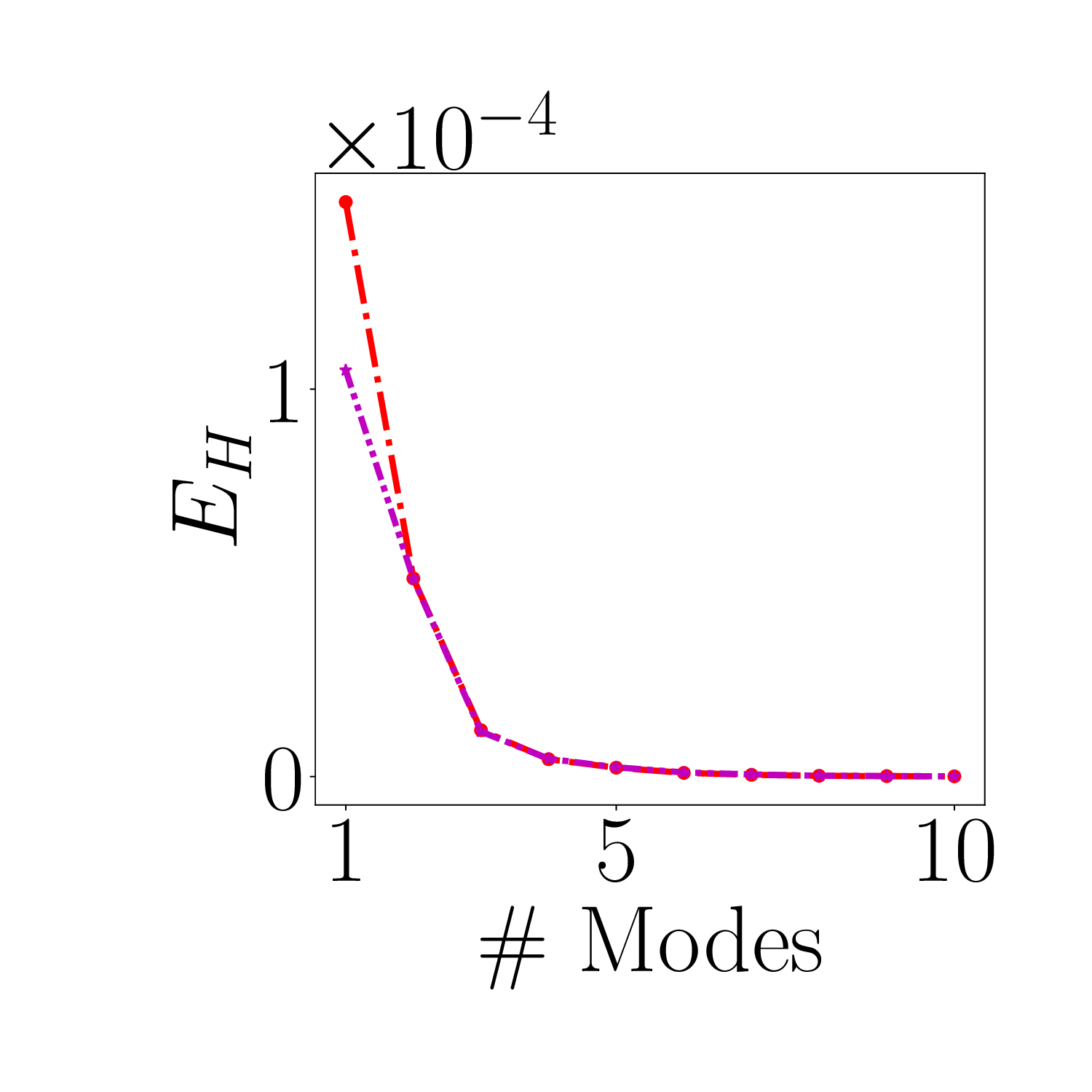} 
    \includegraphics[width=.3\textwidth]{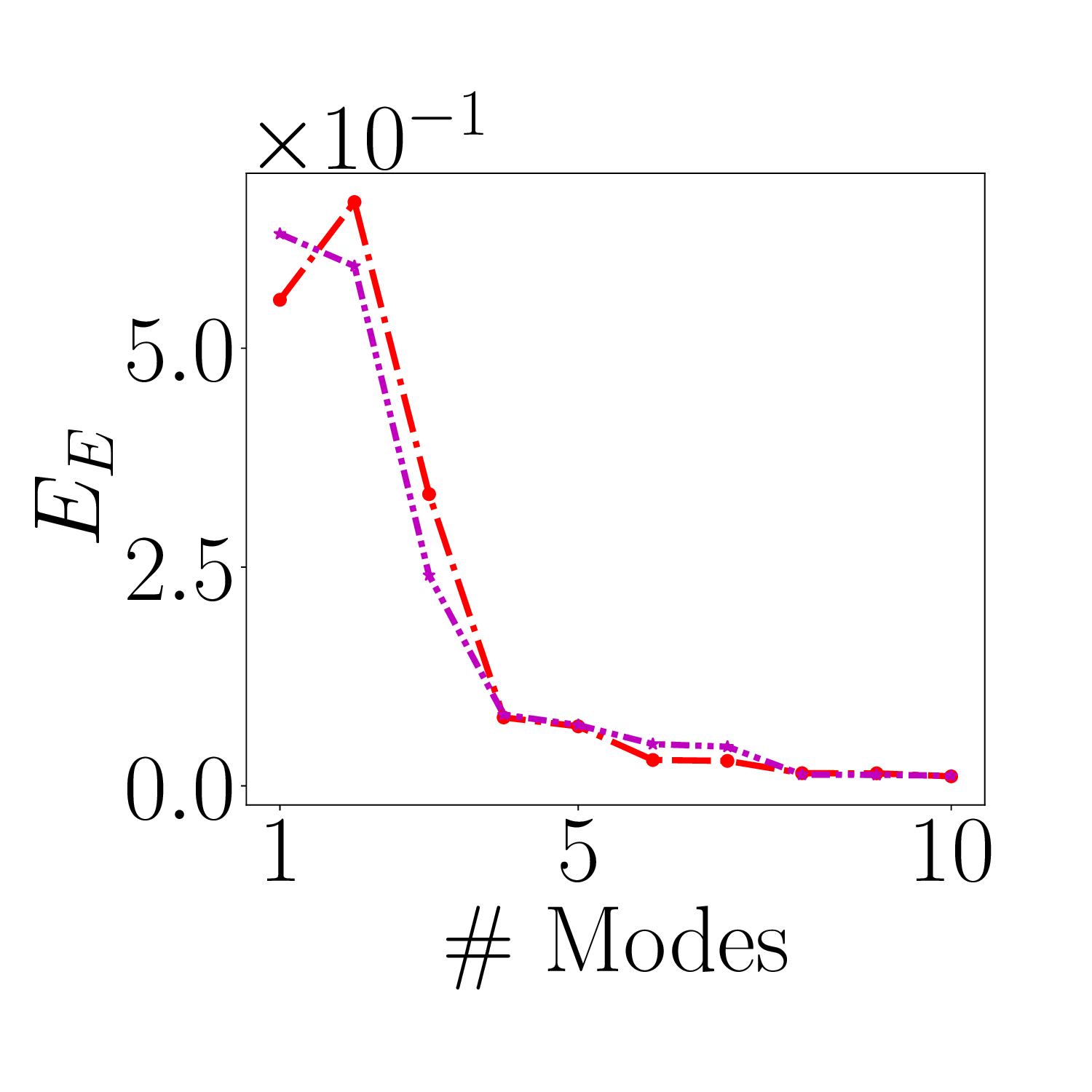} \\
b) \includegraphics[width=.9\textwidth]{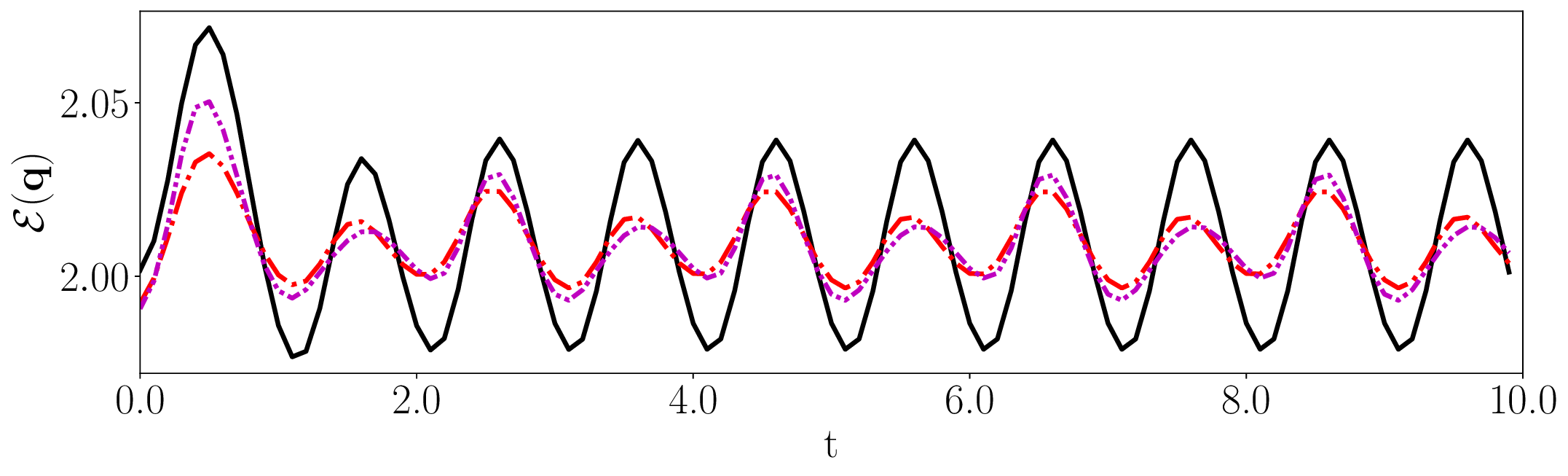}\\
   \includegraphics[width=.9\textwidth]{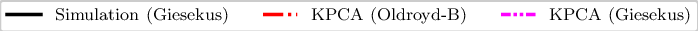}
  \caption{Analogue of Fig.~\ref{fene-p_merged} using the Giesekus model with $\alpha=1$.}
  \label{G_merged}
\end{figure}

We study the effect of the parameters in the FENE-P and nonlinear PTT stress models in Fig.~\ref{PTT_param_variantion}.
The number of kernel principal components (modes) used for reconstruction is fixed at $r=2$.
These nonlinear stress models become equivalent to the Oldroyd-B model in the limits $1/L^2 \to 0$ and $\varepsilon \to 0$.
As expected, the choice of kernel is unimportant at small values of these parameters and becomes more significant at larger values.
In both cases, it becomes important to use the kernel function matching the nonlinear stress model of the simulation only when the degree of nonlinearity is sufficiently high.
We do not present analogous results using the Giesekus model due to numerical challenges associated with simulating the flow at larger values of the parameter $\alpha$.

\begin{figure}[!ht]
  \centering
   a) \includegraphics[width=.3\textwidth]{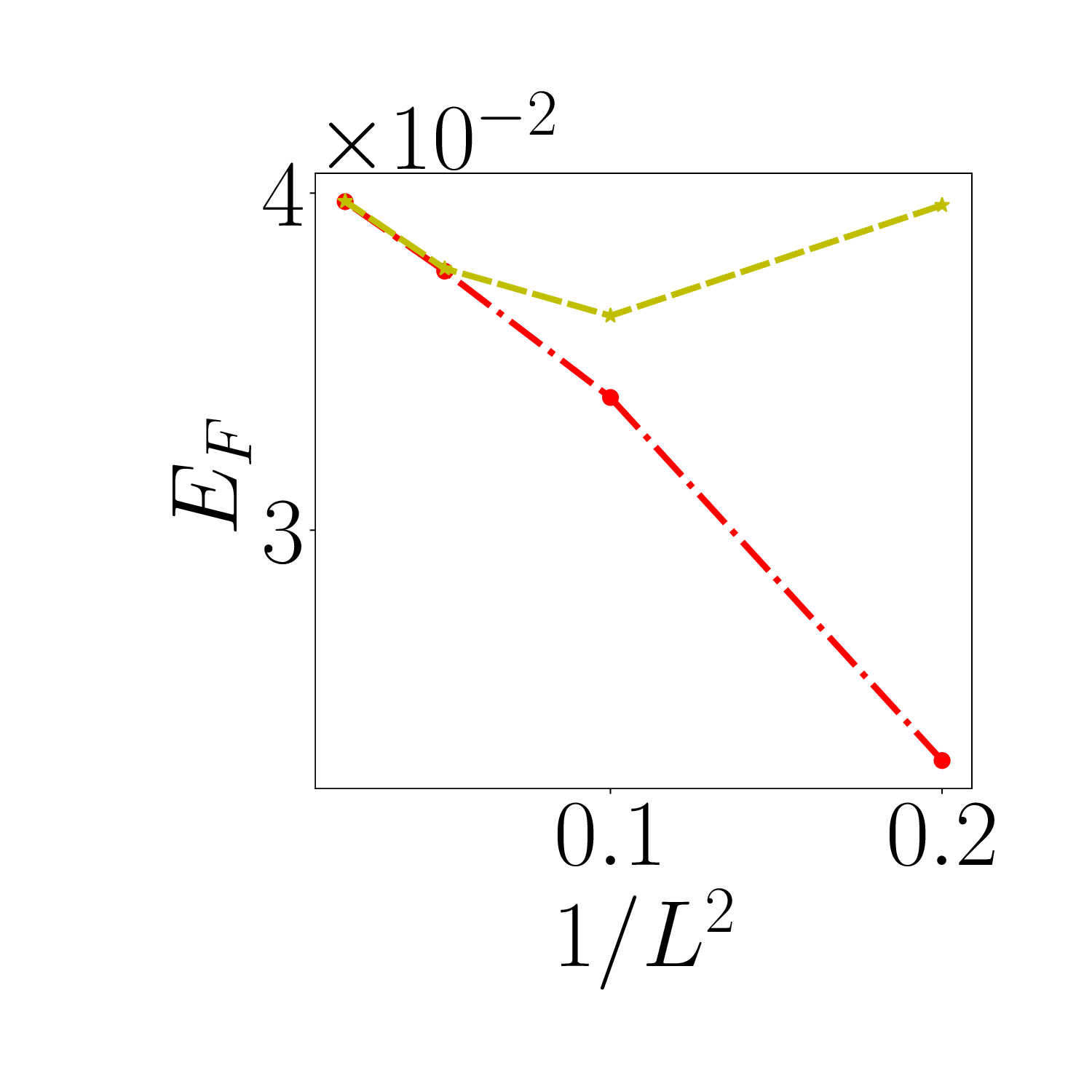}
    \includegraphics[width=.3\textwidth]{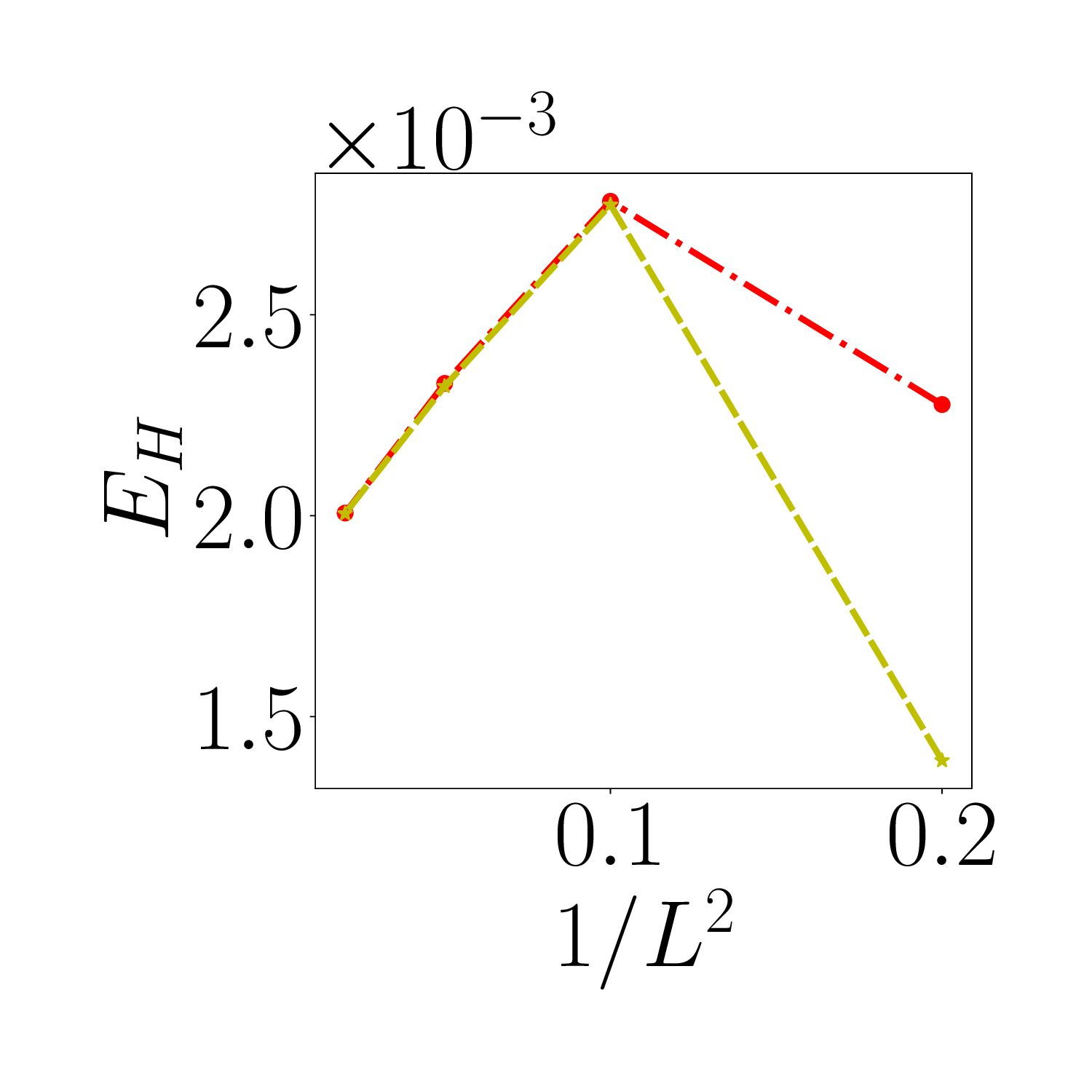} 
    \includegraphics[width=.3\textwidth]{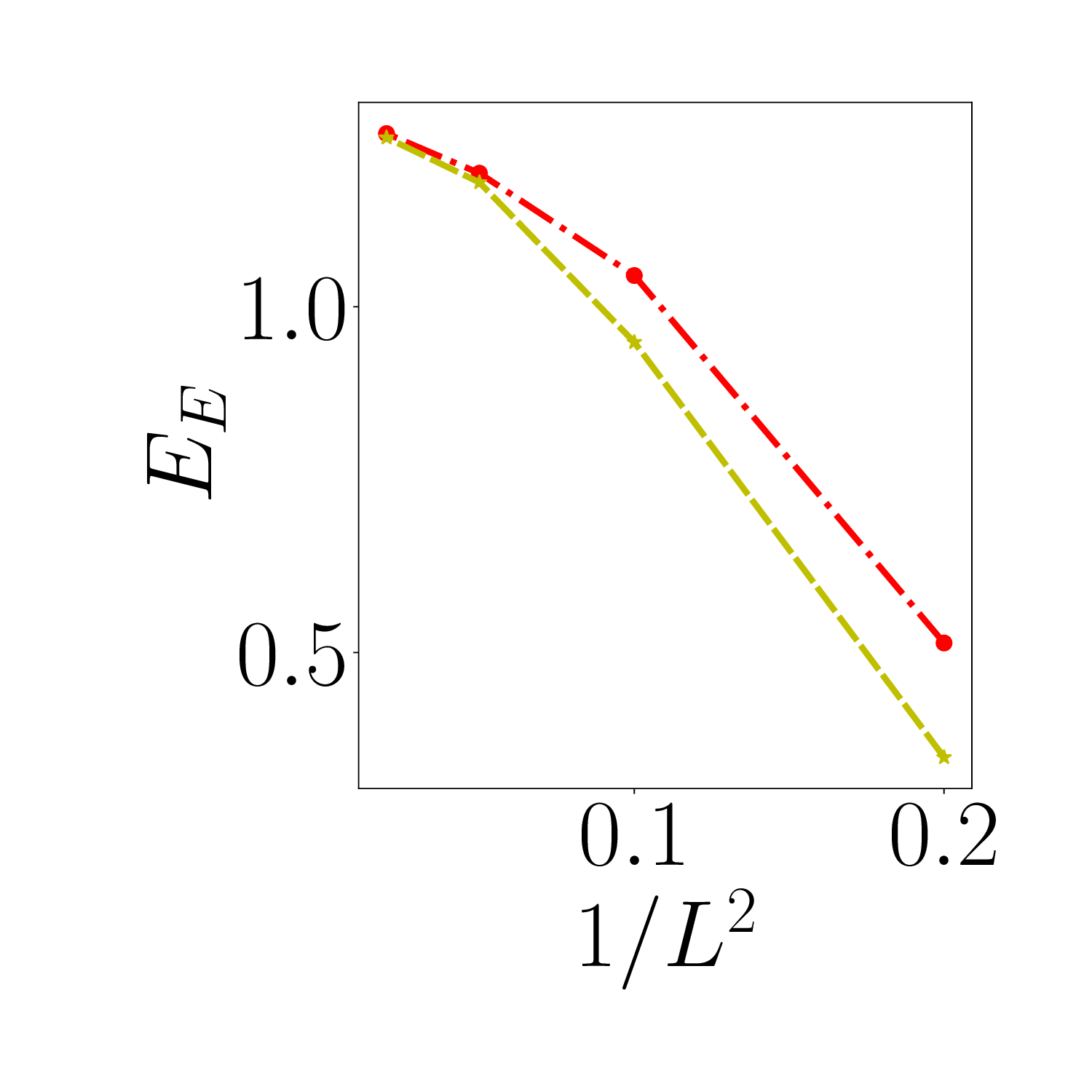} \\
    b)\includegraphics[width=.3\textwidth]{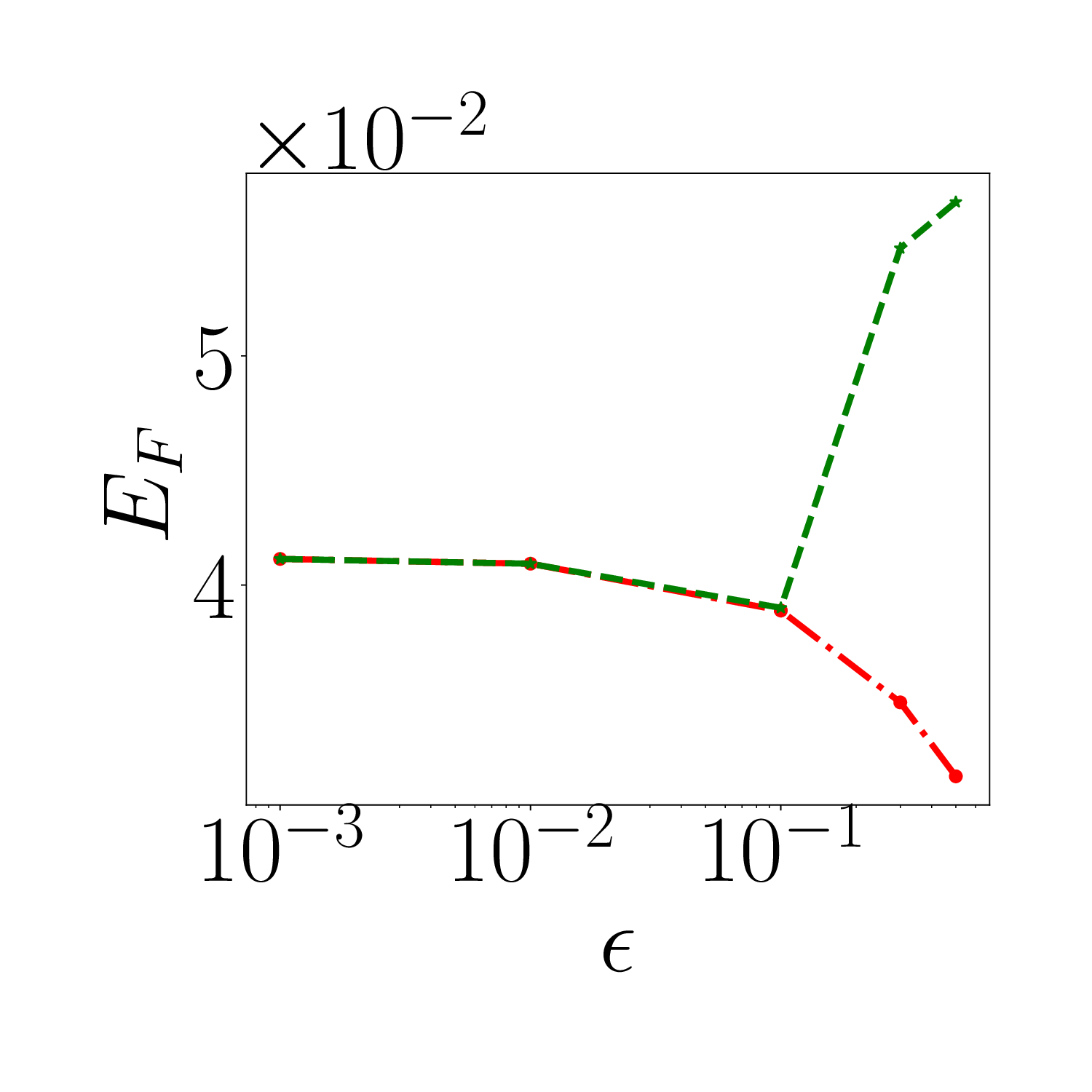}
    \includegraphics[width=.3\textwidth]{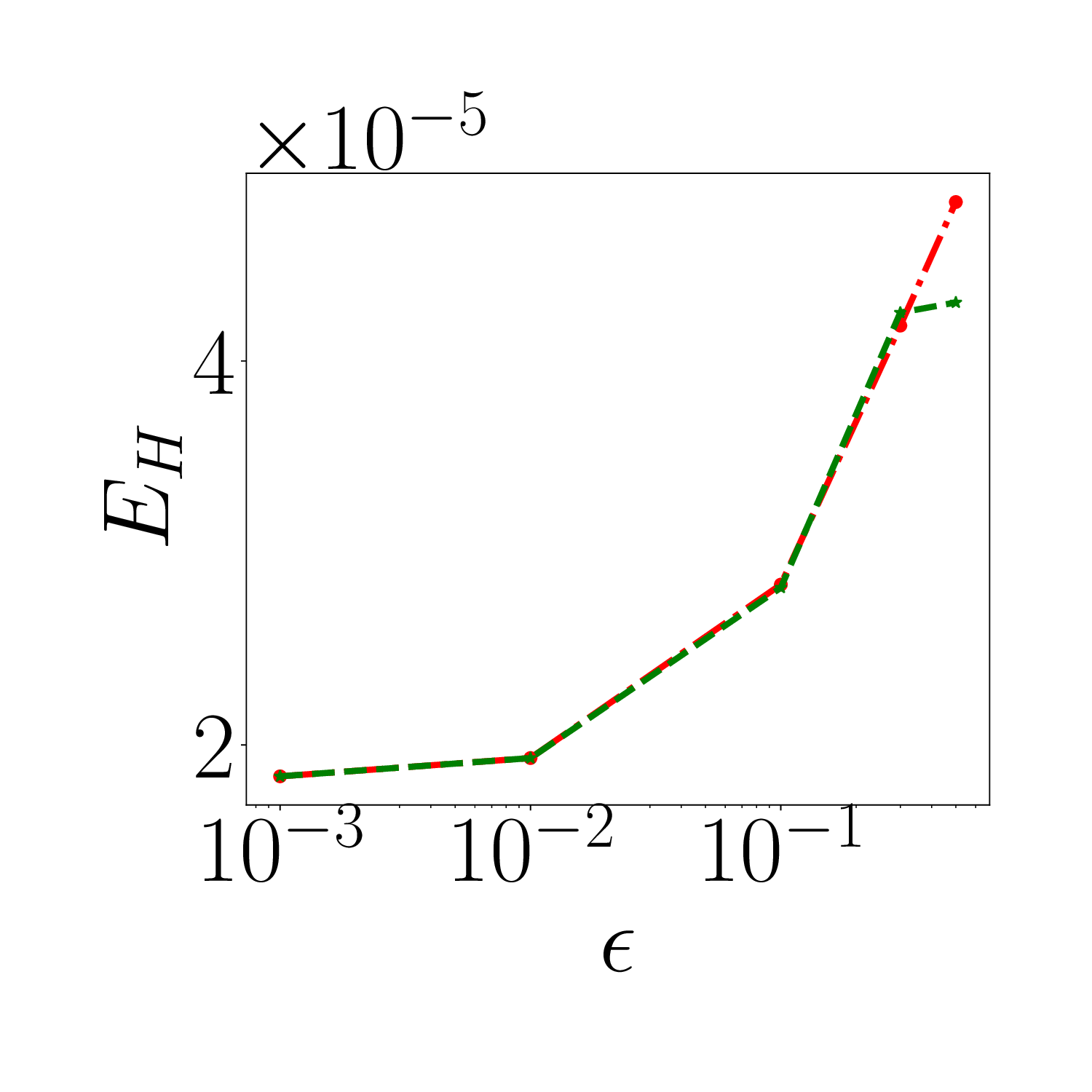} 
    \includegraphics[width=.3\textwidth]{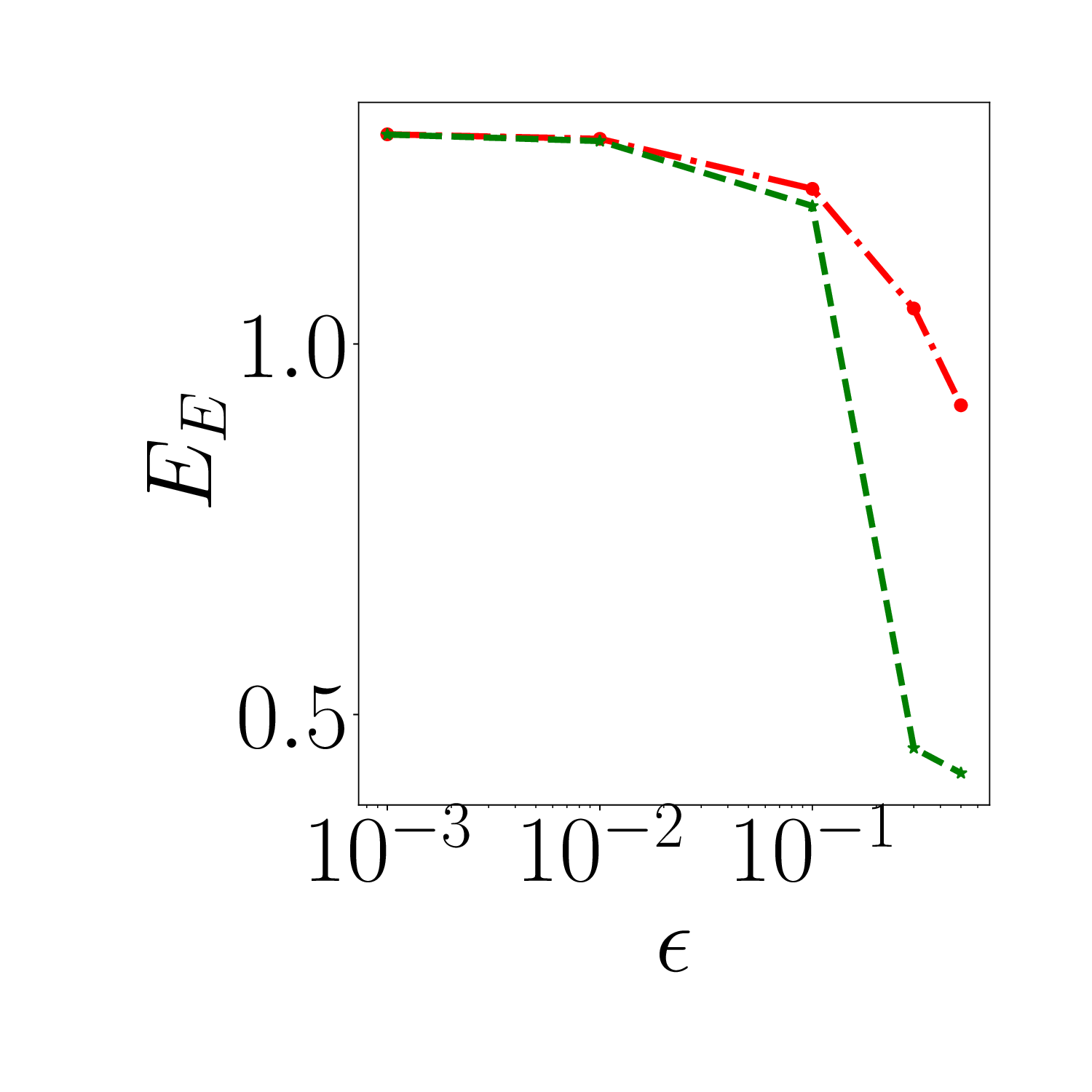} \\

    \includegraphics[width=.7\textwidth]{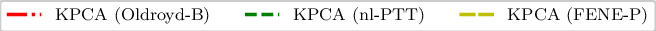}\\
  \caption{The effect of stress model nonlinearity on reconstruction performance using $r=2$ kernel principal components. Simulations in row (a) use the FENE-P model with different values of $\frac{1}{L_2}$ and simulations in row (b) use the nonlinear PTT model with different values of $\varepsilon$. In both cases we have $\theta=1$ and we compare to results obtained using the Oldroyd-B kernel.}
  \label{PTT_param_variantion}
\end{figure}

Finally, two-mode reconstructions of the instantaneous total mechanical energy fields are shown in Fig.~\ref{recons_ptt} at time $t=0.6$ for a simulation performed using the nonlinear PTT model with parameter $\varepsilon = 0.3$.
In this case with $\theta=1$, the dynamic variation is concentrated close to the lid (top of the spatial domain), which is in contrast to the global variations observed in Fig.\ref{recons_OldB} at $\theta = 0.001$.
Here, we see that using the matching kernel yields a qualitatively more accurate reconstruction of the energy field than using the Oldroyd-B kernel.
We also plot the energy field along horizontal slices at $y = 0.475$ and $y=0.975$, further illustrating that the reconstructions using the nonlinear PTT kernel function are in closer agreement with the ground truth than reconstructions obtained using the Oldroyd-B kernel.

\begin{figure}[!ht]
  \centering
  a)\includegraphics[width=.9\textwidth]{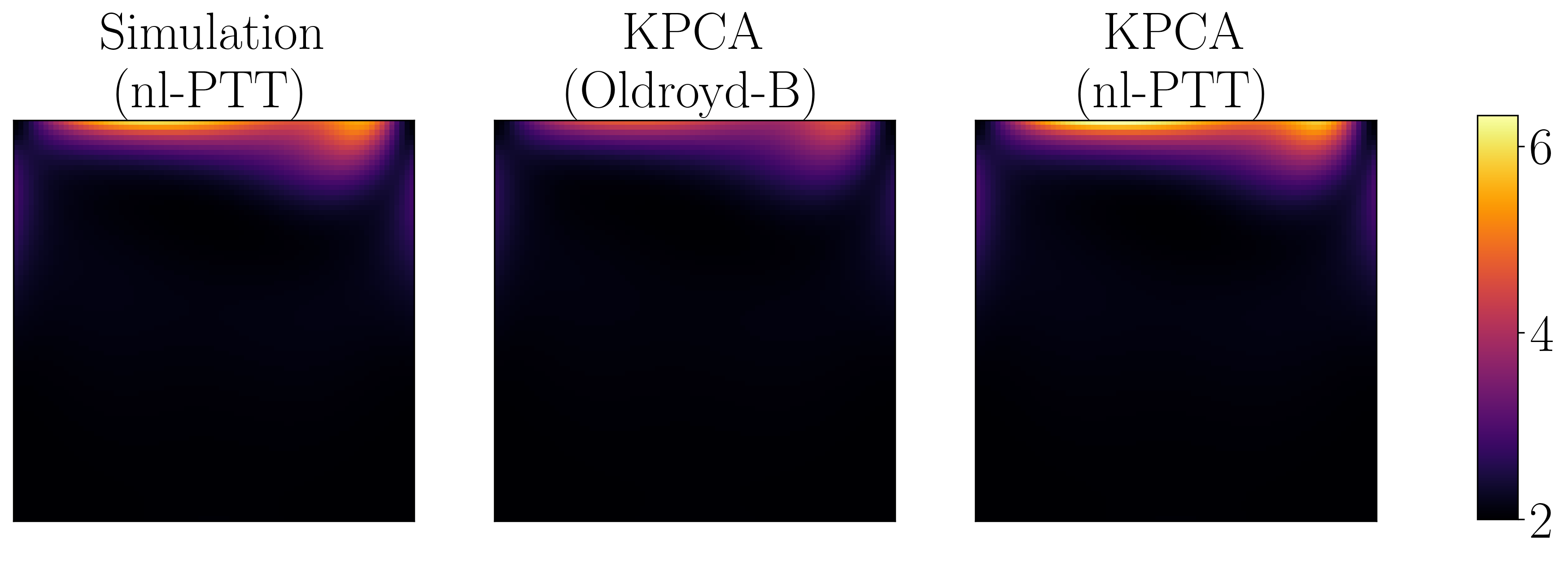}\\
  b)\includegraphics[width=.45\textwidth]{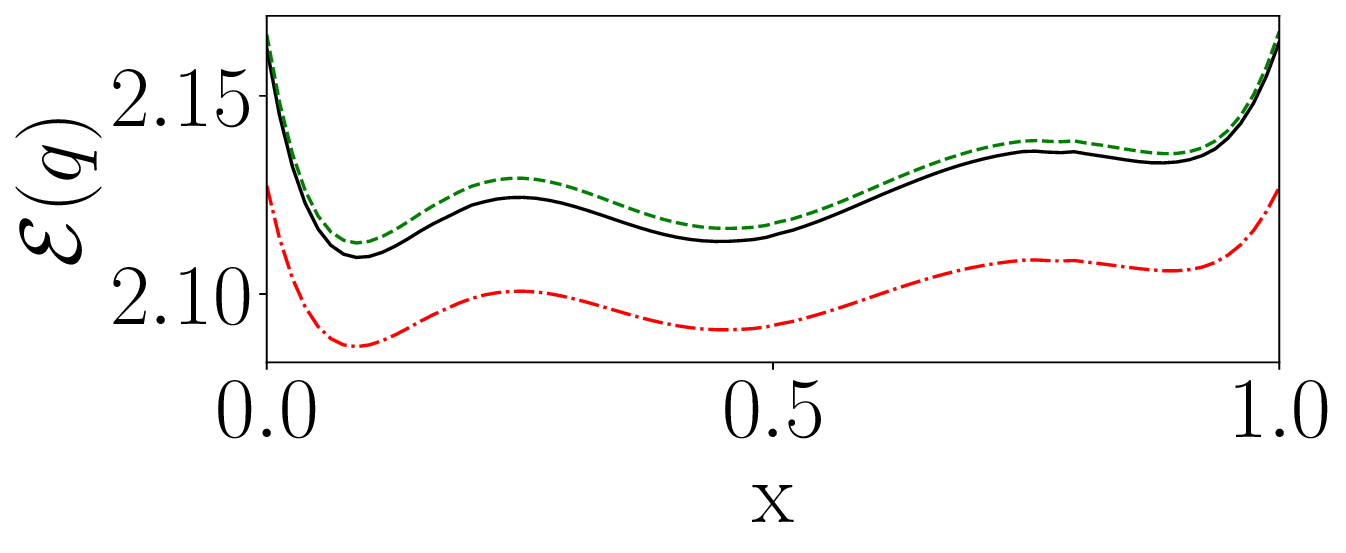}
  \includegraphics[width=.45\textwidth]{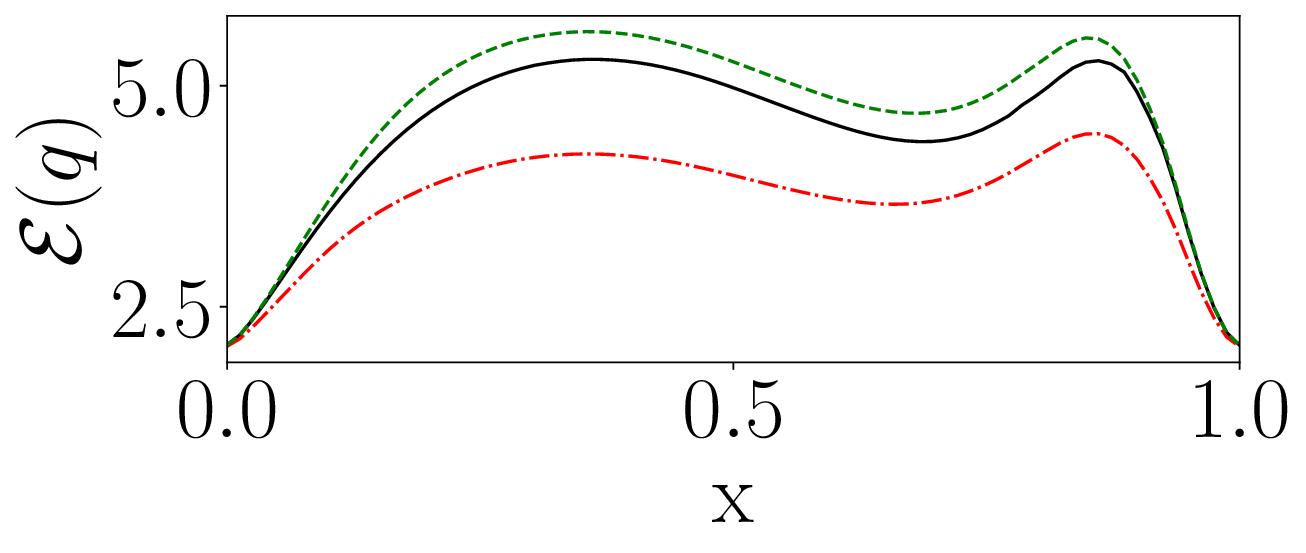}\\
\includegraphics[width=.9\textwidth]{legend_sim_ptt.eps}
  \caption{Spatial distribution of the total mechanical energy (\eqref{eqn:energy}) for a simulation using the nonlinear PTT model with $\varepsilon=0.3$ and $\theta =1$ at time $t = 0.6$. Row (a) compares reconstructed snapshots using $r=2$ kernel principal components obtained using the Oldroyd-B and nonlinear PTT kernels to the ground truth. Row (b) shows the mechanical energy distributions across horizontal slices located at $y = 0.475$ (left) and $y = 0.975$ (right).}
  \label{recons_ptt}
\end{figure}

\section{Discussion}

In summary, our theoretical results provide easily verifiable conditions on a given viscoelastic stress model ensuring that there is a corresponding positive-definite kernel function compatible with the total mechanical energy and turning the space of flowfields with finite mechanical energies into a complete, separable metric space.
Moreover, this kernel function is constructed explicitly from a convergent series representation of the stress model.
Remarkably, these conditions hold for many standard viscoelastic stress models, yielding principled choices for kernel functions which give rise to natural measures for distances and angles between flowfield snapshots.
These geometric notions correspond to implicitly-defined lifted representations of snapshots in a unique reproducing kernel Hilbert space (RKHS) associated with the kernel function.
\revtwo{These kernel functions can then be used to generate large families of energy-compatible kernels by forming products with kernels satisfying a normalization condition.}

The kernel functions and metrics for viscoelastic flows introduced in this paper enable a variety of machine learning algorithms to be employed to extract statistical information from flowfield snapshot data.
In this paper we focus primarily on dimensionality reduction using kernel principal component analysis (KPCA).
Here, a major challenge known as the ``preimage problem'' is to reconstruct flowfields from their RKHS representations or truncated representations in terms of the leading kernel principal components.
We provide a solution to the preimage problem for our viscoelastic kernel functions by showing that the velocity field and square root conformation tensor field can be linearly reconstructed from RKHS representations.
We provide bounds on the reconstruction quality using truncated representations given by the leading kernel principal components.
More generally, our results describe which fields can be linearly reconstructed based on the terms in a series expansion for a given viscoelastic stress model.

We demonstrate the utility of our kernel functions for dimensionality reduction and reconstruction of snapshots from a lid-driven cavity flow.
Here, the flow is simulated with different underlying stress models and we compare the reconstructions obtained using the leading kernel principal components extracted with different choices of kernel function.
Our results underscore the importance of choosing an appropriate metric for reconstruction error.
Ordinary principal component analysis (PCA), which corresponds to a kernel function given by the $L^2$ inner product on the spatial domain, leads to low reconstruction error in an $L^2$ sense, but produces poor reconstructions as measured by the total mechanical energy.
For simulations performed using the (linear) Oldroyd-B stress model, superior reconstructions in an energetic sense were obtained using the kernel principal components extracted using the Oldroyd-B kernel function.
We note that KPCA using the Oldroyd-B kernel is equivalent to ordinary PCA using properly weighted state vectors based on the square root conformation tensor field.

For simulations performed using nonlinear stress models such as FENE-P and nonlinear PTT, we compare the reduction and reconstruction performance using the corresponding kernel functions to the Oldroyd-B kernel.
In each case, using the appropriate nonlinear kernel function was beneficial for reconstructing flow features in an energetic sense from a small number of kernel principal components.
The benefit of using a nonlinear kernel function matching the underlying stress model became more pronounced as the parameters controlling the nonlinearity of the stress model were increased.
However, for our simple lid-driven cavity flow example, this benefit was less significant than the improvements made by using the Oldroyd-B kernel over na\"{i}ve PCA.
The performance of the nonlinear kernel function was comparable to using the simple Oldroyd-B kernel in many cases where the nonlinearity of the stress model was low or moderate and when more principal components were employed for reconstruction.
This suggests that the Oldroyd-B kernel function (or equivalently, appropriately modified state vectors) could be a useful default choice when processing data from viscoelastic flows with only moderately nonlinear stress models.
However, it is possible that accounting for nonlinearities of the stress model when selecting the kernel function will be important for capturing the behavior of more complex viscoelastic flows.
\revtwo{It may also be beneficial to extract nonlinear features using the energy-compatible product kernels introduced in Section~\ref{subsec:creating_new_kernels}.
Systematic explorations of energy-compatible nonlinear kernels for more complex viscoelastic flow problems will be a subject of future work.}
 
\revtwo{Another} exciting avenue for future work involves using \revtwo{energy-compatible} kernel functions to build low-dimensional data-driven reduced-order models approximating the dynamics of viscoelastic fluid flows.
These low-dimensional dynamical systems can then be used for a variety of key scientific and engineering tasks including qualitative analysis of the flow's dynamics and bifurcations, state estimation from limited sensor measurements, real-time forecasting, and feedback control.
Promising approaches could combine the dynamics-informed features extracted using the kernel covariance balancing reduction using adjoint snapshots (K-CoBRAS) method \cite{Otto2022model} with modeling techniques such as Sparse Identification of Nonlinear Dynamics (SINDy) \cite{Brunton2016discovering,oishi2023} or variants of Dynamic Mode Decomposition (DMD) \cite{Williams2015data, Williams2015kernel, Colbrook2023multiverse}.
Other approaches could use \revtwo{energy-based kernel} distance metrics to formulate loss functions for autoencoder-based reduced-order modeling methods such as those introduced in \cite{Champion2019data, Lee2020model, Otto2023nonlinear}.
Our kernels may also be of use for classifying flow regimes using support vector machines (see \cite{Boser1992training, Hofmann2008kernel}).

\section*{Acknowledgements}
\begin{revtwobox} 
We would like to thank the second referee for pointing out the method of generating new energy-compatible kernel functions discussed in Section~\ref{subsec:creating_new_kernels}.
\end{revtwobox} 
C.M. Oishi and F.V.G. Amaral would like to thank the financial support given by Sao Paulo Research Foundation (FAPESP) grants \#2013/07375-0, \#2021/13833-7, \#2021/07034-4 and \#2023/06035-2, and the National Council for Scientific and Technological Development (CNPq), grants \#305383/2019-1 and \#307228/2023-1. The authors acknowledge support from the National Science Foundation AI Institute in Dynamic Systems
(grant number 2112085) and from the Army Research Office (ARO W911NF-19-1-0045).

\bibliographystyle{elsarticle-num}
\bibliography{references}

\appendix

\section{Proof of Theorem~\ref{thm:viscoelastic_kernels} and Proposition~\ref{prop:product_kernel_injectivity}}
\label{app:viscoelastic_kernels_proof}
The proof of the theorem relies on several preliminary lemmas.
First, the Moore-Aronszajn Theorem is a classical result, stated below for completeness, allowing us to associate a unique RKHS to any positive-definite kernel function.
\begin{theorem}[Moore-Aronszajn \cite{Berlinet2011reproducing}]
    \label{thm:Moore-Aronszajn}
    Let $k:\mcal{F}\times\mcal{F} \to \R$ be a function satisfying the positive-definiteness condition in Eq.~\eqref{eqn:PD_cond}.
    Then there is a unique reproducing kernel Hilbert space $\mcal{H}$ of functions on $\mcal{F}$ whose reproducing kernel is $k$.
    In particular, the subspace $\mcal{H}_0$ consisting of finite linear combinations of elements in $\{ K_{\vect{q}} \}_{\vect{q}\in\mcal{F}}$ is dense in $\mcal{H}$ and $\mcal{H}$ is the set of functions that are pointwise limits of Cauchy sequences in $\mcal{H}_0$ with the inner product
    \begin{equation}
        \left\langle \sum_{i=1}^m a_i K_{\vect{q}_i}, \ \sum_{j=1}^n b_j K_{\vect{q}'_j} \right\rangle_{\mcal{H}_0}
        = \sum_{i=1}^m \sum_{j=1}^n a_i b_j k(\vect{q}_i, \vect{q}'_j).
    \end{equation}
\end{theorem}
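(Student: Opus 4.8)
Since the final statement is the classical Moore--Aronszajn theorem, which the paper records for completeness, the plan is to reconstruct the standard existence-and-uniqueness argument directly from the positive-definite kernel $k$. First I would build the \emph{pre-RKHS}: set $K_{\vect{q}} := k(\cdot,\vect{q})$, viewed as a real-valued function on $\mcal{F}$, let $\mcal{H}_0$ be the linear span of $\{K_{\vect{q}}\}_{\vect{q}\in\mcal{F}}$, and define on it the bilinear form from the displayed formula, $\langle \sum_i a_i K_{\vect{q}_i},\ \sum_j b_j K_{\vect{q}'_j}\rangle := \sum_{i,j} a_i b_j\, k(\vect{q}_i,\vect{q}'_j)$. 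The crucial first observation is well-definedness: rewriting this as $\sum_i a_i\big(\sum_j b_j K_{\vect{q}'_j}\big)(\vect{q}_i)$ shows the value depends only on the \emph{function} $\sum_j b_j K_{\vect{q}'_j}$ (and the chosen representation of the first argument), while rewriting it symmetrically as $\sum_j b_j\big(\sum_i a_i K_{\vect{q}_i}\big)(\vect{q}'_j)$ shows it depends only on the function $\sum_i a_i K_{\vect{q}_i}$; combining the two, the form depends only on the two functions. This same manipulation yields the reproducing identity $\langle f, K_{\vect{q}}\rangle = f(\vect{q})$ for all $f\in\mcal{H}_0$. Symmetry of the form is immediate, nonnegativity $\langle f,f\rangle = \sum_{i,j}a_ia_j k(\vect{q}_i,\vect{q}_j)\ge 0$ is precisely hypothesis \eqref{eqn:PD_cond}, and Cauchy--Schwarz for this nonnegative form gives $\lvert f(\vect{q})\rvert = \lvert\langle f,K_{\vect{q}}\rangle\rvert \le \|f\|\sqrt{k(\vect{q},\vect{q})}$, so $\langle f,f\rangle = 0$ forces $f\equiv 0$; hence $\langle\cdot,\cdot\rangle$ is a genuine inner product on $\mcal{H}_0$.

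Next I would pass to the completion and realize it as a space of functions. Form the abstract Hilbert-space completion $\overline{\mcal{H}_0}$; for each $\vect{q}$ the functional $f\mapsto\langle f,K_{\vect{q}}\rangle$ is bounded on $\mcal{H}_0$, with norm $\sqrt{k(\vect{q},\vect{q})}$, so it extends to $\overline{\mcal{H}_0}$, and we realize an element $F\in\overline{\mcal{H}_0}$ as the function $\vect{q}\mapsto F(\vect{q}) := \langle F,K_{\vect{q}}\rangle$. This realization map is injective because $\{K_{\vect{q}}\}$ spans the dense subspace $\mcal{H}_0$, so $\langle F,K_{\vect{q}}\rangle = 0$ for all $\vect{q}$ implies $F = 0$. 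Taking $\mcal{H}$ to be the image gives a Hilbert space of functions on $\mcal{F}$ isometric to $\overline{\mcal{H}_0}$ in which point evaluation is bounded, and since each $K_{\vect{q}}\in\mcal{H}_0\subset\mcal{H}$ satisfies $\langle F,K_{\vect{q}}\rangle = F(\vect{q})$ and $\langle K_{\vect{q}_1},K_{\vect{q}_2}\rangle = k(\vect{q}_1,\vect{q}_2)$, the reproducing kernel of $\mcal{H}$ is exactly $k$. By construction $\mcal{H}_0$ is dense in $\mcal{H}$, and if $f_n\to F$ in $\mcal{H}$ with $f_n\in\mcal{H}_0$ then $f_n(\vect{q}) = \langle f_n,K_{\vect{q}}\rangle\to\langle F,K_{\vect{q}}\rangle = F(\vect{q})$, which is the claimed description of $\mcal{H}$ as the set of pointwise limits of Cauchy sequences in $\mcal{H}_0$ carrying the stated inner product.

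For uniqueness, suppose $\mcal{H}'$ is another RKHS of functions on $\mcal{F}$ whose reproducing kernel is $k$. Then $\mcal{H}'$ contains each $K_{\vect{q}} = k(\cdot,\vect{q})$ with $\langle K_{\vect{q}_1},K_{\vect{q}_2}\rangle_{\mcal{H}'} = k(\vect{q}_1,\vect{q}_2)$, so $\mcal{H}_0$ embeds isometrically into $\mcal{H}'$; it is moreover dense there, since any $F\in\mcal{H}'$ orthogonal to every $K_{\vect{q}}$ has $F(\vect{q}) = \langle F,K_{\vect{q}}\rangle_{\mcal{H}'} = 0$ for all $\vect{q}$, hence $F = 0$. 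Therefore $\mcal{H}'$ is the completion of $\mcal{H}_0$, and identifying each of its elements with its vector of pointwise values shows $\mcal{H}' = \mcal{H}$ as inner product spaces of functions.

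I expect the only genuinely delicate point to be the realization step --- confirming that the abstract completion $\overline{\mcal{H}_0}$ can be faithfully identified with a space of honest functions on $\mcal{F}$, i.e.\ that the extended evaluation functionals separate the points of $\overline{\mcal{H}_0}$, and that the reproducing identity and the ``pointwise-limit'' description persist after this identification. Everything else --- well-definedness of the bilinear form, the Cauchy--Schwarz argument upgrading nonnegativity to definiteness, and the uniqueness argument --- is routine once the reproducing identity $\langle f,K_{\vect{q}}\rangle = f(\vect{q})$ on $\mcal{H}_0$ is established.
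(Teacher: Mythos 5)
Your proposal is a correct reconstruction of the classical Moore--Aronszajn argument. The paper does not prove this theorem at all --- it is recorded ``for completeness'' with a citation to \cite{Berlinet2011reproducing} --- and your construction (the pre-Hilbert space $\mcal{H}_0$ of kernel sections, well-definedness of the bilinear form via the two rewritings, Cauchy--Schwarz to upgrade semi-definiteness to definiteness through the reproducing identity, realization of the abstract completion as a function space via the extended evaluation functionals, and the density argument for uniqueness) is precisely the standard proof that the citation points to. The only implicit ingredient worth flagging is symmetry of $k$: the quadratic-form condition \eqref{eqn:PD_cond} alone does not force $k(\vect{q}_1,\vect{q}_2)=k(\vect{q}_2,\vect{q}_1)$ for a real kernel, but symmetry is part of the conventional definition of a positive-definite kernel, is needed for your form on $\mcal{H}_0$ to be an inner product, and holds for every kernel actually constructed in the paper, so this is a gap in the recorded hypotheses rather than in your argument.
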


The following lemma provides several useful rules for combining positive-definite kernels to produce new positive-definite kernels.
\begin{lemma}[Combining kernels \cite{Hofmann2008kernel}]
    \label{lem:combining_kernels}
    Let $k_1, k_2, \ldots$ be real-valued positive-definite kernel functions on an arbitrary nonempty set $\mcal{X}$.
    Then the set of positive-definite kernels on $\mcal{X}$ is a closed convex cone, that is, 
    \begin{enumerate}[leftmargin=1.0cm]
        \item if $\theta_1, \theta_2 \geq 0$, then $\theta_1 k_1 + \theta_2 k_2$ is a positive-definite kernel; and
        \item if $k(x, x') := \lim_{n\to\infty} k_n(x, x')$ exists for all $x,x'\in\mcal{X}$, then $k$ is a positive-definite kernel.
    \end{enumerate}
    The point-wise product $k_1 k_2$ is also a positive-definite kernel.
    Consequently, if $a_0, a_1, \ldots \geq 0$ are non-negative constants and
    \begin{equation}
        \psi(t) = \sum_{n=0}^{\infty} a_n t^n,
    \end{equation}
    converges for every $t \in k(\mcal{X}\times\mcal{X})$ then $\psi \circ k$ is a positive-definite kernel.
\end{lemma}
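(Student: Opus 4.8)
The plan is to verify the four assertions of the lemma in sequence: the convex cone property (items 1 and 2), closure under pointwise products, and the power-series corollary. Only the product step requires any genuine content — the Schur product theorem — while everything else reduces to a direct manipulation of the finite quadratic forms appearing in the positive-definiteness condition \eqref{eqn:PD_cond}. Symmetry of the kernels is preserved trivially under all of these operations, so throughout it suffices to track the sign of these quadratic forms.

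First I would handle the cone property. Fixing points $x_1,\dots,x_m \in \mcal{X}$ and reals $a_1,\dots,a_m$, linearity of the double sum gives
\[
\sum_{i,j=1}^m a_i a_j\,(\theta_1 k_1 + \theta_2 k_2)(x_i,x_j)
= \theta_1 \sum_{i,j=1}^m a_i a_j\, k_1(x_i,x_j) + \theta_2 \sum_{i,j=1}^m a_i a_j\, k_2(x_i,x_j) \ge 0,
\]
since each summand is nonnegative and $\theta_1,\theta_2\ge 0$; this is item 1. For item 2, the sum $\sum_{i,j} a_i a_j\, k(x_i,x_j)$ is finite, so I may pass the limit through: $\sum_{i,j} a_i a_j\, k(x_i,x_j) = \lim_{n\to\infty}\sum_{i,j} a_i a_j\, k_n(x_i,x_j) \ge 0$. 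Together these show the positive-definite kernels form a closed convex cone.

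The main step is the pointwise product. Fix $x_1,\dots,x_m$ and let $A=[k_1(x_i,x_j)]$ and $B=[k_2(x_i,x_j)]$, which are positive semidefinite; the Gram matrix of $k_1 k_2$ at these points is the Hadamard product $A\circ B$. I would invoke and, for completeness, prove the Schur product theorem: $A\circ B \succeq 0$. Writing $A=\sum_\ell w_\ell w_\ell^T$ from a spectral factorization (absorbing the nonnegative eigenvalues into the $w_\ell$), for any $a=(a_1,\dots,a_m)$ one computes
\[
a^T (A\circ B)\, a
= \sum_\ell \sum_{i,j} \big(a_i (w_\ell)_i\big)\, B_{ij}\, \big(a_j (w_\ell)_j\big)
= \sum_\ell (a\circ w_\ell)^T B\, (a\circ w_\ell) \ge 0,
\]
using $B\succeq 0$ for each $\ell$. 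Hence $k_1 k_2$ is positive-definite, and by induction so is each $n$-fold pointwise power $k^n$, with the convention $k^0 = 1$ (the constant kernel, positive-definite because $\sum_{i,j} a_i a_j = (\sum_i a_i)^2\ge 0$).

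For the corollary, by item 1 every partial sum $\sum_{n=0}^N a_n k^n$ is positive-definite; since $\psi(t)=\sum_n a_n t^n$ converges for each $t \in k(\mcal{X}\times\mcal{X})$, these partial sums converge pointwise to $\psi\circ k$, and item 2 then yields that $\psi\circ k$ is positive-definite. I expect the Schur product theorem to be the only nontrivial ingredient; the remaining steps are routine bookkeeping with finite sums and limits.
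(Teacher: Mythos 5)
Your proof is correct. Note, however, that the paper does not prove this lemma at all: it is stated as a classical result with a citation to the kernel-methods literature (Hofmann et al.), so there is no in-paper argument to compare against. Your write-up is the standard textbook proof --- linearity of the quadratic form for the cone property, passage to the limit for closedness, the Schur product theorem (correctly derived via the rank-one decomposition $A=\sum_\ell w_\ell w_\ell^T$ and the identity $a^T(A\circ B)a=\sum_\ell (a\circ w_\ell)^T B (a\circ w_\ell)$) for pointwise products, and the combination of these for the power-series consequence, including the needed observation that the constant kernel $k^0=1$ is positive-definite. All steps are sound and complete.
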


We also require conditions on the kernel function ensuring that the associated feature map $\Phi$ is injective.
That is, for every pair of distinct states $\vect{q}_1, \vect{q}_2\in\mcal{F}$ we have
\begin{equation}
    \label{eqn:injectivity_condition}
    k(\vect{q}_1, \vect{q}_1) - 2 k(\vect{q}_1, \vect{q}_2) + k(\vect{q}_2, \vect{q}_2) 
    = \Vert \Phi(\vect{q}_1) - \Phi(\vect{q}_2) \Vert_{\mcal{H}}^2 > 0.
\end{equation}
We call positive-definite kernel functions with the above property ``injective kernels''.
In the following lemma, we provide some useful rules for combining injective kernels.
\begin{lemma}[Combining injective kernels]
    \label{lem:combining_injective_kernels}
    Let $k$ and $k'$ be real-valued positive-definite kernel functions on an arbitrary nonempty set $\mcal{X}$ and suppose that $k$ is injective.
    Then the following hold:
    \begin{enumerate}[leftmargin=1.0cm]
        \item if $\theta_1 > 0$ and $\theta_2 \geq 0$ then $\theta_1 k + \theta_2 k'$ is injective;
        \item if $p \geq 1$ is odd then $k^p$ is injective; and
        \item if $k$ takes only non-negative values, then $k^p$ is injective for every integer $p \geq 1$.
    \end{enumerate}
    Consequently, if $k$ is non-negative-valued and $\psi: k(\mcal{X}\times\mcal{X}) \to \R$ can be expressed as a convergent power series
    \begin{equation}
        \psi(t) = \sum_{n=0}^{\infty} a_n t^n,
    \end{equation}
    with every $a_n \geq 0$ and there being some $n \geq 1$ with $a_n > 0$ then $\psi \circ k$ is injective.  
\end{lemma}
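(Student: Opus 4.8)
The plan is to reduce everything to the three scalars $a := k(x_1,x_1)$, $b := k(x_2,x_2)$, and $c := k(x_1,x_2)$ attached to an arbitrary pair of distinct points $x_1 \neq x_2$. Applying the positive-definiteness condition~\eqref{eqn:PD_cond} to the two-point set $\{x_1,x_2\}$ shows that the symmetric matrix with diagonal entries $a,b$ and off-diagonal entry $c$ is positive semidefinite, so $a\geq 0$, $b\geq 0$, and $c^2\leq ab$; by definition, injectivity of $k$ is precisely the statement $a+b-2c>0$. Part~(1) is then immediate: $\theta_1 k+\theta_2 k'$ is a positive-definite kernel by Lemma~\ref{lem:combining_kernels}, and the quantity $(\theta_1 k+\theta_2 k')(x_1,x_1)-2(\theta_1 k+\theta_2 k')(x_1,x_2)+(\theta_1 k+\theta_2 k')(x_2,x_2)$ equals $\theta_1(a+b-2c)$ plus $\theta_2$ times the analogous quantity for $k'$, which is nonnegative since the $2\times 2$ Gram matrix of $k'$ at $\{x_1,x_2\}$ is positive semidefinite; with $\theta_1>0$ the first summand is strictly positive, so the sum is positive.

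For parts~(2) and~(3), using $k^p(x_i,x_j)=k(x_i,x_j)^p$, the goal is $a^p+b^p-2c^p>0$ for the relevant $p$. I would first note $a+b>0$: if $a=b=0$ then $c^2\leq ab=0$ forces $c=0$ and hence $a+b-2c=0$, contradicting injectivity; since $a,b\geq 0$ this gives $a^p+b^p>0$ for every $p\geq 1$. Next I split on the sign of $c$. If $c\leq 0$ and $p$ is odd, then $c^p\leq 0$ and $a^p+b^p-2c^p\geq a^p+b^p>0$. If instead $c\geq 0$ --- which is automatic when $k$ is nonnegative-valued, and which together with the previous case exhausts all possibilities when $p$ is odd --- I use the chain $a^p+b^p\geq 2(ab)^{p/2}\geq 2c^p$, where the first step is AM--GM applied to $a^p,b^p$ and the second uses $0\leq c\leq\sqrt{ab}$ together with monotonicity of $t\mapsto t^p$ on $[0,\infty)$; equality throughout would force simultaneously $a=b$ and $c=\sqrt{ab}=a$, whence $a+b-2c=0$, contradicting injectivity, so the inequality is strict. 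The step I expect to require the most care is exactly this bookkeeping with the nonnegativity hypothesis: when $p$ is even and $k$ is allowed to take negative values one can have $c=-a=-b<0$, which still satisfies $a+b-2c>0$ yet makes $a^p+b^p-2c^p=0$, so $k^p$ genuinely fails to be injective there; the case split must therefore keep even powers confined to the regime $c\geq 0$.

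For the power-series consequence, write $\psi\circ k=\sum_{n=0}^{\infty}a_n k^n$, which is a positive-definite kernel by Lemma~\ref{lem:combining_kernels} (using convergence of $\psi$ on the range of $k$). Fix $n_0\geq 1$ with $a_{n_0}>0$; then $k^{n_0}$ is injective by part~(3) since $k\geq 0$, and scaling by the positive constant $a_{n_0}$ preserves injectivity by part~(1). The tail $\sum_{n\neq n_0}a_n k^n$ is a pointwise limit of nonnegative combinations of positive-definite kernels, hence positive-definite by Lemma~\ref{lem:combining_kernels}. Applying part~(1) once more to the decomposition $\psi\circ k=a_{n_0}k^{n_0}+\sum_{n\neq n_0}a_n k^n$ then yields injectivity of $\psi\circ k$, completing the argument.
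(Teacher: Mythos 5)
Your proof is correct, and its overall architecture matches the paper's: reduce to the three scalars $a=k(x_1,x_1)$, $b=k(x_2,x_2)$, $c=k(x_1,x_2)$ at a pair of distinct points, split on the sign of $c$, and assemble the power-series consequence from parts (1) and (3). The one substantive difference is the inequality used in the case $c\ge 0$. The paper applies Jensen's inequality for the convex function $t\mapsto t^p$ to get $\tfrac12(a^p+b^p)\ge\bigl(\tfrac{a+b}{2}\bigr)^p$, and strictness then comes for free because injectivity gives $\tfrac{a+b}{2}>c\ge 0$ directly, so strict monotonicity of $t\mapsto t^p$ on $[0,\infty)$ finishes in one step. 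You instead route through the geometric mean, $a^p+b^p\ge 2(ab)^{p/2}\ge 2c^p$, via AM--GM and the Cauchy--Schwarz bound $c\le\sqrt{ab}$; since neither of these inequalities is strict a priori, you need the additional equality-forcing step (equality would give $a=b$ and $c=\sqrt{ab}=a$, contradicting $a+b-2c>0$), which you carry out correctly. Both routes are elementary and equally rigorous; the paper's is marginally shorter because strictness is inherited from the hypothesis rather than recovered by an equality analysis. Your aside exhibiting why even powers genuinely fail without the nonnegativity hypothesis (e.g.\ $a=b=1$, $c=-1$, realized by a rank-one kernel $f(x)f(y)$ with $f$ taking values $\pm 1$) is a worthwhile observation the paper does not record.
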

\begin{proof}[Proof of Lemma~\ref{lem:combining_injective_kernels}]
    \label{prf:combining_injective_kernels}
    The first statement follows immediately from Eq.~\eqref{eqn:injectivity_condition}.
    To prove the second and third statements, consider an integer $p \geq 1$ and choose two distinct points $x,y \in \mcal{X}$.
    Since $t \mapsto t^p$ is convex on $[0,\infty)$, Jensen's inequality (Theorem~3.3 in \cite{Rudin1987real} with $\mu$ being a sum of two Dirac measures) yields
    \begin{equation}
        \frac{1}{2}\left[ k(x,x)^p - 2 k(x,y)^p + k(y,y)^p \right] 
        \geq \left( \frac{k(x,x) + k(y,y)}{2} \right)^p - k(x,y)^p.
    \end{equation}
    If $k(x,y) \geq 0$ then Eq.~\eqref{eqn:injectivity_condition} implies that
    \begin{equation}
        \left( \frac{k(x,x) + k(y,y)}{2} \right)^p - k(x,y)^p > 0
    \end{equation}
    because $t \mapsto t^p$ is strictly monotone increasing on $[0,\infty)$.
    This proves the third statement.
    On the other hand, if $k(x,y) < 0$ and $p$ is odd then $k(x,y)^p < 0$, which immediately implies that
    \begin{equation}
        k(x,x)^p - 2 k(x,y)^p + k(y,y)^p > 0,
    \end{equation}
    proving the second statement.
    To prove the final statement, we observe the sum defining $\psi$ converges absolutely since all of the terms are nonnegative.
    Supposing that $a_m > 0$ for some $m \geq 1$ we have
    \begin{equation}
        \psi \circ k = a_m k^m + \sum_{n\neq m} a_n k^n,
    \end{equation}
    where the first term is an injective kernel by the argument above.
    Since the sum in the second term converges, it defines a positive-definite kernel by Lemma~\ref{lem:combining_kernels}.
    Therefore the sum of the two terms is an injective kernel.
\end{proof}

The following lemma allows us to convert terms such as $\Tr(\mat{c})$ and $\Tr\big[ (\mat{c} - \mat{I})^2 \big]$ appearing in the energy function into injective kernel functions.
\begin{lemma}
    \label{lem:matrix_function_kernel}
    Let $D \subset \mathbb{S}^d$ be a set of symmetric matrices and let $f$ be a real-valued function on $\sigma(D) := \bigcup_{\mat{c}\in D} \sigma(\mat{c})$.
    With the action of this function on a matrix $\mat{c} \in D$ defined by Eq.~\eqref{eqn:matrix_functional_calculus},
    \begin{equation}
        k(\mat{c}_1, \mat{c}_2) = \Tr\left[ f(\mat{c}_1) f(\mat{c}_2) \right] 
    \end{equation}
    is a positive-definite kernel function on $D\times D$.
    If $f$ is injective on $\sigma(D)$, then the kernel function is injective.
\end{lemma}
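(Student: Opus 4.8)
The plan is to recognize $k$ as the pull-back of the Frobenius inner product on $\mathbb{S}^d$ under the map $\mat{c}\mapsto f(\mat{c})$, to read off positive-definiteness from the resulting Gram identity, and then to establish injectivity of the kernel by showing that this map is itself injective whenever $f$ is injective on $\sigma(D)$.

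First I would note that, since $f$ is real-valued and each eigenprojection $\mat{P}_{\mat{c}}(\lambda)$ is symmetric, the matrix $f(\mat{c})$ defined by Eq.~\eqref{eqn:matrix_functional_calculus} is real symmetric for every $\mat{c}\in D$. Hence
\[
    k(\mat{c}_1,\mat{c}_2) = \Tr\big[f(\mat{c}_1)f(\mat{c}_2)\big] = \big\langle f(\mat{c}_1),\, f(\mat{c}_2)\big\rangle_F
\]
is the Frobenius inner product of $f(\mat{c}_1)$ and $f(\mat{c}_2)$ in the finite-dimensional Hilbert space $(\mathbb{S}^d,\langle\cdot,\cdot\rangle_F)$. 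Positive-definiteness is then immediate from the Gram identity: for any $\mat{c}_1,\dots,\mat{c}_m\in D$ and $a_1,\dots,a_m\in\R$ one has $\sum_{i,j} a_i a_j\, k(\mat{c}_i,\mat{c}_j) = \big\|\sum_i a_i f(\mat{c}_i)\big\|_F^2 \geq 0$, which is exactly the condition in Eq.~\eqref{eqn:PD_cond}.

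For the injectivity statement I would invoke the criterion in Eq.~\eqref{eqn:injectivity_condition}: since $k(\mat{c}_1,\mat{c}_1) - 2k(\mat{c}_1,\mat{c}_2) + k(\mat{c}_2,\mat{c}_2) = \big\|f(\mat{c}_1)-f(\mat{c}_2)\big\|_F^2$, it suffices to show that $\mat{c}\mapsto f(\mat{c})$ is injective on $D$. Assuming $f$ is injective on $\sigma(D)$, let $g:f(\sigma(D))\to\sigma(D)$ denote its left inverse, so that $g\circ f = \Id_{\R}$ on $\sigma(D)$; note $g$ is then defined on the spectrum of $f(\mat{c})$ for every $\mat{c}\in D$. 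The crux is the composition (spectral-mapping) rule for the functional calculus, $g\big(f(\mat{c})\big) = (g\circ f)(\mat{c}) = \mat{c}$: granting this, $f(\mat{c}_1)=f(\mat{c}_2)$ implies, on applying the functional calculus of $g$ to both sides, that $\mat{c}_1=\mat{c}_2$, so $\mat{c}\mapsto f(\mat{c})$ is injective and $k$ is an injective kernel.

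The main (if mild) obstacle is justifying that composition rule, which the algebra-homomorphism remark preceding the lemma does not directly supply. I would establish it from the spectral decomposition~\eqref{eqn:matrix_functional_calculus}: the eigenvalues of $f(\mat{c})$ are $\{f(\lambda):\lambda\in\sigma(\mat{c})\}$ and the eigenprojection of $f(\mat{c})$ onto $\mu$ is $\sum_{\lambda\in\sigma(\mat{c}):\, f(\lambda)=\mu}\mat{P}_{\mat{c}}(\lambda)$, so $g\big(f(\mat{c})\big) = \sum_{\mu}g(\mu)\sum_{\lambda:\,f(\lambda)=\mu}\mat{P}_{\mat{c}}(\lambda) = \sum_{\lambda\in\sigma(\mat{c})}g\big(f(\lambda)\big)\,\mat{P}_{\mat{c}}(\lambda) = \mat{c}$, the injectivity of $f$ collapsing each inner sum to a single term. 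An equivalent route, bypassing $g$, is to argue directly from $f(\mat{c}_1)=f(\mat{c}_2)=:\mat{b}$: injectivity of $f$ forces $\sigma(\mat{c}_1)=\sigma(\mat{c}_2)$ (both equal $f^{-1}(\sigma(\mat{b}))$) and $\mat{P}_{\mat{c}_1}(\lambda) = \mat{P}_{\mat{b}}\big(f(\lambda)\big) = \mat{P}_{\mat{c}_2}(\lambda)$ for each $\lambda$ in the common spectrum, whence $\mat{c}_1 = \sum_\lambda \lambda\,\mat{P}_{\mat{c}_1}(\lambda) = \mat{c}_2$.
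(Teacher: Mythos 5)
Your proposal is correct and follows essentially the same route as the paper: positive-definiteness via the Gram identity for the Frobenius inner product of $f(\mat{c}_i)$, and injectivity by reducing $k(\mat{c}_1,\mat{c}_1)-2k(\mat{c}_1,\mat{c}_2)+k(\mat{c}_2,\mat{c}_2)=\|f(\mat{c}_1)-f(\mat{c}_2)\|_F^2$ to injectivity of $\mat{c}\mapsto f(\mat{c})$, which the paper likewise obtains from $f^{-1}(f(\mat{c}))=\mat{c}$ via the functional calculus. Your explicit verification of the spectral-mapping/composition rule simply fills in a step the paper asserts without proof.
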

\begin{proof}[Proof of Lemma~\ref{lem:matrix_function_kernel}]
    Positive-definiteness is obvious.
    If $f$ is injective on $\sigma(D)$ then its inverse $f^{-1}$ can be defined on $f(\sigma(D))$.
    Since $\sigma(f(D)) \subset f(\sigma(D))$, we have $f^{-1}(f(\mat{c})) = \mat{c}$ for every $\mat{c} \in D$, meaning that $f$ is injective on $D$.
    If $\mat{c}_1, \mat{c}_2$ are distinct elements in $D$, then $f(\mat{c}_1) \neq f(\mat{c}_2)$ and we have
    \begin{equation}
        k(\mat{c}_1, \mat{c}_1) - 2 k(\mat{c}_1, \mat{c}_2) + k(\mat{c}_2, \mat{c}_2)
        = \| f(\mat{c}_1) - f(\mat{c}_2) \|_{F}^2 > 0.
    \end{equation}
    This proves that $k$ is an injective kernel function.
\end{proof}

Finally, we are ready to prove Theorem~\ref{thm:viscoelastic_kernels}.
\begin{proof}[Proof of Theorem~\ref{thm:viscoelastic_kernels}]
    \label{prf:viscoelastic_kernels}
    The series in Eq.~\eqref{eqn:integrand_kernel} converges absolutely because
    \begin{equation}
    \begin{aligned}
        \sum_{i,p = 0}^{\infty} c_{i,p} \left\vert\Tr\left( f_i(\mat{c}_1) f_i(\mat{c}_2) \right)\right\vert^p
        &\leq \sum_{i,p = 0}^{\infty} \sqrt{c_{i,p} \Tr\big(f_i(\mat{c}_1) f_i(\mat{c}_1) \big)^{p}} \sqrt{ c_{i,p} \Tr\big(f_i(\mat{c}_2) f_i(\mat{c}_2)\big)^{p}} \\
        &\leq \sqrt{ \sum_{i,p = 0}^{\infty} c_{i,p} \Tr\big(f_i(\mat{c}_1) f_i(\mat{c}_1)\big)^{p} } \sqrt{ \sum_{j,q = 0}^{\infty} c_{j,q} \Tr\big(f_j(\mat{c}_2) f_j(\mat{c}_2)\big)^{q} },
    \end{aligned}
    \end{equation}
    thanks to two applications of the Cauchy-Schwarz inequality.
    The two terms in the rightmost product are finite by assumption.
    By Lemma~\ref{lem:combining_kernels} and Lemma~\ref{lem:matrix_function_kernel} it follows that $\tilde{k}$ defined by Eq.~\eqref{eqn:main_kernel_fun} is a positive-definite kernel function on $D(\mat{s})$.
    It is then easy to see that Eq.~\eqref{eqn:main_kernel_fun} satisfies the positive-definiteness condition in Eq.~\eqref{eqn:PD_cond}, and is therefore a positive-definite kernel on $\mcal{F}$.
    Since we have
    \begin{equation}
        \mcal{E}(\vect{q}) 
        = \frac{1}{2} \int_{\Omega} \left[ \big\vert \vect{u}(\vect{x}) \big\vert^2 + \theta h(\mat{c}) \right] \td \vect{x},
        \label{eqn:shifted_energy}
    \end{equation}
    where $h(\mat{c}) := -\Wei \cdot \Tr\big(\mat{s}(\mat{c})\big) + c = \tilde{k}(\mat{c}, \mat{c})$,
    it follows that $\mcal{E}(\vect{q}) = k(\vect{q}, \vect{q})$.

    Suppose that there is a coefficient $c_{i,p} > 0$ with $p \geq 1$, $f_i$ injective, and $p$ odd or $f_i$ nonnegative.
    We first show that $\tilde{k}$ is an injective kernel function on $D(\mat{s})$.
    The case when $p$ is odd follows immediately from Lemma~\ref{lem:combining_injective_kernels}(i, ii), and Lemma~\ref{lem:matrix_function_kernel}.
    In the case that $f_i$ is nonnegative, every $f_i(\mat{c})$ is a positive semi-definite matrix and it follows that the kernel function
    \begin{equation}
        (\mat{c}_1, \mat{c}_2) \mapsto \Tr\big[ f_i(\mat{c}_1) f_i(\mat{c}_2) \big] 
        = \Tr\left[ \sqrt{f_i(\mat{c}_2)} f_i(\mat{c}_1) \sqrt{f_i(\mat{c}_2)} \right]
    \end{equation}
    takes only nonnegative values on $\mathbb{S}^d\times\mathbb{S}^d$.
    By Lemma~\ref{lem:combining_injective_kernels}(i, iii) and Lemma~\ref{lem:matrix_function_kernel} it follows that $\tilde{k}$ is an injective kernel function.

   Using injectivity of $\tilde{k}$, we prove that $k$ is an injective kernel function on $\mcal{F}$.
    We observe that
    \begin{equation}
        \vert \vect{u}_1 \vert^2
        -2 \vect{u}_1 \cdot \vect{u}_2
        + \vert \vect{u}_2 \vert^2 \geq 0
    \end{equation}
    by the Cauchy-Schwarz inequality for the dot product on $\R^d$ and
    \begin{equation}
        \tilde{k}(\mat{c}_1, \mat{c}_1)
        -2 \tilde{k}(\mat{c}_1, \mat{c}_2)
        + \tilde{k}(\mat{c}_2, \mat{c}_2) \geq 0
    \end{equation}
    by the Cauchy-Schwarz inequality for the positive-definite kernel $\tilde{k}$.
    Thus, if
    \begin{equation}
        k(\vect{q}_1, \vect{q}_1) - 2 k(\vect{q}_1, \vect{q}_2) + k(\vect{q}_2, \vect{q}_2) = 0,
    \end{equation}
    then for almost every $\vect{x}\in\Omega$ we have
    \begin{equation}
        \big\vert \vect{u}_1(\vect{x}) \big\vert^2
        -2 \vect{u}_1(\vect{x}) \cdot \vect{u}_2(\vect{x})
        + \big\vert \vect{u}_2(\vect{x}) \big\vert^2 = 0,
    \end{equation}
    which implies that $\vect{u}_1(\vect{x}) = \vect{u}_2(\vect{x})$,
    and
    \begin{equation}
        \tilde{k}\big(\mat{c}_1(\vect{x}), \mat{c}_1(\vect{x})\big)
        -2 \tilde{k}\big(\mat{c}_1(\vect{x}), \mat{c}_2(\vect{x})\big)
        + \tilde{k}\big(\mat{c}_2(\vect{x}), \mat{c}_2(\vect{x})\big) = 0,
    \end{equation}
    which implies that $\mat{c}_1(\vect{x}) = \mat{c}_2(\vect{x})$.
    Therefore, $\vect{q}_1 = \vect{q}_2$ in $\mcal{F}$, proving that $k$ is an injective kernel function.
\end{proof}
\begin{revtwobox}
    
\begin{proof}[Proof of Proposition~\ref{prop:product_kernel_injectivity}]
    Suppose that
    \begin{equation}\label{product_injectivity_eqn1}
        k\tilde{k}(\vect{q}_1, \vect{q}_1) - 2 k\tilde{k}(\vect{q}_1, \vect{q}_2) + k\tilde{k}(\vect{q}_2, \vect{q}_2) = 0.
    \end{equation}
    Then by the assumed property of $\tilde{k}$, the arithmetic and geometric means (AM-GM) inequality, and the Cauchy-Schwarz inequality, we obtain
    \begin{equation}
        k(\vect{q}_1, \vect{q}_2)\tilde{k}(\vect{q}_1, \vect{q}_2)
        = \frac{1}{2}\left( k(\vect{q}_1, \vect{q}_1) + k(\vect{q}_2, \vect{q}_2) \right)
        \geq \sqrt{k(\vect{q}_1, \vect{q}_1) k(\vect{q}_2, \vect{q}_2)}
        \geq k(\vect{q}_1, \vect{q}_2).
    \end{equation}
    This means that $\tilde{k}(\vect{q}_1, \vect{q}_2) \geq 1$.
    By Cauchy-Schwarz and the assumed property of $\tilde{k}$, we have
    \begin{equation}
        1 \leq \tilde{k}(\vect{q}_1, \vect{q}_2)
        \leq \sqrt{\tilde{k}(\vect{q}_1, \vect{q}_1) \tilde{k}(\vect{q}_2, \vect{q}_2)} \leq 1,
    \end{equation}
    meaning that $\tilde{k}(\vect{q}_1, \vect{q}_2) = 1$.
    Substituting this into \eqref{product_injectivity_eqn1} and using the assumed property of $\tilde{k}$ yields
    \begin{equation}
        k(\vect{q}_1, \vect{q}_1) - 2 k(\vect{q}_1, \vect{q}_2) + k(\vect{q}_2, \vect{q}_2) = 0,
    \end{equation}
    which by injectivity of $k$, means that $\vect{q}_1 = \vect{q}_2$.
    This completes the proof.
\end{proof}
\end{revtwobox}

\section{Proof of Theorem~\ref{thm:completeness_of_metric}}
\label{app:completeness_proof}

Our proof relies on the following technical lemma.
\begin{lemma}
    \label{lem:kernel_power_inequality}
    Let $x, y$ be vectors in a Hilbert space.
    If $p \geq 1$ is odd or $\langle x, y \rangle \geq 0$, then
    \begin{equation}
        \langle x, x \rangle^p - 2 \langle x, y \rangle^p + \langle y, y \rangle^p
        \geq \frac{1}{2^{2(p-1)}} \| x - y \|^{2p}.
    \end{equation}
\end{lemma}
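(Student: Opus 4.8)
The plan is to reduce the statement to a one-variable convexity inequality. Set $a=\langle x,x\rangle\ge 0$, $b=\langle y,y\rangle\ge 0$, $c=\langle x,y\rangle$ and $m=\tfrac12(a+b)\ge 0$. Expanding the norm gives $\|x-y\|^2=a-2c+b=2(m-c)$, so in particular $m-c=\tfrac12\|x-y\|^2\ge 0$ and $\|x-y\|^{2p}=2^p(m-c)^p$. Since $2^{-2(p-1)}\,2^p=2^{2-p}$, the inequality to be proved is equivalent to
\[
  a^p+b^p-2c^p \;\ge\; 2^{2-p}\,(m-c)^p .
\]

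The first step is to eliminate the dependence on the difference $a-b$. Because $t\mapsto t^p$ is convex on $[0,\infty)$ (here $p\ge 1$) and $a,b\ge 0$ have mean $m$, Jensen's inequality gives $a^p+b^p\ge 2m^p$. Hence the left-hand side above is at least $2(m^p-c^p)$, and it suffices to prove
\[
  m^p-c^p \;\ge\; 2^{1-p}(m-c)^p .
\]

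The second step splits on the sign of $c$, which is exactly where the hypothesis ``$p$ odd or $\langle x,y\rangle\ge 0$'' enters. If $c\ge 0$, then $m-c\ge 0$ and $c\ge 0$ have sum $m$, so superadditivity of $r\mapsto r^p$ on $[0,\infty)$ (a consequence of $\lambda^p\le\lambda$ for $\lambda\in[0,1]$) yields $m^p=\big((m-c)+c\big)^p\ge(m-c)^p+c^p$, i.e.\ $m^p-c^p\ge(m-c)^p\ge 2^{1-p}(m-c)^p$ because $2^{1-p}\le 1$. If instead $c<0$, then by hypothesis $p$ is odd, so $-c^p=|c|^p$ and $m-c=m+|c|$ with $m,|c|\ge 0$ having mean $\tfrac12(m+|c|)$; applying convexity of $r\mapsto r^p$ on $[0,\infty)$ once more gives $m^p+|c|^p\ge 2\big(\tfrac12(m+|c|)\big)^p=2^{1-p}(m-c)^p$, i.e.\ $m^p-c^p\ge 2^{1-p}(m-c)^p$. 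In either case, combining with the first step produces the left-hand side $\ge 2\cdot 2^{1-p}(m-c)^p=2^{2-p}(m-c)^p$, which is the claim.

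The only genuine subtlety is the middle term $-2\langle x,y\rangle^p$: its sign cannot be controlled in general, and the two situations allowed by the hypothesis are precisely the ones in which this term can be absorbed by convexity (when $\langle x,y\rangle\ge 0$, via superadditivity; when $p$ is odd, by turning $-c^p$ into $|c|^p$). I expect this case distinction to be the main obstacle; the rest is elementary convexity and superadditivity of $t\mapsto t^p$ on $[0,\infty)$ together with the identity $\|x-y\|^2=2(m-c)$. Incidentally, the constant $2^{-2(p-1)}$ is sharp: there is equality when $p=1$, and, for odd $p\ge 3$, exactly when $x=-y$.
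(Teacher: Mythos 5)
Your proof is correct and follows essentially the same route as the paper's: both first apply Jensen's inequality to $t\mapsto t^p$ to replace $\langle x,x\rangle^p+\langle y,y\rangle^p$ by $2\left(\tfrac{\langle x,x\rangle+\langle y,y\rangle}{2}\right)^p$, and then dispose of the remaining one-variable inequality by splitting on the sign of $\langle x,y\rangle$ (using oddness of $p$ in the negative case). The only cosmetic difference is the finish: the paper minimizes $f(\alpha)=(1+\alpha)^p-\alpha^p$ by calculus to get the constant $2^{1-p}$, while you reach the same constant via superadditivity of $r\mapsto r^p$ and a second application of convexity.
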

\begin{proof}[Proof of Lemma~\ref{lem:kernel_power_inequality}]
    When $x = y$, the statement is vacuously true, so we assume that $x \neq y$ and denote
    \[
        \alpha = \frac{2\langle x, y \rangle}{\| x - y \|^2}.
    \]
    By Jensen's inequality (Theorem~3.3 in \cite{Rudin1987real} with $\mu$ being a sum of two Dirac measures) we have
    \begin{equation}
        \langle x, x \rangle^p - 2 \langle x, y \rangle^p + \langle y, y \rangle^p
        \geq 2\left[ \left( \frac{\langle x, x \rangle + \langle y, y \rangle}{2} \right)^p - \langle x, y \rangle^p \right]
        = 2\left[ \left( \frac{1}{2} \| x - y \|^2 + \langle x, y \rangle  \right)^p - \langle x, y \rangle^p \right].
    \end{equation}
    Dividing through by $2(\frac{1}{2}\| x - y\|^2)^p$ gives
    \begin{equation}
        \frac{\langle x, x \rangle^p - 2 \langle x, y \rangle^p + \langle y, y \rangle^p}{2^{1-p}\| x - y\|^{2p}}
        \geq (1 + \alpha)^p - \alpha^p =: f(\alpha),
    \end{equation}
    and so it remains to lower bound $f(\alpha)$ by a positive constant, specifically $2^{1-p}$.
    When $\langle x, y \rangle \geq 0$, we have $\alpha \geq 0$, and it is easy to see that $f(\alpha ) \geq 1 \geq 2^{1-p}$.
    
    Now we assume that $p \geq 1$ is odd.
    Differentiating $f(\alpha)$, we find
    \begin{equation}
        f'(\alpha) 
        = p (1 + \alpha)^{p-1} - p \alpha^{p-1}
        = p | 1 + \alpha |^{p-1} - p | \alpha |^{p-1}
    \end{equation}
    since $p-1$ is even.
    When $\alpha \geq -\frac{1}{2}$, we evidently have $f'(\alpha) \geq 0$.
    Likewise, when $\alpha \leq -\frac{1}{2}$ we have $f'(\alpha) \leq 0$.
    Therefore, for every real $\alpha$ we have
    \begin{equation}
        f(\alpha) \geq f\left(-\tfrac{1}{2}\right) = 2(\tfrac{1}{2})^p = 2^{1-p},
    \end{equation}
    which completes the proof of the lemma.
\end{proof}

With this lemma in hand, we are ready to prove the theorem.
\begin{proof}[Proof of Theorem~\ref{thm:completeness_of_metric}]
Thanks to Theorem~\ref{thm:viscoelastic_kernels}, $\mcal{F}$ is a metric space with metric $d_{\mcal{E}}$ defined by Eq.~\eqref{eqn:kernel_metric}.
Let $\{ \vect{q}_n = (\vect{u}_n, \mat{c}_n) \}_{n=1}^{\infty} \in \mcal{F}$ be a Cauchy sequence. 
Since $L^2(\Omega)$ is complete (see Theorem~3.11 in \cite{Rudin1987real}) and
\begin{equation}
    d_{\mcal{E}}(\vect{q}_m, \vect{q}_n) \geq \frac{1}{\sqrt{2}} \| \vect{u}_m - \vect{u}_n \|_{L^2(\Omega)},
\end{equation}
it follows that $\vect{u}_n \to \vect{u}$ in $L^2(\Omega)$ for a unique velocity field $\vect{u}$.

Considering the term $c_{i,p} > 0$ and letting $\theta = (1-\beta)/(\Rey\Wei)$, we have
\begin{multline}
    \left(\frac{2}{c_{i,p} \theta}\right) d_{\mcal{E}}(\vect{q}_m, \vect{q}_n)^2
    \geq \int_{\Omega} \left\{ \left[\Tr\left( f_i(\mat{c}_m)^2 \right)\right]^p - 2 \left[\Tr\left( f_i(\mat{c}_m) f_i(\mat{c}_n) \right)\right]^p + \left[\Tr\left( f_i(\mat{c}_n)^2 \right)\right]^p \right\} \td \vect{x} \\
    \geq \frac{1}{2^{2(p-1)}} \int_{\Omega} \| f_i(\mat{c}_m) - f_i(\mat{c}_n) \|_F^{2p} \td \vect{x},
\end{multline}
thanks to Lemma~\ref{lem:kernel_power_inequality}.
It follows from the completeness theorem for $L^{2p}(\Omega)$, specifically Theorems~3.11~and~3.12 in \cite{Rudin1987real}, that $f_i(\mat{c}_n)$ converges in $L^{2p}(\Omega)$ to a limit $\mat{f}$ and that there is a subsequence $\{f_i(\mat{c}_{n_k})\}_{k=1}^{\infty}$ converging pointwise almost everywhere to $\mat{f}$ in $\Omega$.
Since the function $f_i$ is injective, it follows that the conformation tensor field $\mat{c} = f_i^{-1}(\mat{f}): \Omega \to \mathbb{S}_{+}^d$ satisfies $f_i(\mat{c}_n) \to f_i(\mat{c})$ in $L^{2p}(\Omega)$ and $\mat{c}_{n_k}(\vect{x}) \to \mat{c}(\vect{x})$ for almost every $\vect{x}\in\Omega$.

It remains to show that $\vect{q} = (\vect{u}, \mat{c})$ satisfies $\mcal{E}(\vect{q}) < \infty$ and that $d_{\mcal{E}}(\vect{q}_n, \vect{q}) \to 0$.
Both of these are accomplished by means of Lebesgue's dominated convergence theorem (Theorem~1.34 in \cite{Rudin1987real}).
We pass to a further subsequence, still denoted with indices $n_k$, such that
\begin{equation}
    d_{\mcal{E}}(\vect{q}_{n_k}, \vect{q}_{n_{k+1}}) < 2^{-k}.
\end{equation}
We let $\tilde{\Phi}: D(\mat{s}) \to \tilde{\mcal{H}}$ denote the feature map associated with the reproducing kernel $\tilde{k}:D(\mat{s})\times D(\mat{s}) \to \R$.
We define a function $G : \Omega \to \R$ by
\begin{equation}
    G(\vect{x}) = 
   \big\| \tilde{\Phi}(\mat{c}_{n_1}(\vect{x})) \big\|_{\tilde{\mcal{H}}}
    + \sum_{j=2}^{\infty} \big\| \tilde{\Phi}(\mat{c}_{n_{j}}(\vect{x})) - \tilde{\Phi}(\mat{c}_{n_{j-1}}(\vect{x})) \big\|_{\tilde{\mcal{H}}}.
\end{equation}
We observe that $G \in L^{2}(\Omega)$ because
\begin{equation}
    \sqrt{\frac{\theta}{2}} \| G \|_{L^2(\Omega)}
    \leq \sqrt{\mcal{E}(\vect{q}_{n_1})}
    + \sum_{j=2}^{\infty} d_{\mcal{E}}(\vect{q}_{n_j}, \vect{q}_{n_{j-1}})
    < \infty.
\end{equation}
Moreover, by construction we have 
\begin{equation}
    \tilde{k}\big( \mat{c}_{n_k}(\vect{x}), \mat{c}_{n_k}(\vect{x}) \big) =
    \big\| \tilde{\Phi}(\mat{c}_{n_k}(\vect{x})) \big\|_{\tilde{\mcal{H}}}^2 \leq G(\vect{x})^2
\end{equation}
for every $k \geq 1$.
Therefore, the dominated convergence theorem (Theorem~1.34 in \cite{Rudin1987real}) yields
\begin{equation}
    \lim_{k\to\infty} \int_{\Omega} \tilde{k}\big( \mat{c}_{n_k}(\vect{x}), \mat{c}_{n_k}(\vect{x}) \big) \td \vect{x}
    = \int_{\Omega} \tilde{k}\big( \mat{c}(\vect{x}), \mat{c}(\vect{x}) \big) \td \vect{x}
    \leq \| G \|_{L^2(\Omega)}^2 < \infty.
\end{equation}
Combined with the $L^2(\Omega)$ convergence of $\vect{u}_{n_k}$ to $\vect{u}$, this gives
\begin{equation}
    \lim_{k\to\infty} \mcal{E}(\vect{q}_{n_k}) = \mcal{E}(\vect{q}) < \infty,
\end{equation}
meaning that $\vect{q} \in \mcal{F}$.
Next we observe that for almost every $\vect{x}\in\Omega$,
\begin{equation}
    \left| \tilde{k}(\mat{c}_{n_k}(\vect{x}), \mat{c}_{n_k}(\vect{x})) - 2 \tilde{k}(\mat{c}_{n_k}(\vect{x}), \mat{c}(\vect{x})) + \tilde{k}(\mat{c}(\vect{x}), \mat{c}(\vect{x})) \right| \\
    \leq 4 G(\vect{x})^2
\end{equation}
by the Cauchy-Schwarz inequality in $\tilde{\mcal{H}}$.
Therefore, by another application of the dominated convergence theorem we obtain
\begin{equation}
    d_{\mcal{E}}(\vect{q}_{n_k}, \vect{q}) \to 0.
\end{equation}
Since $\vect{q}_{n}$ is Cauchy, we obtain $d_{\mcal{E}}(\vect{q}_{n}, \vect{q}) \to 0$, proving that $\mcal{F}$ is a complete metric space.

Finally, if $\Phi(\vect{q}_n) \to \phi$ in $\mcal{H}$ then $\{\vect{q}_n\}_{n=1}^{\infty}$ is a Cauchy sequence in $\mcal{F}$.
Since $\mcal{F}$ is a complete metric space, there exists $\vect{q}\in\mcal{F}$ satisfying
\begin{equation}
    d_{\mcal{E}}(\vect{q}_n, \vect{q}) 
    = \| \Phi(\vect{q}_n) - \Phi(\vect{q}) \|_{\mcal{H}} \to 0.
\end{equation}
Since limits in a Hilbert space are unique, we must have $\phi = \Phi(\vect{q})$, proving that $\Phi(\mcal{F})$ is closed in $\mcal{H}$.
\end{proof}

\section{Proof of Theorem~\ref{thm:reconstruction} and Proposition~\ref{prop:KPCA_field_reconstruction}}
\label{app:reconstruction_proofs}
First, we establish a lemma relating powers of the trace to the trace of iterated Kronecker products.
\begin{lemma}
    \label{lem:trace_Kronecker_power}
    For every $\mat{A}, \mat{B} \in \mathbb{R}^{d\times d}$ and integer $p \geq 0$ we have
    \begin{equation}
        \left[ \Tr(\mat{A} \mat{B}) \right]^p = \Tr\left[ \mat{A}^{\otimes p} \mat{B}^{\otimes p} \right],
    \end{equation}
    with the convention that $\mat{A}^{\otimes 0} = 1$.
\end{lemma}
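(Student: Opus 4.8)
The plan is to prove the identity by induction on $p$, leaning on two standard properties of the Kronecker product: the mixed product property $(\mat{X}\otimes\mat{Y})(\mat{Z}\otimes\mat{W}) = (\mat{X}\mat{Z})\otimes(\mat{Y}\mat{W})$, valid whenever the indicated matrix products are defined, and trace multiplicativity $\Tr(\mat{X}\otimes\mat{Y}) = \Tr(\mat{X})\Tr(\mat{Y})$.

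For the base case $p = 0$, both sides equal $1$ by the stated convention $\mat{A}^{\otimes 0} = 1$, interpreted as the $1\times 1$ identity, so that $\Tr[\mat{A}^{\otimes 0}\mat{B}^{\otimes 0}] = 1 = [\Tr(\mat{A}\mat{B})]^0$. For the inductive step, I would write $\mat{A}^{\otimes(p+1)} = \mat{A}\otimes\mat{A}^{\otimes p}$ and likewise for $\mat{B}$, apply the mixed product property to obtain
$$\mat{A}^{\otimes(p+1)}\mat{B}^{\otimes(p+1)} = (\mat{A}\mat{B})\otimes\bigl(\mat{A}^{\otimes p}\mat{B}^{\otimes p}\bigr),$$
and then take the trace, using multiplicativity followed by the induction hypothesis:
$$\Tr\bigl[\mat{A}^{\otimes(p+1)}\mat{B}^{\otimes(p+1)}\bigr] = \Tr(\mat{A}\mat{B})\,\Tr\bigl[\mat{A}^{\otimes p}\mat{B}^{\otimes p}\bigr] = [\Tr(\mat{A}\mat{B})]^{p+1}.$$
This closes the induction.

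There is no genuine obstacle here; the argument is routine. The only point requiring a modicum of care is internal consistency in the inductive step — peeling off the first Kronecker factor as $\mat{A}\otimes\mat{A}^{\otimes p}$ on both $\mat{A}^{\otimes(p+1)}$ and $\mat{B}^{\otimes(p+1)}$, so that the factors line up correctly when the mixed product property is applied — together with checking that the $p=0$ convention is consistent with both sides. An equivalent route is to first establish the iterated identities $\mat{A}^{\otimes p}\mat{B}^{\otimes p} = (\mat{A}\mat{B})^{\otimes p}$ and $\Tr[(\mat{A}\mat{B})^{\otimes p}] = [\Tr(\mat{A}\mat{B})]^p$ separately and compose them, but each of these is itself an induction of the same flavor, so nothing is gained by splitting the argument.
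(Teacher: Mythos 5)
Your proposal is correct and is essentially the paper's own proof: both argue by induction on $p$, peeling off one Kronecker factor, applying the mixed product property, and using trace multiplicativity together with the induction hypothesis (the paper merely runs the chain of equalities in the opposite direction, starting from $[\Tr(\mat{A}\mat{B})]^{p+1}$). No further comment is needed.
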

\begin{proof}[Proof of Lemma~\ref{lem:trace_Kronecker_power}]
    The cases $p=0$ and $p=1$ are trivial.
    Suppose that the result holds for a given $p > 1$.
    Then we have
    \begin{equation}
    \begin{aligned}
        \left[ \Tr(\mat{A} \mat{B}) \right]^{p+1} 
        &= \Tr(\mat{A} \mat{B}) \Tr\left[ \mat{A}^{\otimes p} \mat{B}^{\otimes p} \right] \\
        &= \Tr\left[ (\mat{A} \mat{B}) \otimes  \big( \mat{A}^{\otimes p} \mat{B}^{\otimes p} \big) \right] \\
        &= \Tr\left[ \big( \mat{A} \otimes \mat{A}^{\otimes p} \big) \big( \mat{B} \otimes \mat{B}^{\otimes p}\big) \right]
        = \Tr\left[ \mat{A}^{\otimes (p+1)} \mat{B}^{\otimes (p+1)} \right].
    \end{aligned}
    \end{equation}
    The first line holds by the induction hypothesis, the second line is due to the trace property of the Kronecker product, and the third line follows from the mixed product property of the Kronecker product \cite{Horn1991topics, Taboga2021properties}.
    Therefore the stated result holds for all integers $p \geq 0$ by induction on $p$. 
\end{proof}

\begin{proof}[Proof of Theorem~\ref{thm:reconstruction}]
    With a countable set of indices
    \begin{equation}
        \mathbb{I} = \{ (i, p, j, k) \in \mathbb{Z}^4 \ : \ i \geq 0, \ p\geq 0, \ 1\leq j,k \leq d^p \}
    \end{equation}
    we consider the space $L^2(\Omega \times \mathbb{I})$ with inner product
    \begin{equation}
        \langle f, \ g \rangle_{L^2(\Omega \times \mathbb{I})} := \sum_{(i,p,j,k)\in\mathbb{I}} \int_{\Omega} f(\vect{x}, i, p, j, k) g(\vect{x}, i, p, j, k) \td \vect{x}.
    \end{equation}
    Note that the sum and the integral can be exchanged thanks to Fubini's theorem (Theorem~8.8 in \cite{Rudin1987real}).
    Defining $\Psi: \mcal{F} \to H := L^2(\Omega;\ \R^3) \times L^2(\Omega \times \mathbb{I})$ by
    \begin{equation}
        \Psi(\vect{q}) = \frac{1}{\sqrt{2}}(\vect{u}, \ \sqrt{\theta} \tilde{\Psi}(\vect{q})),
        \qquad
        \tilde{\Psi}(\vect{q})(\vect{x}, i, p, j, k) = \sqrt{c_{i,p} } \left[ f_i(\mat{c}(x))^{\otimes p} \right]_{j,k}
    \end{equation}
    and applying Lemma~\ref{lem:trace_Kronecker_power}, we obtain
    \begin{equation}
        k(\vect{q}_1, \vect{q}_2) 
        = \langle \Psi(\vect{q}_1), \ \Psi(\vect{q}_2) \rangle_{H} 
        = \frac{1}{2}\int_{\Omega}\left\{ \vect{u}_1(\vect{x})^T \vect{u}_2(\vect{x}) + \theta \sum_{i=0}^{\infty} \sum_{p=0}^{\infty} c_{i,p} \Tr\left[ f_i(\mat{c}_1(\vect{x}))^{\otimes p} f_i(\mat{c}_2(\vect{x}))^{\otimes p} \right] \right\} \td \vect{x}
    \end{equation}
    for every $\vect{q}_1, \vect{q}_2 \in \mcal{F}$.
    Therefore, by the Moore-Aronszajn theorem (Theorem~\ref{thm:Moore-Aronszajn}), there is a unique to a linear isometry $U: \mcal{H} \to H$ satisfying $U \Phi(\vect{q}) = \Psi(\vect{q})$ for every $\vect{q}\in\mcal{F}$. 
    
    Incidentially, the above shows that the metric space $(\mcal{F}, d_{\mcal{E}})$ is separable.
    This is because $\mcal{F}$ is isometric (via the map $U \Phi)$ to a subset of $H$, which is a separable Hilbert space.

    For every $f \in L^2(\Omega \times \mathbb{I})$, let $\mat{f}_{i,p}:\Omega \to \R^{d^p \times d^p}$ be defined by $[\mat{f}_{i,p}(\vect{x})]_{j,k} = f(\vect{x}, i, p, j, k)$.
    The linear map $T: H \to L^2(\Omega)$ defined by
    \begin{equation}
        T(\vect{u} , f) = \sqrt{2} \vect{a}_0^T \vect{u} + \sqrt{\frac{2}{\theta}} \sum_{i=0}^{\infty} \sum_{p=0}^{\infty} \Tr\left( \mat{A}_{i,p}^T \mat{f}_{i,p} \right)
    \end{equation}
    clearly satisfies
    \begin{equation}
        T U \Phi(\vect{q}) = T \Psi(\vect{q}) = \vect{a}_0^T \vect{u} + \sum_{i=0}^{\infty} \sum_{p=0}^{\infty} \sqrt{c_{i,p}} \Tr\left[ \mat{A}_{i,p}^T f_i(\mat{c})^{\otimes p} \right] = \psi \circ \vect{q}
    \end{equation}
    for every $\vect{q} = (\vect{u}, \mat{c}) \in \mcal{F}$.
    The operator $T$ is bounded because
    \begin{equation}
    \begin{aligned}
        \| T (\vect{u}, f) \|_{L^2(\Omega)}^2 
        &= \int_{\Omega} \left\vert \sqrt{2} \vect{a}_0^T \vect{u}(\vect{x}) + \sqrt{\frac{2}{\theta}} \sum_{i=0}^{\infty} \sum_{p=0}^{\infty} \Tr\left[ \mat{A}_{i,p}^T \mat{f}_{i,p}(\vect{x}) \right] \right\vert^2 \td \vect{x} \\
        &\leq \int_{\Omega} \left\vert \sqrt{2} \|\vect{a}_0 \|_2 \| \vect{u}(\vect{x}) \|_2 + \sqrt{\frac{2}{\theta}} \sum_{i=0}^{\infty} \sum_{p=0}^{\infty} \| \mat{A}_{i,p} \|_F \| \mat{f}_{i,p}(\vect{x}) \|_F \right\vert^2 \td \vect{x} \\
        &\leq \underbrace{\left( 2 \| \vect{a}_0 \|_2^2 + \frac{2}{\theta} \sum_{i=0}^{\infty} \sum_{p=0}^{\infty} \| \mat{A}_{i,p} \|_F^2 \right)}_{A^2 < \infty \mbox{ by assumption}} \underbrace{\int_{\Omega} \left( \| \vect{u}(\vect{x}) \|_2^2 + \sum_{i=0}^{\infty} \sum_{p=0}^{\infty} \| \mat{f}_{i,p}(\vect{x}) \|_F^2  \right) \td \vect{x}}_{\| (\vect{u}, f) \|_{H}^2}
    \end{aligned}
    \end{equation}
    thanks to two applications of the Cauchy-Schwarz inequality.
    It follows that the linear operator $R_{\psi} := T U$ is bounded, with operator norm $\| R_{\psi} \| \leq A$.
    Since the span of vectors $\{ \Phi(\vect{q}) \}_{\vect{q}\in\mcal{F}}$ is dense in $\mcal{H}$ by the by the Moore-Aronszajn theorem (Theorem~\ref{thm:Moore-Aronszajn}), the bounded linear operator $R_{\psi}$ is uniquely defined by the relation $R_{\psi} \Phi(\vect{q}) = \psi \circ \vect{q}$ for all $\vect{q}\in\mcal{F}$.
\end{proof}

\begin{proof}[Proof of Proposition~\ref{prop:KPCA_field_reconstruction}]
    Let $P_r = U_r U_r^*$ denote the orthogonal projection onto the span of $u_1, \ldots, u_r$ in $\mcal{H}$.
    Thanks to Corollary~\ref{cor:reconstruction}, we have
    \begin{equation}
    \label{eqn:single_state_KPCA_rec_derivation}
    \begin{aligned}
        \left\| \big( \vect{u},\ f_i(\mat{c})^{\otimes p} \big) - R_{i,p} U_r \vect{z}_r(\vect{q}) \right\|_{L^2(\Omega)}^2 
        &= \| R_{i,p} \Phi(\vect{q}) - R_{i,p} P_r \Phi(\vect{q}) \|_{L^2(\Omega)}^2 \\
        &\leq \| R_{i,p} \|^2 \| (I - P_r) \Phi(\vect{q}) \|_{\mcal{H}}^2 \\
        &= \| R_{i,p} \|^2 \left( \| \Phi(\vect{q}) \|_{\mcal{H}}^2 - \| P_r \Phi(\vect{q}) \|_{\mcal{H}}^2 \right).
    \end{aligned}
    \end{equation}
    Since $\| \Phi(\vect{q}) \|_{\mcal{H}}^2 = \mcal{E}(\vect{q})$ and $ \| P_r \Phi(\vect{q}) \|_{\mcal{H}}^2 = \| U_r^* \Phi(\vect{q}) \|_2^2 = \| \vect{z}_r(\vect{q}) \|_2^2, $
    we obtain Eq.~\eqref{eqn:single_state_KPCA_rec_bound} using the bound on $\| R_{i,p} \|$ stated in Corollary~\ref{cor:reconstruction}.
    Integrating, applying Parseval's theorem (Theorem~II.6 in \cite{Reed1980functional}), and using Fubini's theorem (Theorem~8.8 in \cite{Rudin1987real}) to exchange summation and integration yields
    \begin{equation}
    \begin{aligned}
        \int_{\mcal{F}} \| (I - P_r) \Phi(\vect{q}) \|_{\mcal{H}}^2 \td\mu(\vect{q})
        &= \int_{\mcal{F}} \sum_{j=r+1}^{\infty} \left|\langle u_j, \ \Phi(\vect{q})\rangle_{\mcal{H}}\right|^2 \td\mu(\vect{q}) \\
        &= \sum_{j=r+1}^{\infty} \int_{\mcal{F}} \langle u_j, \ \Phi(\vect{q})\rangle_{\mcal{H}} \langle \Phi(\vect{q}), u_j \rangle_{\mcal{H}}  \td\mu(\vect{q}) \\
        &= \sum_{j=r+1}^{\infty} \langle u_j, \ C_{\mu} u_j \rangle_{\mcal{H}} 
        = \sum_{j=r+1}^{\infty} \sigma_j^2.
    \end{aligned}
    \end{equation}
    Combining this with Eq.~\eqref{eqn:single_state_KPCA_rec_derivation} and the bound on $\| R_{i,p} \|$ in Corollary~\ref{cor:reconstruction} yeilds Eq.~\eqref{eqn:average_KPCA_rec_bound}, completing the proof.
\end{proof}

\end{document}